\providecommand{\tabularnewline}{\\}
\theoremstyle{plain}
\theoremstyle{plain}
\providecommand{\corollaryname}{Corollary}
\providecommand{\theoremname}{Theorem}
\theoremstyle{remark}
\newcounter{rem}
\newtheorem{remark}[rem]{\bf Remark}
\theoremstyle{plain}
\newcommand{\suppress}[1]{}
\newtheorem{theorem}{Theorem}
\newtheorem{lemma}{Lemma}
\DeclareSymbolFont{lettersA}{U}{txmia}{m}{it}
 \DeclareMathSymbol{\bbr}{\mathord}{lettersA}{"92}
 \DeclareMathSymbol{\bbc}{\mathord}{lettersA}{"83}
 \DeclareMathSymbol{\bbn}{\mathord}{lettersA}{"8E}
 \DeclareMathSymbol{\bbg}{\mathord}{lettersA}{"87}
\DeclareMathAlphabet{\mathscr}{OT1}{pzc}{m}{it}
\newcommand{\proofof}[1]{\phantom{aaaaaaaa}{\em Proof of #1:}}
\def\QEDmark{\blacksquare}
\def\endproof{\hfill\QEDmark}
\newcommand{\bbR}{\mathbb{R}}
\newcommand{\bbF}{\mathbb{F}}
\newcommand{\graph}{\mathcal{G}}
\newcommand{\preci}{P}
\newcommand{\bA}{{{A}}}
\newcommand{\samp}{{ n}}
\newcommand{\spar}{{ k}}
\newcommand{\degr}{{ 3}}
\newcommand{\dgr}{{ d}}
\newcommand{\meas}{{m}}
\newcommand{\cnst}{{ c}}
\newcommand{\bet}{{\beta}}
\newcommand{\cO}{{\mathcal{O}}}
\newcommand{\bx}{{\mathbf{x}}}
\newcommand{\by}{{\mathbf{y}}}
\newcommand{\bz}{{\mathbf{z}}}
\newcommand{\be}{{\mathbf{e}}}
\newcommand{\cSx}{{\mathcal{S}}(\bx)}
\newcommand{\cS}{{\mathcal{S}}}
\newcommand{\cSpx}{{\mathcal{S}}'(\bx)}
\newcommand{\steps}{{\Gamma}}
\newcommand{\step}{{\gamma}}
\newcommand{\digit}{{ g}}
\newcommand{\cD}{{\mathcal{D}}}
\newcommand{\iter}{{t}}
\newcommand{\ofiter}{{(\iter)}}
\newcommand{\ofiterplusone}{{(\iter+1)}}
\newcommand{\xj}{{j}}
\newcommand{\yi}{{i}}
\newcommand{\el}{r}
\newcommand{\eL}{R}
\newcommand{\set}[1]{\left\{#1\right\}}
\newcommand{\cE}{\mathcal{E}}
\begin{document}

\title{SHO-FA: Robust compressive sensing with order-optimal complexity,
measurements, and bits}

\author{\IEEEauthorblockN{Mayank Bakshi, \and Sidharth Jaggi, \and Sheng
Cai, \and Minghua Chen}\\
 \texttt{\small \{mayank,jaggi,cs010,minghua\}@ie.cuhk.edu.hk}{\small }\\
{\small{} \IEEEauthorblockA{The Chinese University of Hong Kong,
Hong Kong SAR China}}}

\maketitle
\global\long\def\hbx{{\hat{\bx}}}
 \global\long\def\bxs{{{\bx_{\spar}}^{\ast}}}
 \global\long\def\sz{{\color{red}{\sigma_{z}^{2}}}}
 \global\long\def\se{{\color{red}{\sigma_{e}^{2}}}}

\begin{abstract}
Suppose $\bx$ is any exactly $\spar$-sparse vector in $\bbr^{\samp}$.
We present a class of {}``sparse'' matrices $\bA$, and a corresponding
algorithm that we call SHO-FA (for Short and Fast%
\footnote{Also, SHO-FA sho good! In fact, it's all $\cO(\spar)$!%
}) that, with high probability over $\bA$, can reconstruct $\bx$
from $\bA\bx$. The SHO-FA algorithm is related to the Invertible
Bloom Lookup Tables (IBLTs) recently introduced by Goodrich \textit{et
al.}, with two important distinctions -- SHO-FA relies on linear measurements,
and is robust to noise. The SHO-FA algorithm is the first to simultaneously
have the following properties: (a) it requires only $\cO(\spar)$
measurements, (b) the bit-precision of each measurement and each arithmetic
operation is $\cO\left(\log(\samp)+\preci\right)$ (here $2^{-\preci}$
corresponds to the desired relative error in the reconstruction of
$\bx$), (c) the computational complexity of decoding is $\cO(\spar)$
arithmetic operations and that of encoding is $\cO(\samp)$
arithmetic operations, and (d) if the reconstruction goal is simply
to recover a single component of $\bx$ instead of all of $\bx$,
with significant probability over $\bA$ this can be done in constant time.
All constants above are independent of all problem parameters other
than the desired probability of success. For a wide range of parameters
these properties are information-theoretically order-optimal. In addition, our SHO-FA algorithm works over fairly general ensembles of ``sparse random matrices'', is robust to random noise, and (random) approximate sparsity for a large range of $\spar$. In particular, suppose the
measured vector equals $\bA(\bx+\bz)+\be$, where $\bz$ and $\be$
correspond respectively to the \textit{source tail} and \textit{measurement
noise}. Under reasonable statistical assumptions on $\bz$ and $\be$
our decoding algorithm reconstructs $\bx$ with an estimation error
of $\cO(||\bz||_{2}+||\be||_{2})$. The SHO-FA algorithm
works with high probability over $\bA$, $\bz$, and $\be$, and still
requires only $\cO(\spar)$ steps and $\cO(\spar)$ measurements over
$\cO(\log(\samp))$-bit numbers.  This is in contrast to most existing
algorithms which focus on the {}``worst-case'' $\bz$
model, where it is known $\Omega(\spar\log(\samp/\spar))$ measurements
over $\cO(\log(\samp))$-bit numbers are necessary. Our algorithm
has good empirical performance, as validated by simulations.\footnote{A preliminary version of this work was presented in~\cite{BakJCC:12}. In parallel and independently of this work, an algorithm with very similar design and performance was proposed and presented at the same venue in~\cite{PawR:12}.}
\end{abstract}

\section{Introduction}

\label{sec:intr} In recent years, spurred by the seminal work on
\textit{compressive sensing} of~\cite{CanRT:06,Don:06}, much attention
has focused on the problem of reconstructing a length-$\samp$ {}``compressible''
vector $\bx$ over $\bbr$ with fewer than $\samp$ linear measurements.
In particular, it is known (\textit{e.g.}~\cite{Can:08,BarDDW:08})
that with $\meas=\cO(\spar\log(\samp/\spar))$ linear measurements
one can computationally efficiently%
\footnote{The caveat is that the reconstruction techniques require one to solve
an LP. Though polynomial-time algorithms to solve LPs are known, they
are generally considered to be impractical for large problem instances.%
} obtain a vector $\hbx$ such that the \textit{reconstruction error}
$||\bx-\hbx||_{1}$ is $\cO(||\bx-\bxs||_{1})$,%
\footnote{\label{foot:recon}In fact this is the so-called ${\mathscr{l}}_{1}<C{\mathscr{l}}_{1}$
guarantee. One can also prove stronger ${\mathscr{l}}_{2}<C{\mathscr{l}}_{1}/\sqrt{\spar}$
reconstruction guarantees for algorithms with similar computational
performance, and it is known that a ${\mathscr{l}}_{2}<C{\mathscr{l}}_{2}$
reconstruction guarantee is not possible if the algorithm is required
to be zero-error~\cite{CohDD:09}, but is possible if some (small)
probability of error is allowed~\cite{GilLPS:10,PriW:11}.%
} where $\bxs$ is the best possible $\spar$-sparse approximation
to $\bx$ (specifically, the $\spar$ non-zero terms of $\bxs$ correspond
to the $\spar$ largest components of $\bx$ in magnitude, hence $\bx-\bxs$
corresponds to the {}``tail'' of $\bx$). A number of different
classes of algorithms are able to give such performance, such as those
based on ${\mathscr{l}}_{1}$-optimization (\textit{e.g.}~\cite{CanRT:06,Don:06}),
and those based on iterative {}``matching pursuit'' (\textit{e.g.}~\cite{TroG:07,DonTDS:12}).
Similar results, with an additional additive term in the reconstruction
error hold even if the linear measurements themselves also have noise
added to them (\textit{e.g.}~\cite{Can:08,BarDDW:08}). The fastest
of these algorithms use ideas from the theory of expander graphs,
and have running time $\cO(\samp\log(\samp/\spar))$~\cite{BerIR:08,BerI:09,GilI:10}.

The class of results summarized above are indeed very strong -- they
hold for \textit{all} $\bx$ vectors, including those with {}``worst-case
tails'', \textit{i.e.} even vectors where the components of $\bx$
smaller than the $\spar$ largest coefficients (which can be thought
of as {}``source tail'') are chosen in a maximally worst-case manner.
In fact~\cite{BaIPW:10} proves that to obtain a reconstruction error
that scales linearly with the ${\mathscr{l}}_{1}$-norm of the $\bz$,
the tail of $\bx$, requires $\Omega(\spar\log(\samp/\spar))$ linear
measurements.

\noindent \textbf{Number of measurements:} However, depending on the
application, such a lower bound based on {}``worst-case $\bz$''
may be unduly pessimistic. For instance, it is known that if $\bx$
is exactly $\spar$-sparse (has exactly exactly $\spar$ non-zero
components, and hence $\bz=\mathbf{0}$), then based on Reed-Solomon
codes~\cite{ReeS:60} one can efficiently reconstruct $\bx$ with
$\cO(\spar)$ noiseless measurements (\textit{e.g.}~\cite{ParH:08})
via algorithms with decoding time-complexity $\cO(\samp\log(\samp))$,
or via codes such as in~\cite{KudP:10,MitG:12} with $\cO(\spar)$
noiseless measurements with decoding time-complexity $\cO(\samp)$.%
\footnote{In general the linear systems produced by Reed-Solomon codes are ill-conditioned,
which causes problems for large $\samp$.%
} In the regime where $\spar=\theta(\samp)$~\cite{JafWHC:09} use
the {}``sparse-matrix'' techniques of~\cite{BerIR:08,BerI:09,GilI:10}
to demonstrate that $\cO(\spar)=\cO(\samp)$ measurements suffice
to reconstruct $\bx$.

\noindent \textbf{Noise:} Even if the source is not exactly $\spar$-sparse,
a spate of recent work has taken a more information-theoretic view
than the coding-theoretic/worst-case point-of-view espoused by much
of the compressive sensing work thus far. Specifically, suppose the
length-$\samp$ source vector is the sum of \textit{any} exactly $\spar$-sparse
vector $\bx$ and a {}``random'' source noise vector $\bz$ (and
possibly the linear measurement vector $\bA(\bx+\bz)$ also has a
{}``random'' measurement noise vector $\be$ added to it). Then
as long as the noise variances are not {}``too much larger''
than the signal power, the work of~\cite{AkcT:10} demonstrates that
$\cO(\spar)$ measurements suffice (though the proofs in~\cite{AkcT:10}
are information-theoretic and existential -- the corresponding {}``typical-set
decoding'' algorithms require time exponential in $\samp$). Indeed,
even the work of~\cite{BaIPW:10}, whose primary focus was to prove
that $\Omega(\spar\log(\samp/\spar))$ linear measurements are necessary
to reconstruct $\bx$ in the worst case, also notes as an aside that
if $\bx$ corresponds to an exactly sparse vector plus random noise,
then in fact $\cO(\spar)$ measurements suffice. %This encompasses a significant range of noise-levels -- indeed, in this work we focus on scenarios wherein the ${\mathscr{l}}_1$ norms of the noise vectors might be comparable to, or even larger than, the ${\mathscr{l}}_1$ norm of the signal.
The work in~\cite{WuV:10,WuV:11} examines this phenomenon information-theoretically
by drawing a nice connection with the \textit{R\'enyi information dimension
$\bar{d}(X)$} of the signal/noise. 
The heuristic algorithms in~\cite{KrzMSSZ:12} indicate that {\it approximate message passing}
algorithms achieve this performance computationally efficiently (in time $\cO(\samp^2)$), and~\cite{DonJM:11} prove this rigorously. Corresponding lower bounds showing $\Omega(\spar\log(\samp/\spar))$
samples are required in the higher noise regime are provided in~\cite{FleRG:09,Wai:09}.

\noindent \textbf{Number of measurement bits:} However, most of the
works above focus on minimizing the number of linear measurements
in $\bA\bx$, rather than the more information-theoretic view of trying
to minimize the number of bits in $\bA\bx$ over all measurements.
Some recent work attempts to fill this gap -- notably {}``Counting
Braids''~\cite{LuMPDK:08,YiMP:08} (this work uses {}``multi-layered
non-linear measurements''), and {}``one-bit compressive sensing''~\cite{PlaV:11,JacLBB:11}
(the corresponding decoding complexity is somewhat high (though still
polynomial-time) since it involves solving an LP).

\noindent \textbf{Decoding time-complexity:} The emphasis of the discussion
thus far has been on the number of linear measurements/bits required
to reconstruct $\bx$. The decoding algorithms in most of the works
above have decoding time-complexities%
\footnote{For ease of presentation, in accordance with common practice in the
literature, in this discussion we assume that the time-complexity
of performing a single arithmetic operation is constant. Explicitly
taking the complexity of performing finite-precision arithmetic into
account adds a multiplicative factor (corresponding to the precision
with which arithmetic operations are performed) in the time-complexity
of most of the works, including ours.%
} that scale at least linearly with $\samp$. In regimes where $\spar$
is significantly smaller than $\samp$, it is natural to wonder whether
one can do better. Indeed, algorithms based on iterative techniques
answer this in the affirmative. These include Chaining Pursuit~\cite{GilSTV06},
group-testing based algorithms~\cite{CorM:06}, and Sudocodes~\cite{SarBB:06}
-- each of these have decoding time-complexity that can be sub-linear
in $\samp$ (but at least $\cO(\spar\log(\spar)\log(\samp))$), but
each requires at least $\cO(\spar\log(\samp))$ linear measurements.

\noindent \textbf{Database query:} Finally, we consider a \textit{database
query} property that is not often of primary concern in the compressive
sensing literature. That is, suppose one is given a compressive sensing
algorithm that is capable of reconstructing $\bx$ with the desired
reconstruction guarantee. Now suppose that one instead wishes to reconstruct,
with reasonably high probability, just {}``a few'' (constant number)
\textit{specific} components of $\bx$, rather than all of it. Is
it possible to do so even faster (say in constant time) -- for instance,
if the measurements are in a database, and one wishes to query it
in a computationally efficient manner? If the matrix $\bA$ is {}``dense''
(most of its entries are non-zero) then one can directly see that
this is impossible. However, several compressive sensing algorithms
(for instance~\cite{JafWHC:09}) are based on {}``sparse'' matrices
$\bA$, and it can be shown that in fact these algorithms do indeed
have this property {}``for free'' (as indeed does our algorithm),
even though the authors do not analyze this. As can be inferred from
the name, this database query property is more often considered in
the database community, for instance in the work on IBLTs~\cite{GooM:11}.

\subsection{Our contributions}

Conceptually, the {}``iterative decoding'' technique we use is not
new. Similar ideas have been used in various settings in, for instance
~\cite{Spi:95,Pri:11,GooM:11,KudP:10}. However, to the best of our
knowledge, no prior work has the same performance as our work -- namely
-- information-theoretically order-optimal number of measurements,
bits in those measurements, and time-complexity, for the problem of
reconstructing a sparse signal (or sparse signal with a noisy tail
and noisy measurements) via linear measurements (along with the database
query property).%
The key to this performance is our novel design of {}``sparse random''
linear measurements, as described in Section~\ref{sec:noiseless}.

To summarize, the desirable properties of SHO-FA are that with high
probability%
\footnote{For most of the properties, we show that this probability is at least
$1-1/\spar^{\cO(1)}$, though we explicitly prove only $1-\cO(1/\spar)$.%
}:
\begin{itemize}
\item \textbf{Number of measurements:} For every $\spar$-sparse $\bx$,
with high probability over $\bA$, $\cO(\spar)$ linear measurements
suffice to reconstruct $\bx$. This is information-theoretically order-optimal.

\item \textbf{Number of measurement bits:} The total number of bits in $\bA\bx$
required to reconstruct $\bx$ to a relative error of $2^{-\preci}$
is $\cO(\spar(\log(\samp)+\preci))$. This is information-theoretically
order-optimal for any $\spar=\cO(\samp^{1-\Delta})$ (for any $\Delta>0$).

\item \textbf{Decoding time-complexity:} The total number of arithmetic
operations required is $\cO(\spar)$. This is information-theoretically
order-optimal.

\item \textbf{``Database-type queries'':} With constant probability $1-\epsilon$
any single ``database-type query'' can be answered in constant time. That is, the value of a single component of $\bx$ can be reconstructed in constant time with constant probability. %
\footnote{The constant $\epsilon$ can be made arbitrarily close to zero, at
the cost of a multiplicative factor $\cO(1/\epsilon)$ in the number
of measurements required. In fact, if we allow the number of measurements
to scale as $\cO(\spar\log(\spar))$, we can support any number of
database queries, each in constant time, with probability of every
one being answered correctly at with probability at least $1-\epsilon$.%
}

\item \textbf{Encoding/update complexity:} The computational complexity of generating $\bA\bx$ from $\bA$ and $\bx$ is $\cO(\samp)$, and if $\bx$ changes to some $\bx'$ in $\cO(1)$ locations, the computational complexity of updating $\bA\bx$ to $\bA\bx'$ is $\cO(1)$. Both of these are information-theoretically order-optimal.

\item \textbf{Noise:} Suppose $\bz$ and $\be$ have i.i.d. components%
\footnote{Even if the statistical distribution of the components of $\bz$ and
$\be$ are not i.i.d. Gaussian, statements with a similar flavor can
be made. For instance, pertaining to the effect of the distribution
of $\bz$, it turns out that our analysis is sensitive only on the
distribution of the \textit{sum} of components of $\bz$, rather then
the components themselves. Hence, for example, if the components of
$\bz$ are i.i.d. non-Gaussian, it turns out that via the Berry-Esseen
theorem~\cite{BerE} one can derive similar results to the ones derived
in this work. In another direction, if the components of $\bz$ are
not i.i.d. but do satisfy some {}``regularity constraints'',
then using Bernstein's inequality~\cite{Bernstein} one can again
derive analogous results. However, these arguments are more sensitive
and outside the scope of this paper, where the focus is on simpler
models.%
} drawn respectively from ${\mathcal{N}}(0,\sigma_{z}^{2})$ and ${\mathcal{N}}(0,\sigma_{e}^{2})$.
For every $\epsilon'>0$ and for $\spar=\cO(\samp^{1-\Delta})$ for any $\Delta>0$, a modified
version of SHO-FA (SHO-FA-NO) that with high probability reconstructs
$\bx$ with an estimation error of $(1+\epsilon')(||\bz||_{2}+||\be||_{2})$\footnote{As noted in Footnote~\ref{foot:recon}, this $\ell_2<\ell_2$ reconstruction guarantee implies the weaker $\ell_1<\ell_1$ reconstruction guarantee $||\bx-\hat{\bx}||_1<(1+\epsilon')(||\bz||_1+||\be||_1)$}.

\item \textbf{Practicality:} As validated by simulations (shown in Appendix~\ref{app:sim}),
most of the constant factors involved above are not large. 
%and are in fact significantly smaller than the explicit constants that can be calculated via our analysis.

\item \textbf{Different bases:} As is common in the compressive sensing
literature, our techniques generalize directly to the setting wherein
$\bx$ is sparse in an alternative basis (say, for example, in a wavelet
basis). 
%We defer discussion of this until after the description of SHO-FA.

\item \textbf{Universality:} While we present a specific ensemble of matrices
over which SHO-FA operates, we argue that in fact similar algorithms
work over fairly general ensembles of {}``sparse random matrices'' (see Section~\ref{sec:shofaint}),
and further that such matrices can occur in applications, for instance
in wireless MIMO systems~\cite{Guo:10} (Figure~\ref{fig:base_station} gives such an example) and Network Tomography~\cite{XuMT:11}.
%Again, we defer discussion of this issue to after our description of SHO-FA.
\end{itemize}

\begin{figure}[th]
\centering{}\includegraphics[scale=0.75]{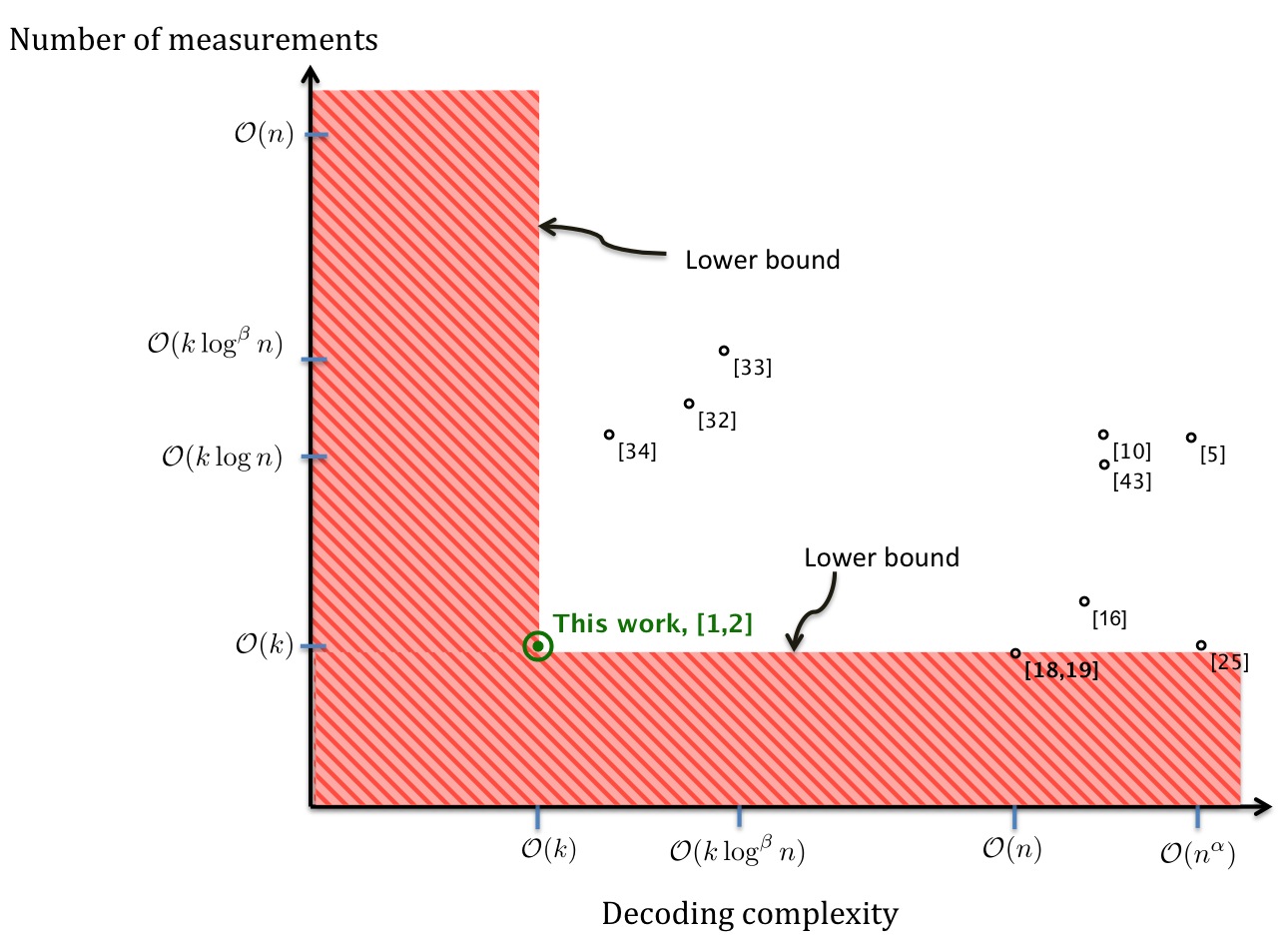} \label{fig:comparison}
\caption{A comparison of prior work with this work (and that of~\cite{PawR:12}) in two parameters -- decoding complexity, and number of measurements. 
Information-theoretic bounds show that the shaded region is infeasible. Only some of the many prior results in the field of Compressive Sensing are plotted here (some other works are referenced in the following table). Since $\spar$ and $\samp$ do not scale together in a fixed manner, the  position of points in this graph should be interpreted in a relative rather than absolute manner. The constants $\alpha$ and $\beta$ are intended to represent the degrees of ``small'' polynomials.
}
\end{figure}

\begin{table}[!htb]\label{tab:comparison}
 {\scriptsize { \hspace{-0.4in} }{\small { }}{\scriptsize }%
\begin{tabular}{|c|c|c|c|c|c|c|c|c|c|}
\hline
{\scriptsize Reference } & {\scriptsize $\bA$ } & {\scriptsize $\bx$ } & {\scriptsize $\bz$ } & {\scriptsize $\be$ } & {\scriptsize Reconstruction } & {\scriptsize $\mathbf{P_{e}}$ } & {\scriptsize \# Measurements } & {\scriptsize \# Decoding steps } & {\scriptsize Precision }\tabularnewline
 &  &  &  &  & {\scriptsize Goal } &  &  &  & \tabularnewline
\hline
\hline
{\scriptsize Reed-Solomon~\cite{ReeS:60} } & {\scriptsize D } & {\scriptsize D } & {\scriptsize $0$ } & {\scriptsize $0$ } & {\scriptsize Exact } & {\scriptsize $0$ } & {\scriptsize $2\spar+1$ } & {\scriptsize $\cO(\samp\log\samp))$~\cite{Ale:05} } & {\scriptsize -- }\tabularnewline
\hline
{\scriptsize Singleton~\cite{Sin:64} } & {\scriptsize D/R } & {\scriptsize D } & {\scriptsize $0$ } & {\scriptsize $0$ } & {\scriptsize Exact } & {\scriptsize $0$ } & {\scriptsize $\geq2\spar$ } & {\scriptsize -- } & {\scriptsize --}\tabularnewline
\hline
{\scriptsize Mitzenmacher-Varghese~\cite{MitG:12} } & {\scriptsize R } & {\scriptsize D } & {\scriptsize $0$ } & {\scriptsize $0$ } & {\scriptsize Exact } & {\scriptsize $\cO(1)$ } & {\scriptsize $\cO(\spar)$ } & {\scriptsize $\cO(\samp)$ } & {\scriptsize --}\tabularnewline
\hline
{\scriptsize Kudekar-Pfister~\cite{KudP:10} } & {\scriptsize R } & {\scriptsize D } & {\scriptsize $0$ } & {\scriptsize $0$ } & {\scriptsize Exact } & {\scriptsize $o(1)$ } & {\scriptsize $\cO(\spar)$ } & {\scriptsize $\cO(\samp)$ } & {\scriptsize -- }\tabularnewline
\hline
{\scriptsize Tropp-Gilbert~\cite{TroG:07} } & {\scriptsize G } & {\scriptsize D } & {\scriptsize $0$ } & {\scriptsize $0$ } & {\scriptsize Exact } & {\scriptsize $o(1)$}  & {\scriptsize $\cO(\spar\log(\samp))$ } & {\scriptsize $\cO(\spar^{2}\samp\log(\samp))$ } & {\scriptsize -- }\tabularnewline
\hline
\hline
{\scriptsize % Gilbert \textit{et al.}{~\cite{GilLPS:10} } &  &  &  &  &  &  &  &  & --\tabularnewline
Wu-Verd\'u '10~\cite{WuV:10} } & {\scriptsize R } & {\scriptsize R } & {\scriptsize R } & {\scriptsize $0$ } & {\scriptsize Exact } & {\scriptsize $o(1)$ } & {\scriptsize $\samp\bar{d}(\bx+\bz)+o(\samp)$ } & {\scriptsize $\cO(\exp(\samp))$ } & {\scriptsize --}\tabularnewline
\hline
{\scriptsize Donoho }\textit{\scriptsize et al.}{\scriptsize ~\cite{DonJM:11} } & {\scriptsize R } & {\scriptsize R } & {\scriptsize R } & {\scriptsize $0$ } & {\scriptsize Exact } & {\scriptsize o(1) } & {\scriptsize $\samp\bar{d}(\bx+\bz)+o(\samp)$ } & {\scriptsize $\cO(\samp^{3})$ } & {\scriptsize -- }\tabularnewline
 &  &  &  & {\scriptsize R } & {\scriptsize ${\mathscr{l}}_{2}<{C}{\mathscr{l}}_{2}$ } &  &  &  & \tabularnewline
\hline
\hline
{\scriptsize %Lu \textit{et al.}~\cite{YiMP:08}\cite{LuMPDK:08}  & R & P.L. & P.L. & 0 & $0$-error  & $\cO(1)$ & $\spar + o(\samp)$ & $\cO((\samp)(\spar-\mbox{o}(\samp)))$ & \tabularnewline
%\hline
Cormode-Muthukrishnan ~\cite{CorM:06} } & {\scriptsize R } & {\scriptsize D } & {\scriptsize $0$ } & {\scriptsize $0$ } & {\scriptsize ${\mathscr{l}}_{2}<{C}{\mathscr{l}}_{2}$ } & {\scriptsize $o(1)$ } & {\scriptsize $\cO(\spar\;\mbox{poly}(\log(\samp)))$ } & {\scriptsize $\cO(\spar\;\mbox{poly}(\log(\samp)))$ } & {\scriptsize --}\tabularnewline
\hline
{\scriptsize Cohen }\textit{\scriptsize et al.}{\scriptsize ~\cite{CohDD:09} } & {\scriptsize D } & {\scriptsize D } & {\scriptsize D } & {\scriptsize $0$ } & {\scriptsize ${\mathscr{l}}_{2}<{C}{\mathscr{l}}_{2}$ } & {\scriptsize $0$ } & {\scriptsize $\Omega(\samp)$ } & {\scriptsize -- } & {\scriptsize -- }\tabularnewline
\hline
{\scriptsize Price-Woodruff~\cite{PriW:11} } & {\scriptsize D } & {\scriptsize D } & {\scriptsize D } & {\scriptsize $0$ } & {\scriptsize ${\mathscr{l}}_{2}<{C}{\mathscr{l}}_{2}$ } & {\scriptsize $o(1)$ } & {\scriptsize $\theta(\spar\log(\samp/\spar))$ } & {\scriptsize -- } & {\scriptsize -- }\tabularnewline
\hline
{\scriptsize Ba }\textit{\scriptsize et al.}{\scriptsize ~\cite{BaIPW:10} } & {\scriptsize D/R } & {\scriptsize D } & {\scriptsize D } & {\scriptsize $0$ } & {\scriptsize ${\mathscr{l}}_{1}<{C}{\mathscr{l}}_{1}$ } & {\scriptsize $\cO(1)$ } & {\scriptsize $\Omega(\spar\log(\samp/\spar))$ } & {\scriptsize -- } & {\scriptsize $\cO(\log(\samp))$ }\tabularnewline
\hline
{\scriptsize Ba }\textit{\scriptsize et al.}{\scriptsize ~\cite{BaIPW:10} } & {\scriptsize R } & {\scriptsize D } & {\scriptsize R } & {\scriptsize $0$ } & {\scriptsize ${\mathscr{l}}_{2}<{C}{\mathscr{l}}_{2}$ } & {\scriptsize $o(1)$ } & {\scriptsize $\cO(\spar)$ } & {\scriptsize $\cO(\exp(\samp))$ } & \tabularnewline
\hline
{\scriptsize Candés~\cite{Can:08}, } & {\scriptsize R } & {\scriptsize D } & {\scriptsize D } & {\scriptsize D } & {\scriptsize ${\mathscr{l}}_{2}<\frac{C}{\sqrt{\spar}}{\mathscr{l}}_{1}$ } & {\scriptsize $o(1)$ } & {\scriptsize $\cO(\spar\log(\samp/\spar))$ } & {\scriptsize LP } & {\scriptsize --}\tabularnewline
{\scriptsize Baraniuk }\textit{\scriptsize et al.}{\scriptsize ~\cite{BarDDW:08} } &  &  &  &  &  &  &  &  & \tabularnewline
\hline
{\scriptsize %Donoho~\cite{Don:06}   & R & P.L. & P.L. & $0$ & ${\mathscr{l}}_{2}$ error &  & $\cO(\spar\log(\samp))$   & LP   & --\tabularnewline \hline
Indyk }\textit{\scriptsize et al.}{\scriptsize ~\cite{IndR:08} } & {\scriptsize D } & {\scriptsize D } & {\scriptsize D } & {\scriptsize D } & {\scriptsize ${\mathscr{l}}_{1}<(1+\epsilon){\mathscr{l}}_{1}$ } & {\scriptsize $0$ } & {\scriptsize $\cO(\spar\log(\samp/\spar))$ } & {\scriptsize $\cO(\samp\log(\samp/\spar))$ } & {\scriptsize -- }\tabularnewline
\hline
\hline
{\scriptsize Ak\c{c}akaya }\textit{\scriptsize et al.}{\scriptsize ~\cite{AkcT:08} } & {\scriptsize R } & {\scriptsize D } & {\scriptsize $0$ } & {\scriptsize R } & {\scriptsize ${\mathscr{l}}_{2}<{C}{\mathscr{l}}_{2}/$ } & {\scriptsize $0$ } & {\scriptsize $\cO(\spar)$ } & {\scriptsize $\cO(\exp(\samp))$ } & {\scriptsize --}\tabularnewline
 &  &  &  &  & {\scriptsize Sup. Rec. } &  & {\scriptsize Cond. on $x_{min}$ } &  & \tabularnewline
\hline
{\scriptsize Wu-Verd\'u '11~\cite{WuV:11} } & {\scriptsize R } & {\scriptsize R } & {\scriptsize R } & {\scriptsize R } & {\scriptsize ${\mathscr{l}}_{2}<{C}{\mathscr{l}}_{2}$ } & {\scriptsize $\cO(1)$ } & {\scriptsize $\bar{d}(\bx+\bz)$ } & {\scriptsize $\cO(\exp(\samp))$ } & {\scriptsize -- }\tabularnewline
\hline
{\scriptsize Wainwright~\cite{Wai:09} } & {\scriptsize ${\mathcal{N}}$ } & {\scriptsize D } & {\scriptsize $0$ } & {\scriptsize R } & {\scriptsize Sup. Rec. } & {\scriptsize $\cO(1)$ } & {\scriptsize $\Omega(\spar\log(\samp/\spar))$ } & {\scriptsize -- } & {\scriptsize --}\tabularnewline
\hline
{\scriptsize Fletcher }\textit{\scriptsize et al.}{\scriptsize ~\cite{FleRG:09} } & {\scriptsize ${\mathcal{N}}$ } & {\scriptsize D } & {\scriptsize $0$ } & {\scriptsize R } & {\scriptsize Sup. Rec. } & {\scriptsize $o(1)$ } & {\scriptsize $\cO(\spar\log(\samp-\spar))$ } & {\scriptsize -- } & {\scriptsize --}\tabularnewline
\hline
{\scriptsize Aeron }\textit{\scriptsize et al.}{\scriptsize ~\cite{AerSM:10} } & {\scriptsize ${\mathcal{N}}$ } & {\scriptsize D } & {\scriptsize $0$ } & {\scriptsize R } & {\scriptsize Sup. Rec. } & {\scriptsize $\cO(1)$ } & {\scriptsize $\Omega(\spar\log(\samp/\spar))$ } &  & {\scriptsize --}\tabularnewline
\hline
\hline
{\scriptsize Plan-Vershynin~\cite{PlaV:11} } & {\scriptsize R } & {\scriptsize D } & {\scriptsize $0$ } & {\scriptsize sgn } & {\scriptsize ${\mathscr{l}}_{2}<f(\bx,\bx_{\spar})$ } & {\scriptsize $\cO(1)$ } & {\scriptsize $\spar^{2}\log(\samp/\spar)$ } & {\scriptsize LP } & {\scriptsize 1}\tabularnewline
\hline
{\scriptsize Jacques }\textit{\scriptsize et al.}{\scriptsize ~\cite{JacLBB:11} } & {\scriptsize R } & {\scriptsize D } & {\scriptsize $0$ } & {\scriptsize sgn } & {\scriptsize ${\mathscr{l}}_{2}<f'(\samp,\spar)$ } & {\scriptsize $\cO(1)$ } & {\scriptsize $\spar\log(\samp)$ } & {\scriptsize $\exp(\samp)$ } & {\scriptsize 1 }\tabularnewline
\hline
{\scriptsize % & R   & D   & $0$   & sgn &  &  &  &  & \tabularnewline
Sarvotham }\textit{\scriptsize et al.}{\scriptsize ~\cite{SarBB:06} } & {\scriptsize R } & {\scriptsize D } & {\scriptsize $0$ } & {\scriptsize $0$ } & {\scriptsize Exact } & {\scriptsize $\cO(1)$ } & {\scriptsize $\spar\log(\samp)$ } & {\scriptsize $\spar\log(\spar)\log(\samp)$ } & {\scriptsize -- }\tabularnewline
\hline
{\scriptsize Gilbert }\textit{\scriptsize et al.}{\scriptsize ~\cite{GilSTV06} } & {\scriptsize R } & {\scriptsize P.L. } & {\scriptsize P.L. } & {\scriptsize 0 } & {\scriptsize ${\mathscr{l}}_{1}<{[1+C\log(\samp)]}{\mathscr{l}}_{1}$ } & {\scriptsize $\cO(1)$ } & {\scriptsize $\spar\log^{2}(\samp)$ } & {\scriptsize $\spar\log^{2}(\samp)\log^{2}(\spar)$ } & {\scriptsize --}\tabularnewline
\hline
\hline
{\scriptsize %Cand\'es \textit{et al.}~\cite{CanRT:06}  & D & D & $0$  & $0$ & $0$-error   & $0$ & $\cO(\spar)$ & $\exp(\samp)$ & -- \tabularnewline
% & R  & D   & $0$  & $0$ & ${\mathscr{l}}_{1}<{C}{\mathscr{l}}_{1}$   & $o(1)$ & $\cO(\spar\log(\samp))$ & LP & --\tabularnewline
%\hline
%\hline
%Price~\cite{Pri:11}$^{*}$   &  &  &  &  &  &  &  &  & --\tabularnewline
%\hline
%Goodrich-Mitzenmacher~\cite{GooM:11}$^{*}$   & R   & D   & $0$   & $0$   & $0$-error   & $\cO(1/\mbox{poly}(\samp))$   &  &  & --\tabularnewline
%\hline
This work/Pawar {\it et al.}~\cite{PawR:12} } & {\scriptsize R } & {\scriptsize D } & {\scriptsize $0$ } & {\scriptsize $0$ } & {\scriptsize Exact } & {\scriptsize $o(1)$ } & {\scriptsize $\cO(\spar)$ } & {\scriptsize $\cO(\spar)$ } & {\scriptsize $\cO(\log(\samp)+\preci)$ }\tabularnewline
 & {\scriptsize R } & {\scriptsize D } & {\scriptsize R } & {\scriptsize R } & {\scriptsize ${\mathscr{l}}_{1}<{C}{\mathscr{l}}_{1}$ } & {\scriptsize $o(1)$ } & {\scriptsize $\cO(\spar)$ } & {\scriptsize $\cO(\spar)$ } & {\scriptsize $\cO(\log(\samp))$ }\tabularnewline
\hline
\end{tabular}{\small \vspace*{2mm}
 }{\scriptsize }\\
{\scriptsize{} }{\small {} }{\small \par}
{\scriptsize{} } \centering{}}%
\parbox[c]{0.95\linewidth}{%
{\small {} }\textbf{\small Explanations and discussion}{\small :{
At the risk of missing much of the literature, and also perhaps oversimplifying
nuanced results, we summarize in this table many of the strands of
work preceding this paper and related to it -- not all results from
each work are represented in this table. The second to the fifth columns
respectively reference whether the measurement matrix $\bA$, source
$\spar$-sparse vector $\bx$, source noise $\bz$, and measurement
noise $\be$ are random (R) or deterministic (D) -- a $0$ in a column
corresponding to noise indicates that that work did not consider that
type of noise. An entry {}``P.L.'' stands for {}``Power Law''
decay in columns corresponding to $\bx$ and $\bz$. For achievability
schemes, in general $D$-type results are stronger than $R$-type
results, which in turn are stronger than $0$-type results. This is
because a $D$-type result for the measurement matrix indicates that
there is an explicit construction of a matrix that satisfies the required
goals, whereas the $R$-type results generally indicate that the result
is true with high probability over measurement matrices. Analogously,
a $D$ in the columns corresponding to $\bx$, $\bz$ or $\be$ indicates
that the scheme is true for all vectors, whereas an $R$ indicates
that it is true for random vectors from some suitable ensemble. For
converse results, the the opposite is true $0$ results are stronger
than $R$-type results, which are stronger than $D$-type results.
An entry ${\mathcal{N}}$ indicates the normal distribution -- the
results of~\cite{Wai:09} and~\cite{FleRG:09} are converses for
matrices with i.i.d. Gaussian entries. An entry {}``sgn'' denotes
(in the case of works dealing with one-bit measurements) that the
errors are sign errors. The sixth column corresponds to what the desired
goal is. The strongest possible goal is to have exact reconstruction
of $\bx$ (up to quantization error due to finite-precision artihmetic),
but this is not always possible, especially in the presence of noise.
Other possible goals include {}``Sup. Rec. '' (short
for support recovery) of $\bx$, or that the reconstruction $\hat{\bx}$
of $\bx$ differs from $\bx$ as a {}``small'' function
of $\bz$. It is known that if a deterministic reconstruction algorithm
is desired to work for all $\bx$ and $\bz$, then $||\hat{\bx}-\bx||_{2}<\cO(||\bz||_{2})$
is not possible with less than $\Omega(\samp)$ measurements~\cite{CohDD:09},
and that $||\hat{\bx}-\bx||_{2}<\cO(||\bz||_{1}/\sqrt{\spar})$ implies
$||\hat{\bx}-\bx||_{1}<\cO(||\bz||_{1})$. The reconstruction guarantees
in~\cite{PlaV:11,JacLBB:11} unfortunately do not fall neatly in
these categories. The seventh column indicates what the probability
of error is -- }\textit{\small i.e.}{\small{} the probability over
any randomness in $\bA$, $\bx$, $\bz$ and $\be$ that the reconstruction
goal in the sixth column is not met. In the eighth column, some entries
are marked $\bar{d}(\bx+\bz)$ -- this denotes the (upper) Rényi dimension
of $\bx+\bz$ -- in the case of exactly $\spar$-sparse vectors this
equals $\spar$, but for non-zero $\bz$ it depends on the distribution
of $\bz$. The ninth column considers the computational complexity
of the algorithms -- the entry {}``LP'' denotes the computational
complexity of solving a linear program. The final column notes whether
the particular work referenced considers the precision of arithmetic
operations, and if so, to what level. }}{\scriptsize{} }%
}{\scriptsize{} }

\end{table}

\subsection{Special acknowledgements}
{\bf While writing this paper, we became aware of parallel and independent work by Pawar
and Ramchandran~\cite{PawR:12} that relies on ideas similar
to our work and achieves similar performance guarantees. Both the work of~\cite{PawR:12} and the preliminary version  of this work~\cite{BakJCC:12} were presented at the same venue.
}

{In particular, the bounds on the minimum number of measurements
required for {}``worst-case'' recovery and the corresponding discussion
on recovery of signals with {}``random tails'' in~\cite{BaIPW:10}
led us to consider this problem in the first place. Equally, the class
of compressive sensing codes in~\cite{JafWHC:09}, which in turn
build upon the constructions of expander codes in~\cite{Spi:95},
have been influential in leading us to this work. While the model
in~\cite{Pri:11} differs from the one in this work, the techniques
therein are of significant interest in our work. The analysis in~\cite{Pri:11}
of the number of disjoint components in certain classes of random
graphs, and also the analysis of how noise propagates in iterative
decoding is potentially useful sharpening our results. We elaborate
on these in Section~\ref{sec:noisy}.

The work that is conceptually the closest to SHO-FA is that of the
Invertible Bloom Lookup Tables (IBLTs) introduced by Goodrich-Mitzenmacher~\cite{GooM:11}
(though our results were derived independently, and hence much of
our analysis follows a different line of reasoning). The data structures
and iterative decoding procedure (called {}``peeling'' in~\cite{GooM:11})
used are structurally very similar to the ones used in this work.
However the {}``measurements'' in IBLTs are fundamentally non-linear
in nature -- specifically, each measurement includes within it a {}``counter''
variable -- it is not obvious how to implement this in a linear manner.
Therefore, though the underlying graphical structure of our algorithms
is similar, the details of our implementation require new non-trivial
ideas. Also, IBLTs as described are not robust to either signal tails
or measurement noise. Nonetheless, the ideas in~\cite{GooM:11} have
been influential in this work. In particular, the notion that an individual
component of $\bx$ could be recovered in constant time, a common
feature of Bloom filters, came to our notice due to this work. %Also,
%the analysis (via {}``$2$-cores of random hypergraphs'') in~\cite{GooM:11}
%of the leading constant factor in the number of measurements $\meas$
%is tighter than that presented in our work. Since the emphasis of
%our work is on order optimality, we choose to present our (fairly
%straightforward) analysis, and leave for future work the incorporation
%of their more technical line of reasoning into our algorithm.}

\section{Exactly $\spar$-sparse $\bx$ and noiseless measurements}

\label{sec:noiseless} We first consider the simpler case when the
source signal is exactly $\spar$-sparse and the measurements are
noiseless, \textit{i.e.}, $\by=\bA\bx$, and both $\bz$ and $\be$
are all-zero vectors. The intuition presented here carries over to
the scenario wherein both $\bz$ and $\be$ are non-zero, considered
separately in Section~\ref{sec:noisy}

For $\spar$-sparse input vectors { $\bx\in\bbr^{\samp}$}
let the set $\cSx$ denote its \textit{support}, \textit{i.e.}, its
set of nonzero values $\{\xj:x_{\xj}\neq0\}$. Recall that in our
notation, for some $\meas$, a {\em measurement matrix} {$\bA\in\bbr^{\meas\times\samp}$}
is chosen probabilistically. This matrix operates on $\bx$ to yield
the \textit{measurement vector} {$\by\in\bbr^{\meas}$}
as $\by=\bA\bx$. The decoder takes the vector $\by$ as input and
outputs the reconstruction $\hat{\bx}\in\bbR^{\samp}$ -- it is desired
that $\hat{\bx}$ equal $\bx$ (with upto $P$ bits of precision)
with high probability over the choice of measurement matrices .

In this section, we describe a probabilistic construction of the measurement
matrix $\bA$ and a reconstruction algorithm SHO-FA that achieves
the following guarantees.

\begin{theorem} \label{thm:main}
Let $\spar\leq\samp$. There exists a reconstruction
algorithm SHO-FA for {$\bA\in\bbr^{\meas\times\samp}$
}with the following properties:
\begin{enumerate}
\item For every {$\bx\in\bbR^{\samp}$}, with probability $1-\cO(1/\spar)$
over the choice of $\bA$, SHO-FA produces a reconstruction $\hat{\bx}$
such that $||\bx-\hat{\bx}||_{1}/||\bx||_{1}\leq2^{-P}$
\item The number of measurements $\meas=\cnst\spar$ for some $\cnst>0$
\item The {number} of steps required by SHO-FA is $\cO(\spar)$
\item The {number} of bitwise arithmetic operations required by SHO-FA
is $\cO(\spar(\log{\samp}+P))$.
\end{enumerate}
\end{theorem}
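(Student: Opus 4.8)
The natural route is to realize $\bA$ through a random bipartite graph and decode by IBLT-style \emph{peeling}. I would take a bipartite graph $G$ with $\samp$ left nodes (one per coordinate of $\bx$) and $\meas' = \zeta\spar$ right nodes (``bins''), for a constant $\zeta$ to be fixed, where each left node independently picks $\degr$ distinct bins uniformly at random. Each bin $b$ carries two real (or complex) measurements, so $\meas = 2\meas' = \cnst\spar$ with $\cnst = 2\zeta$, giving claim~(2). For bin $b$ I would use $y_b^{(1)} = \sum_{\xj:\, b\in N(\xj)} \sigma_{b\xj}\,x_\xj$ and $y_b^{(2)} = \sum_{\xj:\, b\in N(\xj)} \sigma_{b\xj}\,\xj\,x_\xj$, with i.i.d.\ random coefficients $\sigma_{b\xj}$; the factor $\xj\le\samp$ is exactly what costs $\cO(\log\samp)$ extra bits per measurement. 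The decoder keeps the current bin values, repeatedly finds a bin that currently ``looks like a leaf'' --- $y_b^{(1)}\neq 0$ and $y_b^{(2)}/y_b^{(1)}$ is an integer in $\{1,\dots,\samp\}$ --- reads off $(\xj^\ast,\hat{x}_{\xj^\ast}) = (y_b^{(2)}/y_b^{(1)},\, y_b^{(1)}/\sigma_{b\xj^\ast})$, subtracts its contribution from the $\degr$ bins containing $\xj^\ast$, and iterates until the whole support is recovered or no leaf remains.

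The analysis splits into three parts. \emph{(i) No false leaves.} With probability $1$ over the continuous randomness in the $\sigma_{b\xj}$ (hence, after discretization, with probability $1-\spar\cdot 2^{-\Omega(P)}$, which is dominated by $2^{-P}$), no bin holding two or more undecoded support coordinates ever passes the leaf test: for a fixed bin, a fixed residual subset of size $\ge 2$, and a fixed candidate index, ``$y_b^{(2)}/y_b^{(1)}$ equals that index'' is a nontrivial polynomial condition on the $\sigma$'s, hence fails almost surely, and a union bound over the $\le 2\spar$ bins, their $\cO(\log\spar/\log\log\spar)$-size support-neighborhoods, and the $\samp$ candidate indices is affordable. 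Thus on this event peeling recovers the exact support and never errs. \emph{(ii) Peeling finishes.} Restricted to the support, $G$ induces a random $\degr$-uniform hypergraph with $\spar$ edges on $\zeta\spar$ vertices, and peeling recovers all of $\bx$ iff this hypergraph has empty $2$-core, i.e.\ there is no nonempty set $S$ of support coordinates with every bin in $N(S)$ meeting $S$ at least twice. A first-moment bound gives
\[
\Pr[\exists\text{ such }S]\;\le\;\sum_{s\ge 2}\binom{\spar}{s}\,\Pr[\text{a fixed }s\text{-subset is a stopping set}]\;\le\;\sum_{s\ge 2}\Big(c_0\,(s/\spar)^{1/2}\Big)^{s},
\]
where $c_0=c_0(\zeta,\degr)$ can be made small by taking $\zeta$ above the ($\degr$-dependent) peeling threshold. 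The $s=2$ term is $\Theta(1/\spar)$ --- two coordinates landing in exactly the same $\degr$ bins --- while the remaining terms sum to $\cO(\spar^{-3/2})$, so the total is $\cO(1/\spar)$ (and cannot be improved with only $\cO(\spar)$ measurements, which is why the theorem states $\cO(1/\spar)$). \emph{(iii) Bookkeeping.} Off these two failure events, peeling ends in $\spar$ decode-steps, each touching $\degr=\cO(1)$ bins (a queue of current leaf-bins makes finding the next leaf $\cO(1)$ amortized), giving claim~(3); and if $\bx$ is represented to $\cO(\log\samp+P)$ bits of relative precision and all arithmetic is carried out exactly on $\cO(\log\samp+P)$-bit integers (clearing the single denominator $\sigma_{b\xj^\ast}$), each recovered coordinate has relative error $\le 2^{-P}/\spar$, so $\|\bx-\hat{\bx}\|_1/\|\bx\|_1\le 2^{-P}$, with total cost $\cO(\spar(\log\samp+P))$ bit-operations --- claims~(1) and~(4). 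A union bound over (i) and (ii) yields overall success probability $1-\cO(1/\spar)$.

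The main obstacle is part~(ii): making the first-moment sum $\cO(1/\spar)$ \emph{uniformly in $s$} requires a careful count of how $s$ hyperedges can cover at most $\degr s/2$ vertices with each vertex covered at least twice --- the crude bound is loose for small $s$ (a singleton is vacuously not a stopping set, and the $s=2$ term must be extracted exactly to see the $\Theta(1/\spar)$), and one must check super-polynomial decay for intermediate and large $s$, which is precisely what dictates how large $\cnst$ must be in claim~(2). A secondary but routine obstacle is the finite-precision accounting in part~(iii): rounding errors can accumulate over the up to $\cO(\log\spar/\log\log\spar)$ peels that hit a single bin, so one either carries $\cO(\log\log\spar)$ guard bits (still within the $\cO(\log\samp+P)$ budget) or works with an a priori quantization of $\bx$ and exact integer arithmetic throughout, zeroing any recovered coordinate of magnitude below $2^{-P}\|\bx\|_1/\spar$ since it already fits inside the allowed error.
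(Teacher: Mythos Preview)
Your proposal is correct and hits all four claims, but it follows a somewhat different route from the paper's ``simple'' proof of Theorem~\ref{thm:main}.

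\textbf{Where you diverge.} The paper's primary argument does \emph{not} go through a $2$-core/stopping-set first-moment computation. Instead it proves an \emph{expansion} property (Lemma~\ref{lem:expansion}): with probability $1-\cO(\spar^{-\dgr/6})$ over $\graph$, every subset $\cSpx\subseteq\cSx$ has $|N(\cSpx)|\ge (2\dgr/3)|\cSpx|$, which by a counting argument (Lemma~\ref{lem:manyleafs}) forces at least half the neighbours of any residual support to be leaves. This guarantees peeling never stalls, and is why the paper takes left-degree $\dgr\ge 7$ in the simple proof. Your stopping-set sum $\sum_{s\ge 2}(c_0\sqrt{s/\spar})^s=\Theta(1/\spar)$ is the sharper $2$-core route; the paper does invoke exactly this analysis (via~\cite{Molloy:2005,GooM:11}) but only later, in Section~\ref{2-core}, to tighten constants and drop to $\dgr=3$ (Theorem~\ref{thm:tighter}). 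So your argument is closer to the paper's \emph{second} proof than its first.

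\textbf{Measurement design.} The paper encodes the index $\xj$ in the \emph{phase} of a unit-modulus complex entry, $a_{\yi\xj}^{(I)}=e^{\iota\xj\pi/(2\samp)}$, and carries a \emph{separate} verification row with i.i.d.\ random phases; a leaf is declared only if the verification check passes. Your design instead uses the pair $(\sigma_{b\xj},\,\xj\,\sigma_{b\xj})$ and reads $\xj$ off as the ratio, with no dedicated verification row. Both work, but note two things. First, your ``no false leaves'' union bound should run over all \emph{subsets} of each bin's support-neighbourhood (up to $2^{\cO(\log\spar/\log\log\spar)}=\spar^{o(1)}$ of them), not just the neighbourhoods themselves; the arithmetic still closes with $\cO(\log\samp)$ bits of precision, but the phrasing in your part~(i) undercounts. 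Second, the paper's choice of unit-norm entries is deliberate: it prevents amplification of finite-precision error across iterations (Remark~\ref{rem:universality} and Section~\ref{subsec:bits}). Your second row has entries of magnitude up to $\samp$, so a naive implementation could blow up rounding error by a factor of $\samp$; you implicitly absorb this into the $\cO(\log\samp)$ guard bits, which is fine, but it is worth saying explicitly. The paper's separate verification row also makes the false-leaf analysis cleaner---a single Schwartz--Zippel bound per iteration, union over $\cO(\spar)$ iterations---without needing to enumerate residual subsets.

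In short: your proof is valid; it trades the paper's expansion lemma for the (tighter) $2$-core threshold, and trades phase-encoding plus verification for a ratio test, at the cost of a slightly more delicate false-positive and precision accounting.
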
 We present a ``simple'' proof of the above theorem in Sections~\ref{subsec:intuition} to~\ref{subsec:correctness}. In Section~\ref{2-core} we direct the reader to an alternative, more technically challenging, analysis (based on the work of~\cite{GooM:11}) that leads to a tighter characterization of the constant factors in the parameters of Theorem~\ref{thm:main}.

\subsection{High-level intuition}\label{subsec:intuition}

If $\meas=\Theta(\samp)$, the task of reconstructing $\bx$ from
$\by=\bA\bx$ appears similar to that of \textit{syndrome decoding}
of a channel code of rate $\samp/\meas$~\cite{Roth}. It is well-known~\cite{HooLW:06}
that channel codes based on {\em bipartite expander graphs}, \textit{i.e.},
bipartite graphs with good expansion guarantees for all sets of size
less than or equal to $\spar$, allow for decoding in a number of
steps that is linear in the size of $\bx$. In particular, given such
a bipartite expander graph with $\samp$ nodes on the left and $\meas$
nodes on the right, choosing the matrix $\bA$ as a $\meas\times\samp$
binary matrix with non-zero values in the locations where the corresponding
pair of nodes in the graph has an edge is known to result in codes
with rate and relative minimum distance that is linear in $\samp$.

Motivated by this~\cite{JafWHC:09} explore a measurement design
that is derived from expander graphs and show that $\cO(\spar\log(\samp/\spar))$
measurements suffice, and $\cO(\spar)$ iterations with overall decoding
complexity of $\cO(\samp\log{(\samp/\spar)})$.%
\footnote{The work of~\cite{BerIR:08} is related -- it also relies on bipartite
expander graphs, and has similar performance for exactly $\spar$-sparse
vectors. But~\cite{BerIR:08} can also handle a significantly larger class of approximately
$\spar$-sparse vectors than~\cite{JafWHC:09}. However, our algorithms
are closer in spirit to those of~\cite{JafWHC:09}, and hence we
focus on this work.%
}

It is tempting to think that perhaps an optimized application of expander
graphs could result in a design that require only $\cO(\spar)$ number
of measurements. However, we show that in the compressive sensing
setting, where, typically $\spar=o(\samp)$, it is not possible to
satisfy the desired expansion properties. In particular, if one tries
to mimic the approach of~\cite{JafWHC:09}, one would need bipartite
expanders such that \textit{all} sets of size $\spar$ on one side
of the graph {}``expand'' -- we show in Lemma~\ref{lem:lowerbound}
that this is not possible. As such, this result may be of independent
interest for other work that require similar graphical constructions
(for instance the {}``magical graph{''} constructions of~\cite{CheKL:12},
or the expander code constructions of~\cite{Spi:95} in the high-rate
regime).

Instead, one of our key ideas is that we do not really need {}``true''
expansion. Instead, we rely on a notion of approximate expansion that
guarantees expansion for most $\spar$-sized sets (and their subsets)
of nodes on the left of our bipartite graph. We do so by showing that
any set of size at most $\spar$, with high probability over suitably
chosen measurement matrices, expands to the desired amount. Probabilistic
constructions turn out to exist for our desired property.%
\footnote{In fact similar properties have been considered before in the literature
-- for instance~\cite{CheKL:12} constructed so-called {}``magical
graphs'' with similar properties. Our contribution is the way we
use this property for our needs.%
} Such a construction is shown in Lemma~\ref{lem:expansion}.

Our second key idea is that in order to be able to recover all the
$\spar$ non-zero components of $\bx$ with at most $\cO(\spar)$
steps in the decoding algorithm, it is necessary (and sufficient)
that on average, the decoder reconstructs one previously undecoded
non-zero component of $\bx$, say $x_{\xj}$, in $\cO(1)$ steps in
the decoding algorithm. For $\spar=o(\samp)$ the algorithm does not
even have enough time to write out all of $\bx$, but only its non-zero
values. To achieve such efficient identification of $x_{\xj}$, we
go beyond the $0/1$ matrices used in almost all prior work on compressive
sensing based on expander graphs.%
\footnote{It can be argued that such a choice is a historical artifact, since
error-correcting codes based on expanders were originally designed
to work over the binary field $\bbF_{2}$. There is no reason to stick
to this convention when, as now, computations are done over $\bbr$.%
} Instead, we use distinct values in each row for the non-zero values
in $\bA$, so that if only one non-zero $x_{\xj}$ is involved in
the linear measurement involving a particular $y_{\yi}$ (a situation
that we demonstrate happens in a constant fraction of $y_{\yi}$),
one can identify which $x_{\xj}$ it must be in $\cO(1)$ time. Our
decoding then proceeds iteratively, by identifying such $x_{\xj}$
and canceling their effects on $y_{\yi}$, and terminates after $\cO(\spar)$
steps after all non-zero $x_{\xj}$ and their locations have been
identified (since we require our algorithm to work with high probability
for all $\bx$, we also add {}``verification'' measurements -- this
only increases the total number of measurements by a constant factor).
Our calculations are precise to $\cO(\log(\samp)+\preci)$ bits --
the first term in this comes from requirements necessary for computationally
efficient identification of non-zero $x_{\xj}$, and the last term
from the requirement that we require that the reconstructed vector
be correct up to $\preci$-precision. Hence the total number of bits
over all measurements is $\cO(\spar((\log(\samp)+\preci))$. Note
that this is information-theoretically order-optimal, since even specifying
$\spar$ locations in a length-$\samp$ vector requires $\Omega(\spar(\log(\samp/\spar))$
bits, and specifying the value of the non-zero locations so that the
relative reconstruction error is $\cO(2^{-\preci})$ requires $\Omega(\spar\preci)$
bits.

We now present our SHO-FA algorithm in two stages. We first by use
our first key idea (of {}``approximate{''} expansion) in Section~\ref{subsec:graph_prop}
to describe some properties of bipartite expander graphs with certain
parameters. We then show in Section~\ref{subsec:code_prop} how these
properties, via our second key idea (of efficient identification)
can be used by SHO-FA to obtain desirable performance.

\subsection{{}``Approximate Expander{''} Graph $\graph$}

\label{subsec:graph_prop} We first construct a bipartite graph $\graph$
(see Example $1$ in the following) with some desirable
properties outlined below. We then show in Lemmas~\ref{lem:expansion}
and~\ref{lem:manyleafs} that such graphs exist (Lemma~\ref{lem:lowerbound}
shows the non-existence of graphs with even stronger properties).
In Section~\ref{subsec:code_prop} we then use these graph properties
in the SHO-FA algorithm. To simplify notation in what follows (unless
otherwise specified) we omit rounding numbers resulting from taking
ratios or logarithms, with the understanding that the corresponding
inaccuracy introduced is negligible compared to the result of the
computation. Also, for ease of exposition, we fix various internal
parameters to {}``reasonable{''} values rather than optimizing
them to obtain {}``slightly{''} better performance at the cost
of obfuscating the explanations -- whenever this happens we shall
point it out parenthetically. Lastly, let $\epsilon$ be any {}``small{''}
positive number, corresponding to the probability of a certain {}``bad
event{''}.

\noindent \underline{\bf Properties of $\graph$:}
\begin{enumerate}
\item \label{prop:1} \underline{\it Construction of a left-regular bipartite graph:}
The graph $\graph$ is chosen uniformly at random from the set of
bipartite graphs with $\samp$ nodes on the left and $\meas'$ nodes
on the right, such that each node on the left has degree {$\dgr\geq 7$}.\footnote{For ease of analysis we now consider the case when $\dgr \geq 7$ -- our tighter result in {Theorem~\ref{thm:tighter} relaxes this}, and work for any $\dgr \geq 3$}
In particular, $\meas'$ is chosen to equal $\cnst\spar$ for some
design parameter $\cnst$ to be specified later as part of code design.
\item \underline{\it Edge weights for ``identifiability{''}:} \label{prop:2}
For each node on the right, the weights of the edges attached to it
are required to be distinct. In particular, each edge weight is chosen
as a complex number of unit magnitude, and phase between $0$ and
$\pi/2$. Since there are a total of $\dgr\samp$ edges in $\graph$,
choosing distinct phases for each edge attached to a node on the right
requires at most $\log(\dgr\samp)$ bits of precision (though on average
there {are} about $\dgr\samp/\meas'$ edges attached to a node on
the right, and hence on average one needs about $\log(\dgr\samp/\meas')$
bits of precision).

\item \underline{\it $\cSx$-expansion:} \label{prop:3} With high probability
over $\graph$ defined in Property~\ref{prop:1} above, for any set
$\cSx$ of $\spar$ nodes on the left, the number of nodes neighbouring
those in any $\cSpx\subseteq\cSx$ is required to be at least $2\dgr/3$
times the size of $\cSpx$.%
\footnote{The \textit{expansion factor} $2\dgr/3$ is somewhat arbitrary. In
our proofs, this can be replaced with any number strictly between
half the degree and the degree of the left nodes, and indeed one can
carefully optimize over such choices so as to improve the constant
in front of the expected time-complexity/number of measurements of
SHO-FA. Again, we omit this optimization since this can only improve
the performance of SHO-FA by a constant factor.%
} The proof of this statement is the subject of Lemma~\ref{lem:expansion}.
\item \underline{\it ``Many{''} $\cSx$-leaf nodes:} \label{prop:4} For
any set $\cSx$ of at most $\spar$ nodes on the left of $\graph$,
we call any node on the right of $\graph$ an \textit{$\cSx$-leaf
node} if it has exactly one neighbor in $\cSx$, and we call it a
\textit{$\cSx$-non-leaf node} if it has two or more neighbours in
$\cSx$. (If the node on the right has no neighbours in $\cSx$, we
call it a \textit{$\cSx$-zero node}.) Assuming $\cSx$ satisfies
the expansion condition in Property~\ref{prop:3} above, it can be
shown that at least a fraction $1/2$ of the nodes that are neighbours
of any $\cSpx\subseteq\cSx$ are $\cSpx$-leaf nodes.%
\footnote{Yet again, this choice of $1/2$ is a function of the choices made
for the degree of the left nodes in Property~\ref{prop:1} and the
expansion factor $2$ in Property~\ref{prop:3}. Again, we omit optimizing
it.%
} The proof of this statement is the subject of Lemma~\ref{lem:expansion}.
\end{enumerate}
\noindent \textit{Example $1$:} We now demonstrate via the following
toy example in Figures~\ref{fig:bip_graph} and~\ref{fig:bip_graph2}
a graph $\graph$ satisfying Properties~\ref{prop:1}-\ref{prop:4}.
\begin{figure}[th]
\centering{}\includegraphics[scale=0.55]{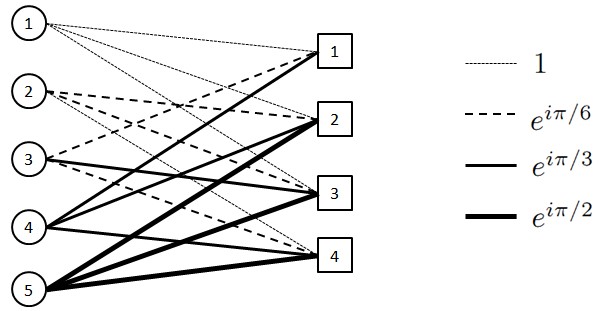} \label{fig:bip_graph}
\caption{\underline{\it Property~\ref{prop:1}}: Bipartite {}``approximate
expander{''} graph with $\samp=5$ nodes on the left, and $\meas'=4$
nodes on the right. Each node on the left has degree $\degr$. \underline{\it Property~\ref{prop:2}}:
The thicknesses of the edges represent the weights assigned to the
edges. In particular, it is required that for each node on the right,
the edges incoming have distinct weights. In this example, the thinnest
edges are assigned a weight of $1$, the next thickest edges have
a weight $e^{\iota\pi/6}$, the next thickest edges have weight $e^{\iota2\pi/6}=e^{\iota\pi/3}$,
and the thickest edges have weight $e^{\iota3\pi/6}=e^{\iota\pi/2}$. }
\end{figure}

\begin{figure}[th]
\begin{centering}
\includegraphics[scale=0.65]{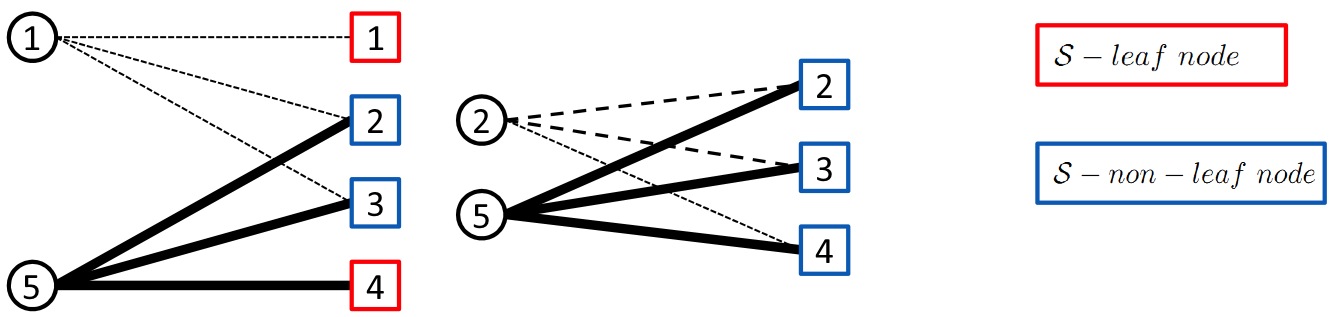} \caption{ \underline{\it Property~\ref{prop:3}}: We require that \textit{most}
sets $\cSpx$ of at most $|\cSx|=\spar=2$ nodes on the left in the
graph $\graph$ in Figure~\ref{fig:bip_graph} have at least $2|\cSpx|$
neighbors on the right. ) In the graph in Figure~\ref{fig:bip_graph}
it can be manually verified that most sets of size $\cSpx$ at most
$2$ have at least $2|\cSpx|$ neighbors. For example, Figure~\ref{fig:bip_graph}(a)
focuses on the subset $\cSpx=\{1,5\}$ of nodes on the left side of
$\graph$ in Figure~\ref{fig:bip_graph}. This particular $\cSpx$
has $4$ neighbours, and all its single-node subsets have $3$ neighbours.
The only $\cSpx$ set of two or fewer nodes that does not satisfy
Property~\ref{prop:3} is $\{2,5\}$, as shown in Figure~\ref{fig:bip_graph}(b),
since it has only $3<2\times2$ neighbours. \underline{\it Property~\ref{prop:4}}:
For sets $\cSpx$ that satisfy Property~\ref{prop:3} it can be manually
verified that {}``many{''} of their neighbours are $\cSpx$-leaf
nodes. For example, for $\cSpx=\{1,5\}$, two out of its four neighbours
(\textit{i.e.}, a fraction $1/2$) are $\cSpx$-leaf nodes -- which
satisfies the constraint that at least a fraction $1/2$ of its neighbours
be $\cSpx$-leaf nodes. On the other hand, for $\cSpx=\{2,5\}$ (which
does \textit{not} satisfy Property~\ref{prop:3}), \textit{none}
of its neighbours are $\cSpx$-leaf nodes.}

\par\end{centering}

\centering{}\label{fig:bip_graph2}
\end{figure}

\hfill{}$\Box$

We now state the Lemmas needed to make our arguments precise. First,
we formalize the $\cSpx$-expansion property defined in Property~\ref{prop:3}.
\begin{lemma}\label{lem:expansion}\textbf{(Property~\ref{prop:3}
($\cSx$-expansion)):} Let
$\spar<\samp\in\bbn$ be arbitrary, and let $\cnst\in\bbn$ be
fixed. Let $\graph$ be chosen uniformly at random from the set of
all bipartite graphs with $\samp$ nodes (each of degree $\dgr$)
on the left and $\meas'=\cnst\spar$ nodes on the right. Then for
any $\cSx$ of size at most $\spar$ and any $\cSpx\subseteq\cSx$,
with probability $1-o(1/\spar)$ (over the random choice $\graph$)
there are at least $2\dgr/3$ times as many nodes neighbouring those
in $\cSpx$, as there are in $\cSpx$. \end{lemma}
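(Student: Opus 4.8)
The plan is a first-moment (union-bound) argument whose one conceptual point is that $\cSx$ --- and therefore every $\cSpx\subseteq\cSx$ --- is fixed \emph{before} $\graph$ is drawn, so we only need the expansion to hold for the (at most $2^{\spar}$) subsets of one prescribed set of size $\le\spar$, rather than for all such sets simultaneously; the latter is impossible in the regime $\meas'=\cnst\spar$, $\spar=o(\samp)$, which is the content of Lemma~\ref{lem:lowerbound}. I would first record the reduction that makes the randomness tractable: since the right-degrees are unconstrained, sampling $\graph$ uniformly from the left-$\dgr$-regular bipartite graphs is the same as letting each of the $\samp$ left nodes independently pick a uniformly random $\dgr$-subset of the $\meas'$ right nodes. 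In particular the neighbour-sets $N(v)$ of distinct left nodes are independent, and for any set $T$ of right nodes $\Pr[N(v)\subseteq T]=\binom{|T|}{\dgr}/\binom{\meas'}{\dgr}\le(|T|/\meas')^{\dgr}$.

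Fix $\cSx$ with $|\cSx|\le\spar$ and a subset $\cSpx\subseteq\cSx$ with $|\cSpx|=s$, and set $t:=\lceil(2\dgr/3)s\rceil-1$. Then $|N(\cSpx)|<(2\dgr/3)s$ if and only if $N(\cSpx)$ lies inside some $t$-subset $T$ of the right vertices; by independence across the $s$ nodes of $\cSpx$, $\Pr[N(\cSpx)\subseteq T]\le(t/\meas')^{\dgr s}$, so a union bound over the $\binom{\meas'}{t}$ choices of $T$ gives $\Pr[|N(\cSpx)|<(2\dgr/3)s]\le\binom{\meas'}{t}(t/\meas')^{\dgr s}$. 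Union-bounding further over the $\binom{|\cSx|}{s}\le\binom{\spar}{s}$ subsets of size $s$ and over $s=1,\dots,|\cSx|$, the probability that some $\cSpx\subseteq\cSx$ fails to expand is at most $\sum_{s}\binom{\spar}{s}\binom{\meas'}{t}(t/\meas')^{\dgr s}$, where the small-$s$ terms with $t<\dgr$ simply vanish since then $N(\cSpx)\subseteq T$ is impossible.

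It remains to estimate this sum. Using $\binom{a}{b}\le(ea/b)^{b}$, substituting $\meas'=\cnst\spar$ and $t\asymp(2\dgr/3)s$, and simplifying, the $s$-th summand collapses to $\big[B\,(s/\spar)^{\dgr/3-1}\big]^{s}$ for a constant $B=B(\dgr,\cnst)$ that can be made smaller than $1$ --- indeed arbitrarily small --- by taking $\cnst$ to be a sufficiently large constant (in particular $\cnst>2\dgr/3$, which is already forced since otherwise $|N(\cSx)|\le\meas'<(2\dgr/3)|\cSx|$ with certainty). Because $\dgr/3-1>0$ the summands are largest at the smallest admissible $s$ and then decay geometrically, so the sum is dominated by its leading term, of order $\spar^{\,1-\dgr/3}$ (or $\spar^{\,2-2\dgr/3}$ in the simple-graph model, where a single left node cannot have repeated neighbours); for $\dgr\ge7$ this is $o(1/\spar)$, which gives the lemma.

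The main obstacle is precisely arranging that the sum is not merely $o(1)$ but genuinely $o(1/\spar)$. This is what pins down the hypotheses: the expansion factor must be bounded strictly away from $\dgr$ --- the choice $2\dgr/3$, lying in $(\dgr/2,\dgr)$, is what makes the event ``more than $\dgr s/3$ edge-collisions occur inside $\cSpx$'' rare enough; $\cnst$ must be a large enough constant to beat the $\binom{\spar}{s}$ entropy factor for $s=\Theta(\spar)$ (keeping $B<1$); and $\dgr\ge7$ is exactly what pushes the leading ($s$ small) term of the union bound below $1/\spar$. A routine secondary nuisance is the floor/ceiling bookkeeping in $t$ and checking the first one or two values of $s$ by hand. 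The restriction $\dgr\ge7$ is an artifact of this crude union bound; the tighter analysis indicated in Section~\ref{2-core}, modelled on~\cite{GooM:11}, recovers the same conclusion for all $\dgr\ge3$.
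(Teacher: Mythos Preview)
Your argument is correct and follows the same overall strategy as the paper's proof in Appendix~\ref{apx:proof_lem_expansion}: a first-moment/union-bound over all $\cSpx\subseteq\cSx$, followed by Stirling-type estimates, with the point that only the $2^{\spar}$ subsets of a \emph{fixed} $\cSx$ need be controlled.

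The one difference worth noting is how you bound $\Pr[|N(\cSpx)|<(2\dgr/3)|\cSpx|]$ for a single set. The paper draws the $\dgr|\cSpx|$ edges sequentially and counts ``collisions'' (edges landing on already-hit right nodes), bounding the probability that at least $\dgr|\cSpx|/3$ collisions occur by $\binom{\dgr|\cSpx|}{\dgr|\cSpx|/3}\bigl(\dgr|\cSpx|/\cnst\spar\bigr)^{\dgr|\cSpx|/3}$. You instead use the standard ``small target set'' trick, union-bounding over all $t$-subsets $T$ that could contain $N(\cSpx)$. Both are textbook devices; yours is arguably the cleaner route here since it directly matches the independence structure of the left-regular model (and, as you observe, that model is exactly the one in the lemma statement --- the paper's proof in fact works in the biregular model). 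The resulting tail estimates differ slightly ($\spar^{-\dgr/6}$ for the paper versus your $\spar^{\,1-\dgr/3}$ or $\spar^{\,2-2\dgr/3}$), but all three are $o(1/\spar)$ precisely when $\dgr\ge 7$, so the methods are equivalent for the purposes of the lemma.
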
 \textit{Proof:}
Follows from a standard probabilistic method argument. Given for completeness
in Appendix~\ref{apx:proof_lem_expansion}.

\noindent Note here that, in contrast to the {}``usual{''} definition
of {}``vertex expansion{''}~\cite{HooLW:06} (wherein the expansion
property is desired \textit{{}``for all{''}} subsets of left nodes
up to a certain size) Lemma~\ref{lem:expansion} above only gives
a probabilistic expansion guarantee for any subset of $\cSx$ of size
$\spar$. In fact, Lemma~\ref{lem:lowerbound} below shows that for
the parameters of interest, {}``for all{''}-type expanders cannot
exist.

\begin{lemma}\label{lem:lowerbound} Let $\spar=o(\samp)$, and $d>0$
be an arbitrary constant. Let $\graph$ be an arbitrary bipartite
graph with $\samp$ nodes (each of degree $d$) on the left and $\meas'$
nodes on the right. Then for all sufficiently large $\samp$, suppose
each set of of size $\spar$ of $\cSx$ nodes on the left of $\graph$
has strictly more than $d/2$ times as many nodes neighbouring those
in $\cSx$, as there are in $\cSx$. Then $\meas'=\Omega(\spar\log(\samp/\spar))$.
\end{lemma}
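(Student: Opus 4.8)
I would prove this by the probabilistic (first-moment) method: assuming $\meas'$ is small, I will produce a set $\cS$ of \emph{exactly} $\spar$ left vertices whose entire neighbourhood sits inside a set of right vertices of size strictly less than $(d/2)\spar$, which directly contradicts the hypothesised expansion $|N(\cS)|>(d/2)|\cS|$. Fix $\tau:=\lceil d\spar/2\rceil-1$, so that $\tau<(d/2)\spar$, and draw a subset $T$ of the $\meas'$ right vertices uniformly at random among all $\tau$-subsets. (If $\tau\geq\meas'$ the expansion hypothesis already fails outright — take $\cS$ to be any $\spar$ left vertices, so $N(\cS)\subseteq[\meas']$ has size $\meas'\leq\tau<(d/2)\spar$ — so from now on assume $\tau<\meas'$.)

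The first step is to lower bound, for a fixed left vertex $v$, the probability that $N(v)\subseteq T$. Since $\graph$ is left-regular of degree $d$ we have $|N(v)|\leq d$ (with equality for a simple graph; parallel edges only shrink $N(v)$ and help us), so $\Pr[N(v)\subseteq T]\geq \binom{\meas'-d}{\tau-d}\big/\binom{\meas'}{\tau}=\prod_{i=0}^{d-1}\tfrac{\tau-i}{\meas'-i}\geq\big(\tfrac{\tau-d}{\meas'}\big)^{d}$, which is meaningful once $\tau\geq d$, i.e. once $\spar$ exceeds an absolute constant. By linearity of expectation, the expected number of left vertices $v$ with $N(v)\subseteq T$ is at least $\samp\,(\tfrac{\tau-d}{\meas'})^{d}$. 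Hence, if $\samp\,(\tfrac{\tau-d}{\meas'})^{d}\geq\spar$, some specific $\tau$-set $T^{*}$ has at least $\spar$ left vertices whose neighbourhoods lie inside it; taking any $\spar$ of them yields a set $\cS$ with $|\cS|=\spar$ and $|N(\cS)|\leq\tau<(d/2)\spar$ — the desired contradiction.

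The second step is just to see for which $\meas'$ this inequality holds. Rearranging, $\samp(\tfrac{\tau-d}{\meas'})^{d}\geq\spar$ is equivalent to $\meas'\leq(\tau-d)(\samp/\spar)^{1/d}$, and since $\tau-d\geq d\spar/4$ once $\spar$ is above an absolute constant, this certainly holds whenever $\meas'\leq\tfrac{d\spar}{4}(\samp/\spar)^{1/d}$. Therefore the expansion hypothesis forces $\meas'>\tfrac{d\spar}{4}(\samp/\spar)^{1/d}$. Finally, since $\spar=o(\samp)$ we have $\samp/\spar\to\infty$, and for the constant $d$ the elementary bound $e^{x}\geq x$ with $x=\tfrac{1}{d}\log(\samp/\spar)$ gives $(\samp/\spar)^{1/d}\geq\tfrac{1}{d}\log(\samp/\spar)$; hence $\meas'>\tfrac{\spar}{4}\log(\samp/\spar)$, i.e. $\meas'=\Omega(\spar\log(\samp/\spar))$, for all sufficiently large $\samp$. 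The handful of degenerate cases with $\spar$ below that absolute constant are easier: expansion forbids any two left vertices from having the same neighbourhood, which already forces $\meas'=\Omega(\samp^{1/d})=\Omega(\log\samp)=\Omega(\spar\log(\samp/\spar))$ there.

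The argument is short and I do not expect a genuine obstacle; the only places needing care are the rounding bookkeeping in $\tau$ (choosing it just below $(d/2)\spar$ so the contradiction is with the \emph{strict} inequality in the hypothesis, while keeping $\tau-d=\Omega(d\spar)$ so the first-moment estimate is not vacuous) and the realisation that the counting argument actually delivers the much stronger bound $\meas'=\Omega\big(\spar(\samp/\spar)^{1/d}\big)$, of which the stated $\Omega(\spar\log(\samp/\spar))$ is merely the clean degree-free weakening. An entirely equivalent presentation replaces the random $T$ by a greedy/peeling selection of right vertices, but the first-moment version above is the cleanest.
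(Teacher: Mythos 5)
Your proof is correct, and it takes a genuinely different route from the paper's. The paper reduces the claim to coding theory: the hypothesised expansion implies, by the standard unique-neighbour argument for expander codes, that the $\meas'\times\samp$ biadjacency matrix is the parity-check matrix of a binary code of rate at least $1-\meas'/\samp$ and relative minimum distance at least $\spar/\samp$, and the Hamming bound then forces $\meas'/\samp=\Omega\bigl(H(\spar/\samp)\bigr)=\Omega\bigl((\spar/\samp)\log(\samp/\spar)\bigr)$. Your argument instead exhibits a non-expanding $\spar$-set directly by a first-moment computation over random $\tau$-subsets of the right vertices. Both are sound, and the trade-offs are instructive. The paper's reduction is a two-line appeal to a classical bound and lands exactly on the degree-free $\spar\log(\samp/\spar)$ form the authors want to contrast with their achievability result; on the other hand, the codeword whose support violates expansion may have weight strictly less than $\spar$, so that argument implicitly needs expansion for all sets of size \emph{at most} $\spar$, whereas your construction produces a set of size exactly $\spar$ and so matches the lemma's literal hypothesis. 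Your argument is also quantitatively stronger: it yields $\meas'=\Omega\bigl(\spar(\samp/\spar)^{1/d}\bigr)$, which for constant $d$ is polynomially rather than logarithmically larger than $\spar$, and this is consistent with (and within a power of) the $\Theta\bigl(\spar(\samp/\spar)^{2/d}\bigr)$ that a random $d$-left-regular graph needs on the right to achieve expansion factor $d/2$ for all $\spar$-sets. The only loose ends are the degenerate constant-$\spar$ cases (e.g.\ $\spar=1$, where the hypothesis barely constrains $\meas'$ and the stated bound degenerates), which you flag and which the paper's proof glosses over as well; your rounding bookkeeping for $\tau$ and the handling of possible repeated neighbours are both fine.
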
 \textit{Proof:} Follows from the Hamming bound in coding
theory~\cite{Roth} and standard techniques for expander codes~\cite{Spi:95}.
Proof in Appendix~\ref{apx:proof_lowerbound}.

\noindent Another way of thinking about Lemma~\ref{lem:lowerbound}
is that it indicates that if one wants a {}``for all{''} guarantee
on expansion, then one has to return to the regime of $\meas'=\cO(\spar\log(\samp/\spar))$
measurements, as in {}``usual{''} compressive sensing.

Next, we formalize the {}``many $\cSx$-leaf nodes{''} property
defined in Property~\ref{prop:4}. Recall that for any set $\cSx$
of at most $\spar$ nodes on the left of $\graph$, we call any node
on the right of $\graph$ an \textit{$\cSx$-leaf node} if it has
exactly one neighbor in $\cSx$. \begin{lemma}\label{lem:manyleafs}
Let $\cSx$ be a set of $\spar$ nodes on the left of $\graph$ such
that the number of nodes neighbouring those in any $\cSpx\subseteq\cSx$
is at least $2\dgr/3$ times the size of $\cSpx$. Then at least a
fraction $1/2$ of the nodes that are neighbours of any $\cSpx\subseteq\cSx$
are $\cSpx$-leaf nodes. \end{lemma}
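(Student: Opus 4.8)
The plan is a one-shot double-counting (edge-counting) argument, carried out separately for each subset $\cSpx\subseteq\cSx$; fix such a $\cSpx$ and set $s=|\cSpx|$. First I would split the right-nodes adjacent to $\cSpx$ into two groups: the $\cSpx$-leaf nodes (those with exactly one neighbour in $\cSpx$) and the remaining adjacent right-nodes (those with two or more neighbours in $\cSpx$, i.e.\ the $\cSpx$-non-leaf nodes that actually touch $\cSpx$). Write $\lambda$ for the number of $\cSpx$-leaf nodes and $\nu$ for the total number of right-nodes adjacent to $\cSpx$, so the second group has size $\nu-\lambda$.

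Next, count the edges emanating from $\cSpx$ in two ways. Since $\graph$ is left-regular of degree $\dgr$ (Property~\ref{prop:1}), there are exactly $\dgr s$ such edges. Grouping them by their right endpoint: each $\cSpx$-leaf node receives exactly one, and each right-node in the second group receives at least two, so $\dgr s \ge \lambda + 2(\nu-\lambda) = 2\nu-\lambda$, which rearranges to $\lambda \ge 2\nu - \dgr s$.

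Finally, I would plug in the expansion hypothesis: by assumption (the conclusion of Lemma~\ref{lem:expansion}, i.e.\ Property~\ref{prop:3}), $\nu \ge (2\dgr/3)\,s$, equivalently $\dgr s \le (3/2)\nu$. Combining, $\lambda \ge 2\nu - (3/2)\nu = \nu/2$, which is exactly the assertion that at least a fraction $1/2$ of the neighbours of $\cSpx$ are $\cSpx$-leaf nodes. Since $\cSpx$ was an arbitrary subset of $\cSx$ and the expansion hypothesis is assumed to hold for all such subsets simultaneously, the claim follows for every $\cSpx\subseteq\cSx$.

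Nothing here is delicate, so there is no real obstacle; the only points to keep straight are (i) applying the expansion bound to the same set $\cSpx$ for which the leaf-fraction claim is being made (fine, since the hypothesis is quantified over all $\cSpx\subseteq\cSx$), and (ii) bookkeeping consistency --- ``neighbours of $\cSpx$'' must mean the right-nodes counted in the expansion condition (equivalently, $\cSpx$-leaf plus adjacent $\cSpx$-non-leaf nodes), not all $\meas'$ right-nodes, and the inequality $\dgr s \ge 2\nu-\lambda$ relies on every right-node in the second group having at least two edges into $\cSpx$.
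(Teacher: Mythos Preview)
Your argument is correct and is essentially identical to the paper's own proof: both fix $\cSpx$, double-count the $\dgr|\cSpx|$ left-edges by partitioning the neighbours into leaf and non-leaf right-nodes to get $\dgr|\cSpx|\ge 2\nu-\lambda$, and then invoke the $2\dgr/3$ expansion lower bound on $\nu$ to conclude $\lambda\ge\nu/2$. The only cosmetic difference is that the paper parameterizes by the leaf fraction $\beta=\lambda/\nu$ rather than by $\lambda$ directly.
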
 \textit{Proof:} Based on Lemma~\ref{lem:expansion}.
Follows from a counting argument similar to those used in expander
codes~\cite{Spi:95}. Proof in Appendix~\ref{apx:proof_lem_manyleafs}.

\noindent %\noindent {\it Proof:} Follows from a standard probabilistic method argument. Given for completeness in Appendix~\ref{apx:proof_lem_expansion}.

{Given a graph $\graph$ satisfying properties \ref{prop:1}-\ref{prop:4},
we now describe our encoding and decoding procedure.}
\subsection{{Measurement design}}
\label{subsec:code_prop}

 \textit{Matrix structure and entries:} The encoder's \textit{measurement
matrix} $\bA$ is chosen based on the structure of $\graph$ (recall
that $\graph$ has $\samp$ nodes on the left and $\meas'$ nodes
on the right). To begin with, the matrix $\bA$ has $\meas=2\meas'$
rows, and its non-zero values are unit-norm complex numbers. 
\begin{remark} \label{rem:universality}This
choice of using complex numbers rather than real numbers in $\bA$
is for notational convenience only. One equally well choose a
matrix $\bA'$ with $\meas=4\meas'$ rows, and replace each row of
$\bA$ with two consecutive rows in $\bA'$ comprising respectively
of the real and imaginary parts of rows of $\bA$. Since the components
of $\bx$ are real numbers, hence there is a bijection between $\bA\bx$
and $\bA'\bx$ -- indeed, consecutive pairs of elements in $\bA'\bx$
are respectively the real and imaginary parts of the complex components
of $\bA\bx$. Also, as we shall see (in Section~\ref{subsec:bits}), 
the choice of unit-norm complex
numbers ensures that {}``noise{''} due to finite precision arithmetic
does not get {}``amplified{''}. In Section~\ref{subsec:univ}, we argue that this property enables us to apply SHO-FA to other settings such as wireless systems that naturally generate an ensemble of matrices that resemble SHO-FA. \end{remark}

\noindent In particular, corresponding to node $\yi$ on the right-hand
side of $\graph$, the matrix $\bA$ has two rows. The $\xj^{th}$
entries of the {$(2\yi-1)^{th}$} and {$2\yi^{th}$} rows of $\bA$
are respectively denoted $a_{\yi,\xj}^{(I)}$ and $a_{\yi,\xj}^{(V)}$
respectively. (The superscripts $(I)$ and $(V)$ respectively stand
for \textit{Identification} and \textit{Verification}, for reasons
that shall become clearer when we discuss the process to reconstruct
$\bx$.)

\noindent \textit{Identification entries:} If $\graph$ has no edge
connecting node $\xj$ on the left with $\yi$ on the right, then
the \textit{identification entry} $a_{\yi,\xj}^{(I)}$ is set to equal
$0$. Else, if there is indeed such an edge, $a_{\yi,\xj}^{(I)}$
is set to equal
\begin{equation}
a_{\yi,\xj}^{(I)}=e^{\iota\xj\pi/(2\samp)}.\label{eq:a_iden}
\end{equation}

(Here $\iota$ denotes the positive square root of $-1$.) This entry
$a_{\yi,\xj}^{(I)}$ can also be thought of as the weight of the edge
in $\graph$ connecting $\xj$ on the left with $\yi$ on the right.
In particular, the \textit{phase} $\xj\pi/(2\samp)$ of $a_{\yi,\xj}^{(I)}=e^{\iota\xj\pi/(2\samp)}$
will be critical for our algorithm. As in Property~\ref{prop:2}
in Section~\ref{subsec:graph_prop}, our choice above guarantees
distinct weights for all edges connected to a node $\yi$ on the right.\\

\noindent \textit{Verification entries:} Whenever the identification
entry $a_{\yi,\xj}^{(I)}$ equals $0$, we choose to set the corresponding
\textit{verification entry} $a_{\yi,\xj}^{(V)}$ also to be zero.
On the other hand, whenever $a_{\yi,\xj}^{(I)}\neq0$, then we set
$a_{\yi,\xj}^{(V)}$ to equal $e^{\iota\theta_{\yi,\xj}^{(V)}}$ for
$\theta_{\yi,\xj}^{(V)}$ chosen uniformly at random from $[0,\pi/2]$
(with $\cO(\log(\spar))$ bits of precision).%
\footnote{This choice of precision for the verification entries contributes
one term to our expression for the precision of arithmetic required.
As we argue later in Section~\ref{subsec:bits}, this choice of precision
guarantees that if a single identification step returns a value for
$x_{\xj}$, this is indeed correct with probability $1-o(1/\spar)$.
Taking a union bound over $\cO(\spar)$ indices corresponding to non-zero
$x_{\xj}$ gives us an overall $1-o(1)$ probability of success.%
}

\noindent {\it Example $2$:} The matrix $\bA$ corresponding to the graph $\graph$ in Example $1$ is show in Figure~\ref{fig:matrix}.

\begin{figure}[h]\begin{center}
\includegraphics[scale=.45]{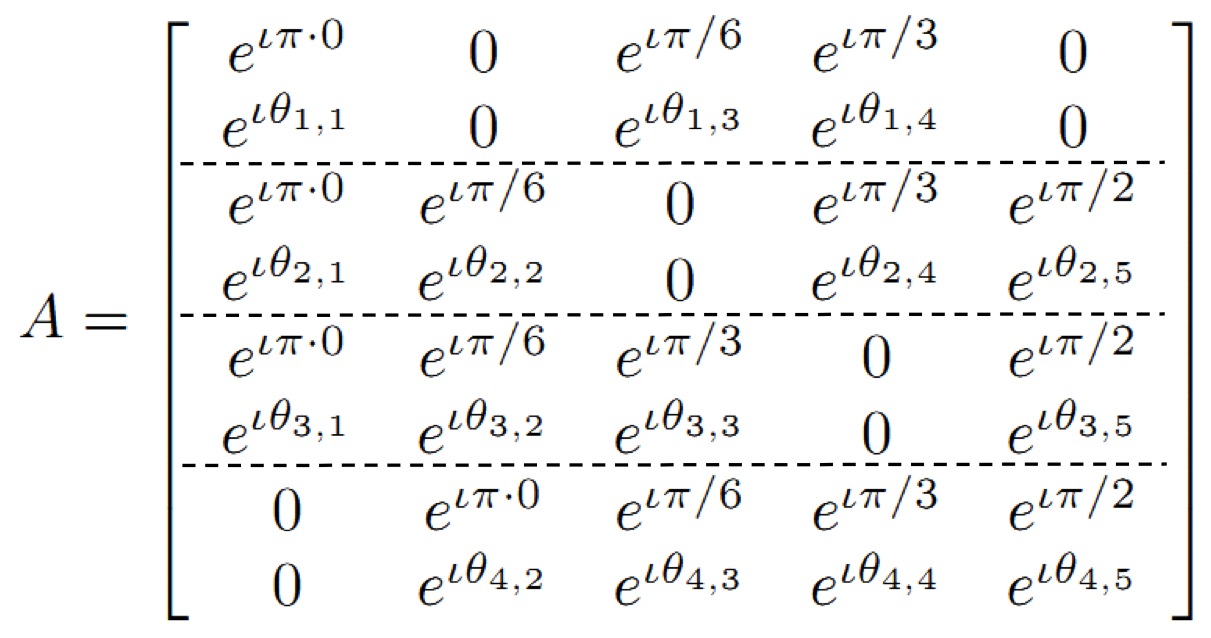}
\caption{
This $8\times 5$ matrix denotes the $\bA$ corresponding to the graph $\graph$. Note that its primary purpose is expository -- clearly, $8$ measurements (or indeed, $16$ measurements over $\bbr$) to reconstruct a $2$-sparse vector of length $5$ is too many! Nonetheless, this is just an artifact of the fact that $\samp$ in this example is small. In fact, according to our proofs, even as $\samp$ scales to infinity, the number of measurements required to reconstruct a $2$-sparse vector (or in general a $\spar$-sparse vector for constant $\spar$) remains constant! Also, note that we do not use the assignment for the identification {entries} $a_{\yi,\xj}^{(I)}$ specified in (\ref{eq:a_iden}), since doing so would result in ugly and not very illuminating calculations in Example $3$ below. However, as noted in Remark~\ref{rem:ident}, this is not critical -- it is sufficient that distinct entries in the identification rows of the matrix be distinct.
}
\label{fig:matrix}\end{center}
\end{figure}

\subsection{Reconstruction}
\subsubsection{{Overview}}
\label{sec:noiselessreconstruction}

We now provide some high-level intuition on the decoding process.

Since the measurement matrix $\bA$ has interspersed identification
and verification rows, this induces corresponding interspersed \textit{identification
observations} $y_{\yi}^{(I)}$ and verification \textit{verifications
observations} $y_{\yi}^{(V)}$ in the \textit{observation vector} $\by=\bA\bx$.
Let $\by^{(I)}=\{y_{\yi}^{(I)}\}$ denote the length-$\meas$ \textit{identification
vector} over $\bbc$, and $\by^{(V)}=\{y_{\yi}^{(V)}\}$ denote the
length-$\meas$ \textit{verification vector} over $\bbc$.

Given the measurement matrix $\bA$ and the observed $(\by^{(I)},\by^{(V)})$
identification and verification vectors, the decoder's task is to
find \textit{any ``consistent''} $\spar$-sparse vector $\hat{\bx}$ such that $\bA\hat{\bx}$
results in the corresponding identification and verification vectors.
We shall argue below that if we succeed, then with high probability
over $\bA$ (specifically, over the verification entries of $\bA$),
this $\hat{\bx}$ must equal $\bx$.

{To find such a consistent $\hat{\bx},$ we design an iterative
decoding scheme. This scheme starts by setting the initial guess for
the reconstruction vector $\hat{\bx}$ to the all-zero vector. It
then initializes, in the manner described in the next paragraph, a $\cSx$-leaf-node list},
$\mathcal{L}(1)$, a set of indices of $\cSx$-leaf nodes.

The
decoder checks to see whether $\yi$ is a $\cSx$-leaf node in the following
way. First, it looks at the entry $y_{\yi}^{(I)}$ and {}``estimates{''}
which node $\xj$ on the left of the graph $\graph$ ``could have generated the identification
observation $y_{\yi}^{(I)}$''. It then uses the verification
entry $a_{\yi,\xj}^{(V)}$ and the verification observation $y_{\yi}^{(V)}$
to verify its estimate. After sequentially examining each entry $y_{\yi}^{(I)}$, the list of {\it all} $\cSx$-leaf nodes is denoted $\mathcal{L}(1)$.

{In the $\iter^{th}$ iteration of the decoding process,
the decoder picks a leaf node in $\yi\in\mathcal{L}(t)$. Using this, it then reconstructs the non-zero component $x_{\xj}$ of $\bx$ that ``generated'' $y_{\yi}^{(I)}$.
If this reconstructed value $x_{\xj}$ is successfully ``verified'' using the verification
entry $a_{\yi,\xj}^{(V)}$ and the verification observation $y_{\yi}^{(V)}$)%
\footnote{As Ronald W. Reagan liked to remind us, {}``\textit{doveryai,
no proveryai''}.}%
}, {then the algorithm performs the following steps in this iteration:
\begin{itemize}
\item It updates the observation vectors
by subtracting the {}``contribution{''} of the coordinate $x_{\xj}$
to the measurements it influences (there are exactly $7$ of them
since the degree of the nodes on the left side of $\graph$ is $7$).
\item
It updates the $\cSx$-leaf-node list, $\mathcal{L}(t)$ by removing $\yi$ from $\mathcal{L}(t)$
and checking the change of status (zero, leaf, or non-leaf) of other indices influenced by $x_{\xj}$ (there at
most $6$), 
\item Finally the algorithm picks a new index $\yi$ from the updated list, $\mathcal{L}(t+1)$
for the next iteration. 
\end{itemize}
The decoder performs the above operations
repeatedly until $\hat{\bx}$ has been completely recovered. We also
show that (with high probability over $\bA$) in at most $\spar$
steps this process does indeed terminate.}

\noindent \textit{Example $3$:} Figures~\ref{fig:bip_fig1}--\ref{fig:bip_fig5}
show a sample decoding process for the matrix $\bA$ as in Example
$2$, and the observed vector $\by$ shown in the figures. The example
also demonstrates each of several possible scenarios the algorithm
can find itself in, and how it deals with them.

\begin{figure*}[p]
\includegraphics[scale=0.55]{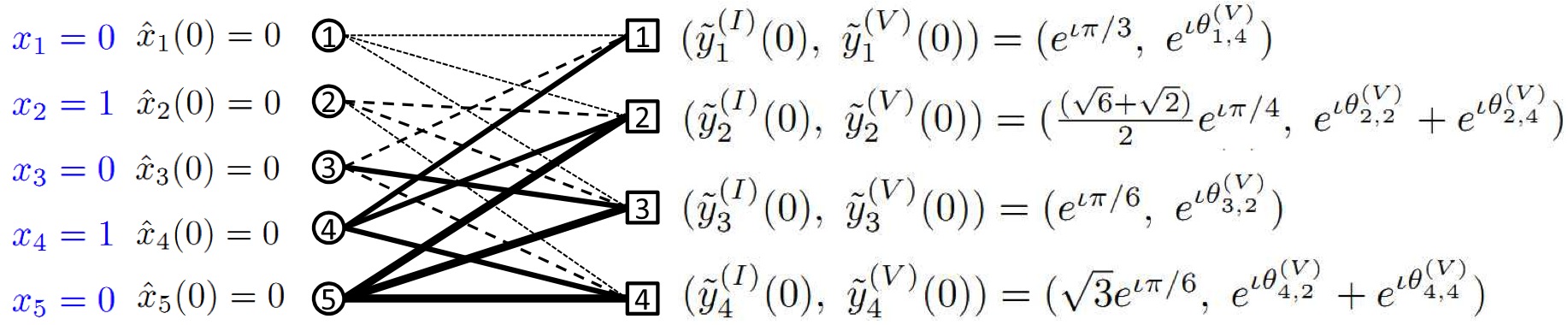} \caption{ \underline{\it Initialization}: The (true) $\bx$
equals $(0,1,0,1,0)$ (and hence $\cSx=\{2,4\}$). Also note that
nodes $1$ and $3$ on the right of $\graph$ are $\cSx$-leaf nodes,
as defined in Property{~\ref{prop:4}}. However, all of this is
unknown to the decoder \textit{a priori}.
The decoder sets the (starting) estimate $\hat{\bx}(0)$ of the reconstruction
vector $\hat{\bx}$ to the all-zeros vector. The (starting) \textit{gap
vector} $\tilde{\by}$ is set to equal $\by$, which
in turn equals the corresponding $4$ pairs of identification and
verification observations on the right-hand side of $\graph$. The
specific values of $\theta_{i,j}^{(V)}$ in the verification observations
do not matter currently -- all that matters is that given $\bx$,
each of the four verification observations are non-zero (with high
probability over the choices of $\theta_{i,j}^{(V)}$). Hence the
(starting) value of the \textit{neighbourly}
set equals $\{1,2,3,4\}$. This step takes $\cO(\spar)$ number of
steps, just to initialize the neighbourly set. By the end of the decoding
algorithm (if it runs successfully), the tables will be turned --
all the entries on the right of $\graph$ will equal zero, and (at
most) $\spar$ entries on the left of $\graph$ will be non-zero. }

\label{fig:bip_fig1}
\end{figure*}

\begin{figure*}[p]
 \includegraphics[scale=0.55]{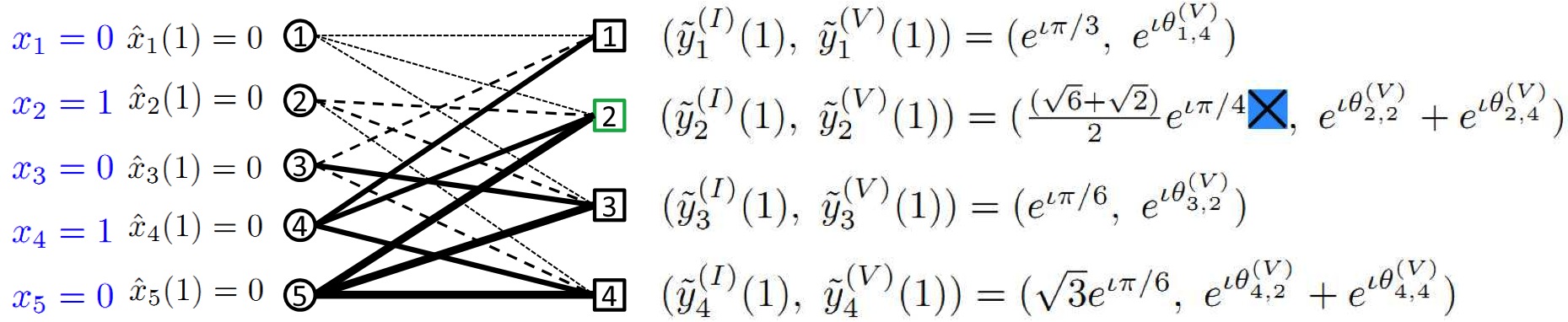} \caption{\underline{\it Leaf-Node List 1 (Failed identification)}:
The decoder picks the index $\yi=2$ from the neighbourly set $\{1,2,3,4\}$,
and checks the phase of the corresponding gap vector identification
observation $\tilde{y}_{2}^{(I)}$. Since this equals $\pi/4$, which
is \textit{not} in the set of possible
phases in the $2^{nd}$ identification row of $\bA$ (which are all
multiples of $\pi/6$), the decoder declares $\yi=2$ is not a leaf
node. This entire process takes a constant number of steps. }

\label{fig:bip_fig2}
\end{figure*}

\begin{figure*}[p]
 \includegraphics[scale=0.55]{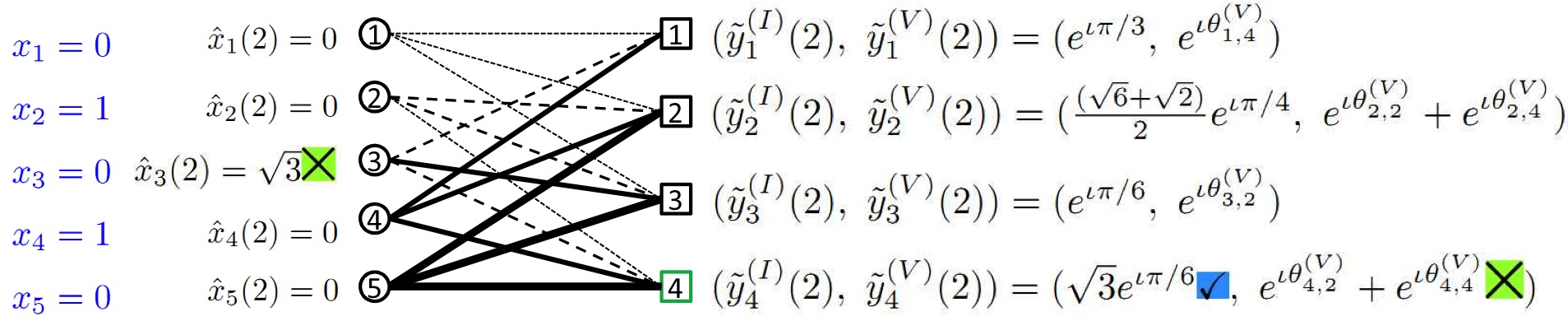} \caption{\underline{\it Leaf-Node List 2 (Passed identification, failed verification)}:
In this step, a potentially more serious failure could happen. In
particular, suppose the decoder picks the index $\yi=4$ from the
neighbourly set $\{1,2,3,4\}$ (note that $4$ is \textit{also}
not a $\cSx$-leaf node), and checks the phase of the corresponding
gap vector identification observation $\tilde{y}_{4}^{(I)}$, it just
so happens that the value of $\bx$ is such that this corresponds
to a phase of $\pi/6$. But as can be seen from the matrix in Figure~\ref{fig:matrix},
for $\yi=4$ this corresponds to $a_{\yi,\xj}^{(I)}$ for $\xj=3$.
Hence the decoder would make a {}``false identification''
of $\xj=3$, and estimate that $\hat{x}_{3}$ equals the magnitude
of $\tilde{y}_{4}^{(I)}$, which would equal $\sqrt{3}$. This is
where the verification entries and verification observations save
the day. Recall that the phase of each verification entry is chosen
uniformly at random (with sufficient bit precision) from $[0,\pi/2)$,
independently of both $\bx$ and the other entries of $\bA$. Hence
the probability that $\sqrt{3}$ (the misdirected value of $\hat{x}_{3}$)
times the corresponding verification entry $a_{4,3}^{(V)}$ equals
$\tilde{y}_{4}^{(I)}$ is {}``small''. Hence the decoder in this
case too declares $\yi=4$ is not a leaf node. This entire process
takes a constant number of steps. }

\label{fig:bip_fig3}
\end{figure*}

\begin{figure*}[p]
 \includegraphics[scale=0.55]{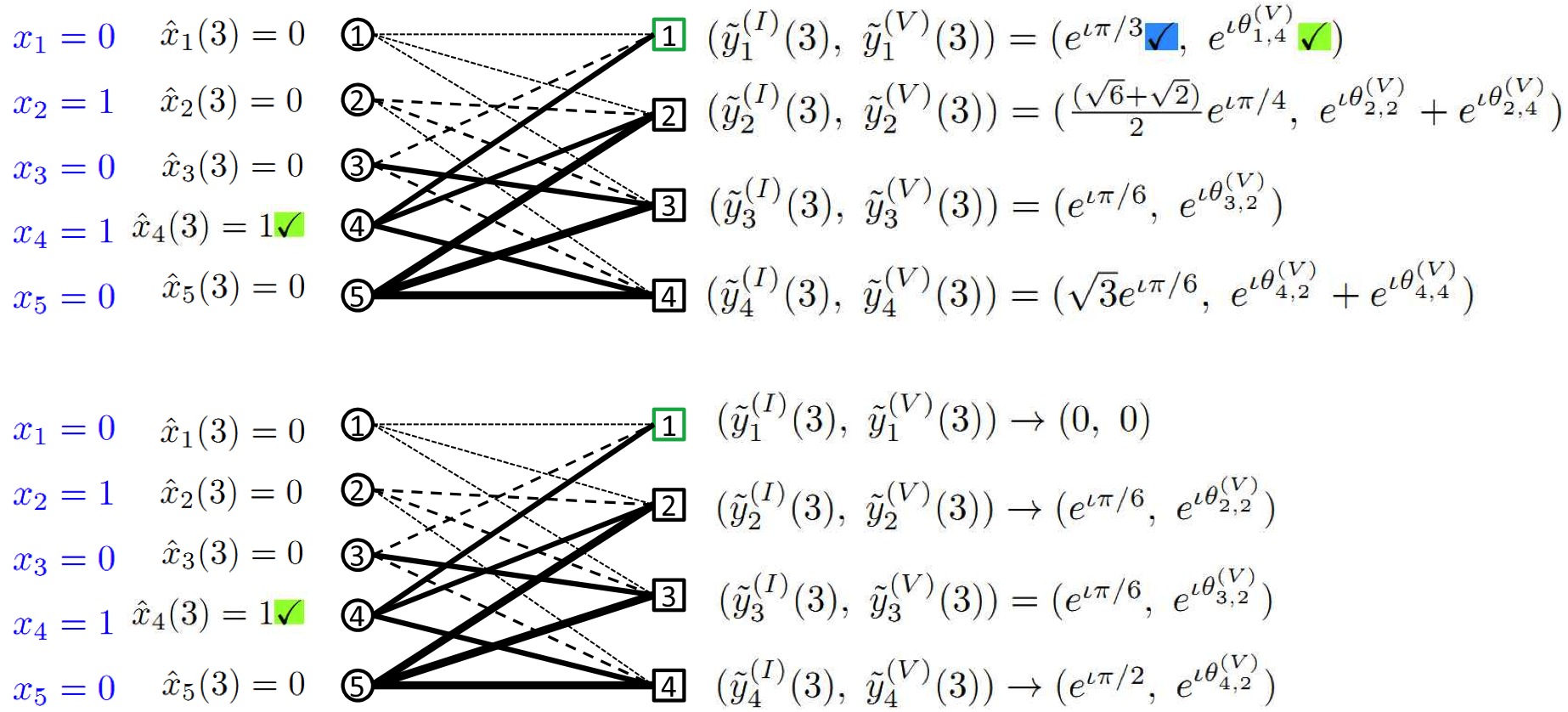} \caption{\underline{\it Leaf-Node List 3 (Passed identification, passed verification) and the first iteration}:
Now, suppose the decoder randomly picks the index $\yi=1$ from the
neighbourly set $\{1,2,3,4\}$ (note that $1$ \textit{is}
a $\cSx$-leaf node). In this case, the phase of the corresponding
gap vector identification observation $\tilde{y}_{1}^{(I)}$ equals
$\pi/3$. As can be seen from the matrix in Figure~\ref{fig:matrix},
for $\yi=1$ this corresponds to $a_{\yi,\xj}^{(I)}$ for $\xj=4$.
Hence the decoder makes a {}``correct identification'' of $\xj=4$,
and estimates (also correctly) that $\hat{x}_{4}$ equals the magnitude
of $\tilde{y}_{1}^{(I)}$, which equals $1$. On checking with the
verification entry, the decoder observes also that $1$ (the detected
value of $\hat{x}_{4}$) times the corresponding verification entry
$a_{1,4}^{(V)}$ equals $\tilde{y}_{1}^{(V)}$. Hence it declares that
$\yi=1$ is a leaf node. Similarly, we know that $3$ is a leaf node
too. Therefore, the leaf node set equals $\{1,3\}$. The entire process
of making a list of leaf node takes $\cO(\spar)$ number of steps.
Suppose in the first iteration, the decoder picks $\yi=1$. Hence
it updates the value of $\hat{x}_{4}$ to $1$, the neighbourly set
to $\{2,3,4\}$, the leaf node set to $\{2,3,4\}$ and $\tilde{\by}$
to the values shown (only the three indices $1$, $3$ and $4$ on
the right need to be changed). At this point, note that $\cSpx$ also
changes from $\{2,4\}$ to the singleton set $\{4\}$. This entire
iteration takes a constant number of steps.}

\label{fig:bip_fig4}
\end{figure*}

\begin{figure*}[p]
 \includegraphics[scale=0.55]{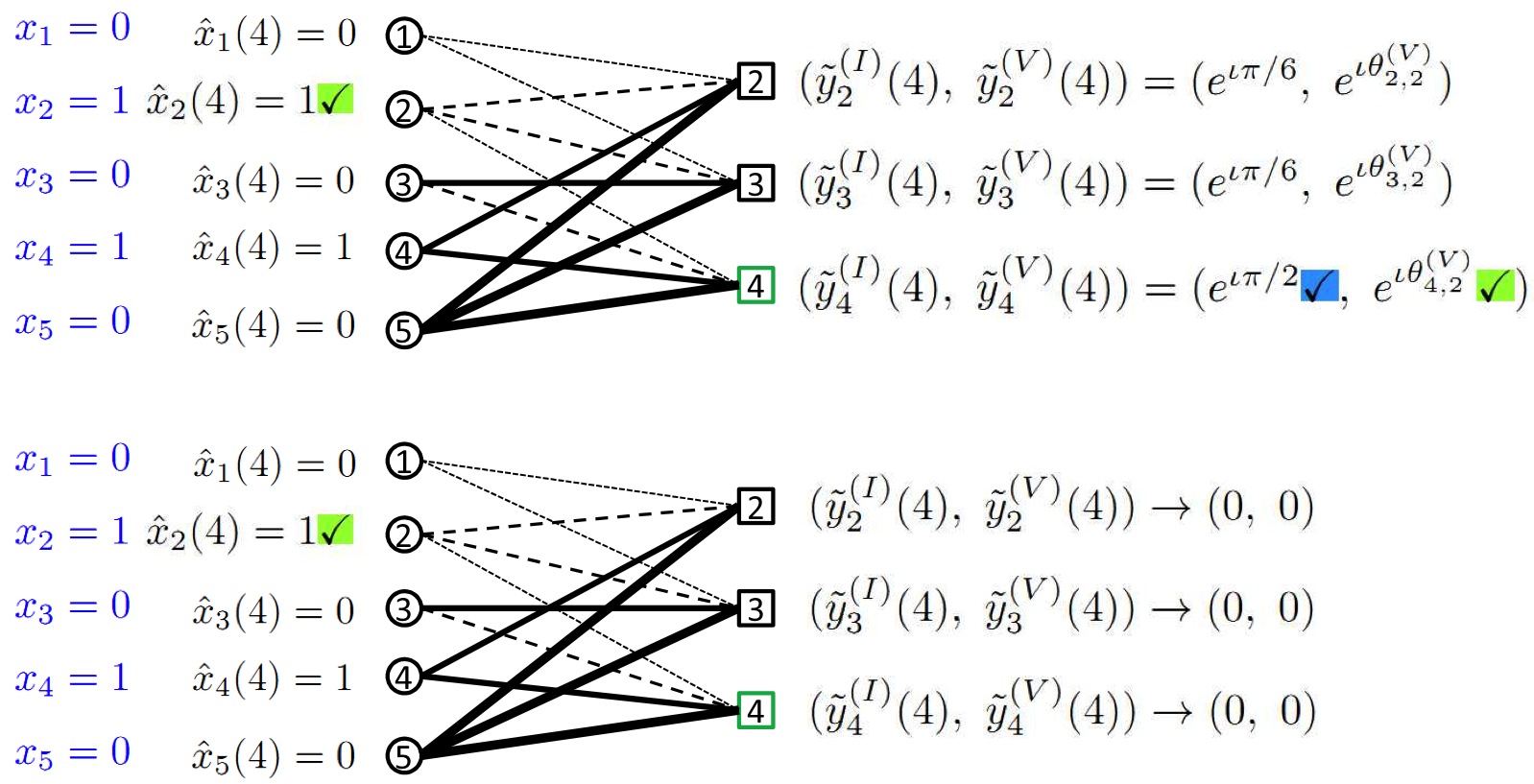} \caption{\underline{\it Second iteration and Termination}: In the second iteration,
the decoder randomly picks $\yi=4$ from the leaf node set $\{2,3,4\}$.
Recall that in the first iteration this choice of $\yi$ did not aid
in decoding. However, now that node $4$ on the right of $\graph$
has been {}``cleaned up'', it is now a leaf node for
$\cSpx$. This demonstrates the importance of not {}``throwing away''
information which seems useless at some point in time. Hence, analogously
to the process in Figure~\ref{fig:bip_fig4}, the decoder estimates
the value of $\hat{x}_{2}$ to $1$, updates the leaf node set to
the empty set, and $\tilde{\by}$ to the all-zero vector (all in a
constant number of steps). Since the gap vector is zero, this indicates
to the decoder that it should output $\hat{\bx}$ as its estimate
of $\bx$, and terminate. }

\label{fig:bip_fig5}
\end{figure*}

\subsubsection{Formal description of SHO-FA's reconstruction process}
{\label{subset:shofa_decode}}
Our algorithm proceeds iteratively, and has at
most $\spar$ overall number of iterations, with $\iter$ being the
variable indexing the iteration number.
\begin{enumerate}
\item \underline{\it Initialization}: We initialize by
setting the \textit{signal estimate vector}
$\hat{\bx}{(1)}$ to the all-zeros vector $0^{\samp}$, and the \textit{residual
measurement identification/verification vectors}
$\tilde{\by}^{(I)}(1)$ and $\tilde{\by}^{(V)}(1)$ to the decoder's
observations $\by^{(I)}$ and $\by^{(V)}$.
\item \underline{\it Leaf-Node List}: Let $\mathcal{L}(1)$,
the initial $\cSx$-leaf node set, be the set of indices $\yi$ which
are $\cSx$-leaf nodes. We generate the list via the following steps:

\begin{enumerate}
\item {\em Compute angles $\theta^{(I)}(\yi)$ and $\theta^{(V)}(\yi)$:}\label{step:computephase}
Let the \textit{identification and verification angles}
be defined respectively as the phases of the identification and verification
entries being considered for index $i$ (starting from 1), as follows:
\begin{eqnarray*}
\theta^{(I)}(\yi) & \triangleq & \angle\left(\tilde{y}_{i}^{(I)}(1)\right),\\
\theta^{(V)}(\yi) & \triangleq & \angle\left(\tilde{y}_{i}^{(V)}(1)\right).
\end{eqnarray*}
 Here $\angle(.)$ computes the phase of a complex number (up to $\cO(\max\{\log{\samp/\spar},\log(\spar))\})$
bits of precision)%
\footnote{Roughly, the former term guarantees that the identification
angle is calculated precisely enough, and the latter that the verification
angle is calculated precisely enough. }%
.
\item {\em Check if the current identification and verification
angles correspond to a valid and unique $x_{\xj}$:} For this, we
check at most two things (both calculations are done up to the precision
specified in the previous step).

\begin{enumerate}
\item First, we check if $\xj\triangleq\theta^{(I)}(i)(2\samp/\pi)$
is an integer, and the corresponding $\xj^{th}$ element of the $\yi^{th}$
row is non-zero. If so, we have {}``tentatively identified{''}
that the $\yi^{th}$ component of $\tilde{\by}$ is a leaf-node of
the currently unidentified non-zero components of $\bx$, and in particular
is connected to the $\xj^{th}$ node on the left, and the algorithm
proceeds to the next step below. If not, we simply increment $\yi$
by $1$ and return to Step (\ref{step:computephase}).
\item Next, we verify our estimate from the previous step.
If $a_{\yi,\xj}^{(V)}\tilde{y}_{i}^{(I)}/a_{\yi,\xj}^{(I)}=\tilde{y}_{i}^{(V)}$,
the verification test passes, and include $\yi$ in $\mathcal{L}(1)$.
If not, we simply increment $\yi$ by $1$ and return to Step (\ref{step:computephase}).
\end{enumerate}
\end{enumerate}
\item \underline{\it Operations in $\iter^{th}$ iteration:} {The $\iter^{th}$ decoding iteration accepts as its
input the $\iter^{th}$ signal estimate vector {$\hat{\bx}\ofiter$},
the $\iter^{th}$ leaf node set $\mathcal{L}(\iter)$, and the $\iter^{th}$
residual measurement identification/verification vectors $(\tilde{\by}^{(I)}\ofiter,\tilde{\by}^{(V)}\ofiter)$.
In $\cO(1)$ steps it outputs the $(\iter+1)^{th}$ signal estimate
vector {$\hat{\bx}\ofiterplusone$}, the $(\iter+1)^{th}$ leaf node
set $\mathcal{L}(\iter+1)$, and the $(\iter+1)^{th}$ residual measurement
identification/verification vectors $(\tilde{\by}^{(I)}\ofiterplusone,\tilde{\by}^{(V)}\ofiterplusone)$
after the performing the following steps sequentially (each of which
takes at most a constant number of atomic steps):}

\begin{enumerate}
\item {\em Pick a random $\yi\ofiter\in\mathcal{L}(\iter)$:}
The decoder picks an element $\yi\ofiter$ uniformly at random from the $\iter^{th}$
leaf-node list $\mathcal{L}(\iter)$.
\item {\em Compute angles $\theta^{(I)}({\iter})$ and
$\theta^{(V)}({\iter})$:}\label{step:computephase-1} Let the \textit{current
identification and verification angles} be defined
respectively as the phases of the residual identification and verification
entries being considered in that step, as follows:
\begin{eqnarray*}
\theta^{(I)}(\iter) & \triangleq & \angle\left(\tilde{y}_{i\ofiter}^{(I)}\ofiter\right),\\
\theta^{(V)}(\iter) & \triangleq & \angle\left(\tilde{y}_{i\ofiter}^{(V)}\ofiter\right).
\end{eqnarray*}

\item \noindent {\em Locate non-zero entry $\xj$ and
derive the value of $\hat{x}_{\xj\ofiter}\ofiter$:} For this, we do at most two things
(both calculations are done up to the precision specified in the previous
step).

\begin{enumerate}
\item First, we calculate $\xj\ofiter\triangleq\theta^{(I)}(\iter)(2\samp/\pi)$.
We have identified that the $\yi^{th}$ component of $\tilde{\by}$
is a leaf-node of the currently unidentified non-zero components of
$\bx$, and in particular is connected to the $\xj\ofiter^{th}$ node
on the left, and the algorithm proceeds to the next step below.
\item Next, we assign the value, $\tilde{y}_{i\ofiter}^{(I)}\ofiter/a_{\yi\ofiter,\xj\ofiter}^{(I)}=\tilde{y}_{i\ofiter}^{(V)}\ofiter/a_{\yi\ofiter,\xj\ofiter}^{(V)}$,
to $\hat{x}_{\xj\ofiter}\ofiter$ and proceeds the algorithm to the
next step below.
\end{enumerate}
\item {\em Update $\hat{\bx}\ofiterplusone$, $\mathcal{L}(t+1)$
,$\tilde{\by}^{(I)}\ofiterplusone$, and $\tilde{\by}^{(V)}\ofiterplusone$:}
In particular, at most $7$ components of each of these vectors need
to be updated. Specifically, $\hat{x}_{\xj\ofiter}\ofiterplusone$
equals $\tilde{y}_{i\ofiter}^{(I)}\ofiter/a_{\yi\ofiter,\xj\ofiter}^{(I)}$.
$\yi\ofiter$ is removed from the leaf node set $\mathcal{L}(t)$
and check whether the (at most six) neighbours of $\hat{x}_{\xj\ofiter}\ofiter$
become leaf node to get the leaf-node list $\mathcal{L}(t+1)$. And
finally (seven) values each of $\tilde{\by}^{(I)}\ofiterplusone$
and $\tilde{\by}^{(V)}\ofiterplusone$ are updated from those of $\tilde{\by}^{(I)}\ofiter$
and $\tilde{\by}^{(V)}\ofiter$ (those corresponding to the neighbours
of $\hat{x}_{\xj\ofiter}\ofiter$) by subtracting out $\hat{x}_{\xj\ofiter}\ofiter$
multiplied by the appropriate coefficients of $\bA$.
\end{enumerate}
\item {\underline{Termination:}} The algorithm stops when the
leaf node set is empty, and outputs the last $\hat{\bx}(\iter)$.
\end{enumerate}

\subsection{Decoding complexity}
\label{sec:noiselessiterations}
 We start by generating $\mathcal{L}(1)$, the initial list  of leaf nodes. 
For each node $i$, we calculate the identification and verification angles
(which takes $2$ operations), and then check if the identification and verification angles
correspond to a valid and unique $x_{j}$ (which takes $2$ operations). 
Therefore generating the initial list of leaf nodes takes $\cO(\spar)$ (to be precise
$4c\spar$) operations .

%For each $\iter$
%%=1,2,\ldots$, 
%let {$\cS\ofiter$}
%be the support of {$\bx-\hat{\bx}\ofiter$}. Note that {$\mathcal{D}\ofiter=N(\cS\ofiter)$}
%and {$\cSx=\cS(1)\supseteq\cS(2)\supseteq\ldots$} (assuming that the decoding process makes no errors).

%In iteration $\iter$, we pick a $\cS(t)$-leaf node which
%guarantees that we can recovering a new non-zero entry.

In iteration $\iter$, we decode a new non-zero entry $x_\xj$ of $\bx$ by picking a leaf node from $\mathcal{L}(\iter)$,
identifying the corresponding index $\xj$ and value $x_\xj$ (via $2$ arithmetic operations corresponding to the identification and verification steps respectively), and updating
$\mathcal{L}(\iter+1)$ (since $x_\xj$ is connected to $3$ nodes on the right, out of which one has already been decoded, this takes at most $4$ operations -- $2$ for identification and $2$ for verification), $\tilde{\by}^{(I)}\ofiterplusone$,
and $\tilde{\by}^{(V)}\ofiterplusone$ (similarly, this takes at most $8$ operations -- $4$ additions and $4$ multiplications).

Next we note that each iteration results in recovering a new non-zero coordinate of
$\bx$ (assuming no decoding errors, which it true with high probability as demonstrated in the next section). Hence the total number of iterations is at most $\spar$.

Hence the overall number of operations over all iterations is $\cO(\spar)$ (to be precise, at most $4c\spar+14\spar$).

\subsection{Correctness}\label{subsec:correctness}

Next, we show that $\hat{\bx}=\bx$ with high probability over $\bA$. To show
this, it suffices to show that each non-zero update to the estimate
$\hat{\bx}\ofiter$ sets a previously zero coordinate to the
correct value with sufficiently high probability.

Note that if $\yi\ofiter$ is a leaf node for {$\cS\ofiter$}, and
if all non-zero coordinates of $\hat{\bx}\ofiter$ are equal to the
corresponding coordinates in $\bx$, then the decoder correctly identifies
the parent node {$j\ofiter\in\cS\ofiter$} for the leaf node $\yi\ofiter$
as the unique coordinate that passes the phase identification and
verification checks.

{ Thus, the {$\iter^{th}$} iteration ends with an erroneous update
only if {
\[
\angle \left (\sum_{p\in N(\{\yi\ofiter\})}x_{p}e^{\iota\theta_{i(t),p}^{(I)}} \right )=\theta_{i(t),j(t)}^{(I)}
\]
} for some $\xj$ such that there are more than one non-zero terms
in the summation on the left.

{
\[
\angle \left (\sum_{p\in N(\{\yi\ofiter\})}x_{p}e^{\iota\theta_{i(t),p}^{(V)}} \right )=\theta_{i(t),j(t)}^{(V)}
\]
} {Since $\theta_{i(t),j(t)}^{(V)}$ is drawn uniformly
at random from $[1,2,\ldots,\pi/2]$ (with $\Omega(\log(\samp)+\preci) = \cO(\log(\spar))$ (say) bits
of precision), the probability that the second equality holds with
more than one non-zero term in the summation on the left is at most
$o(1/poly(\spar))$.} { The above analysis gives an upper bound on the
probability of incorrect update for a single iteration to be $o(1/poly(\spar))$.
Finally, as the total number of updates is at most $\spar$, by applying
a union bound over the updates, the probability of incorrect decoding
is bounded from above by $o(1/poly(\spar))$. }

\subsection{Remarks on the Reconstruction process for exactly $\spar$-sparse
signals}

We elaborate on these choices of entries of $\bA$ in the remarks
below, which also give intuition about the reconstruction process
outlined in Section~\ref{subset:shofa_decode}.

\begin{remark}\label{rem:ident} In fact, it is not critical that (\ref{eq:a_iden})
be used to assign the identification entries. As long as $\xj$ can
be {}``quickly{''} (computationally efficiently) identified
from the phases of $a_{\yi,\xj}^{(I)}$ (as outlined in Remark~\ref{rem:quickident}
below, and specified in more detail in Section~\ref{subset:shofa_decode}),
this suffices for our purpose. This is the primary reason we call
these entries identification entries.
\end{remark}
\begin{remark}\label{rem:quickident} The reason for the choice of phases
specified in (\ref{eq:a_iden}) is as follows. Suppose $\cSx$ corresponds
to the support (set of non-zero values) of $\bx$. Suppose $y_{\yi}$
corresponds to a $\cSx$-leaf node, then by definition {$y_{\yi}^{(I)}$}
equals $a_{\yi,\xj}^{(I)}x_{\xj}$ for some $\xj$ in $\{1,\ldots,\samp\}$
(if $y_{\yi}$ corresponds to a $\cSx$-\textit{non}-leaf node, then
in general {$y_{\yi}^{(I)}$} depends on two or more $x_{\xj}$).
But $x_{\xj}$ is a real number. Hence examining the phase of $y_{\yi}$
enables one to efficiently compute $\xj\pi/(2\samp)$, and hence $\xj$.
It \textit{also} allows one to recover the magnitude of $x_{\xj}$,
simply by computing the magnitude of $y_{\yi}$.\end{remark}

\begin{remark} The choice of phases specified in
(\ref{eq:a_iden}) divides the set of allowed phases (the interval
$[0,\pi/2]$) into $\samp$ distinct values. Two things are worth
noting about this choice.
\begin{enumerate}
\item We consider the interval $[0,\pi/2]$ rather than the full range $[0,2\pi)$
of possible phases since we wish to use the phase measurements to
\textit{also} recover the sign of $x_{\xj}$s. If the phase of $y_{\yi}$
falls within the interval $[0,\pi/2]$, then (still assuming that
$y_{\yi}$ corresponds to a $\cSx$-leaf node) $x_{\xj}$ must have
been positive. On the other hand, if the phase of $y_{\yi}$ falls
within the interval $[\pi,3\pi/2]$, then $x_{\xj}$ must have been
negative. (It can be directly verified that the phase of a $\cSx$-leaf
node $y_{\yi}$ can never be outside these two intervals -- this wastes
{roughly} half of the set of possible phases we could have used
for identification purposes, but it makes notation easier.
\item The choice in (\ref{eq:a_iden}) divides the interval $[0,\pi/2]$
into $\samp$ distinct values. However, in expectation over $\graph$
the actual number of non-zero entries in a row of $\bA$ is $\cO(\samp/\spar)$,
so on average one only needs to choose $\cO(\samp/\spar)$ distinct
phases in (\ref{eq:a_iden}), rather than the worst case $\samp$
number of values. This has the advantage that one only needs $\cO(\log(\samp/\spar))$
bits of precision to specify distinct phase values (and in fact we
claim that this is the level of precision required by our algorithm).
However, since we analyze only left-regular $\graph$, the degrees
of nodes on the right will in general vary stochastically around this
expected value. If $\spar$ is {}``somewhat large{''} (for instance
$\spar=\Omega(\samp)$), then the degrees will not be very tightly
concentrated around their mean. One way around this is to choose $\graph$
uniformly at random from the set of bipartite graphs with $\samp$
nodes (each of degree $\dgr$) on the left and $\meas$ nodes (each
of degree $\dgr\samp/\meas$) on the right. This would require a more
intricate proof of the $\cSpx$-expansion property defined in Property~\ref{prop:3}
and proved in Lemma~\ref{lem:expansion}. For the sake of brevity,
we omit this proof here.
\end{enumerate}\end{remark}
\begin{remark} In fact, the recent work of~\cite{GooM:11}
demonstrates an alternative analytical technique (bypassing the expansion
arguments outlined in this work), involving analysis of properties
of the {}``$2$-core'' of random hyper-graphs, that allows for a
tight characterization of the number of measurements required by SHO-FA
to reconstruct $\bx$ from $\by$ and $\bA$, rather than the somewhat
loose (though order-optimal) bounds presented in this work. Since
our focus in this work is a simple proof of order-optimality (rather
than the somewhat more intricate analysis required for the tight characterization)
we again omit this proof here.%
\footnote{We thank the anonymous reviewers who examined a previous version of
this work for pointing out the extremely relevant techniques of~\cite{GooM:11}
and~\cite{Pri:11} (though the problems considered in those works
were somewhat different).%
}\end{remark}

\subsection{{Other properties of SHO-FA}}
\subsubsection{{SHO-FA v.s. {}``$2$-core'' of random hyper-graphs}}
\label{2-core}
We reprise some concepts pertaining to the analysis of random hypergraphs (from~\cite{Molloy:2005}), which are relevant to our work.  \\
\noindent \underline{$2$-cores of $d$-uniform hypergraphs:}
A {\it $d$-uniform hypergraph with $m$ nodes
and $k$ hyperedges} is defined over a set of $m$ nodes, with each $d$-uniform hyperedge corresponding to a subset of the nodes of size exactly $d$.
The {\it $2$-core} is defined as the largest {\it sub-hypergraph} (subset of nodes, and hyperedges defined only on this subset) such that each node in this subset is contained in at least $2$ hyperedges on this subset.

A standard {}``peeling process'' that computationally efficiently finds the $2$-core
is as follows: while there exists a node with degree $1$ (connected to just one hyperedge), delete
it and the hyperedges containing it.

\noindent \underline{The relationship between $2$-cores in $d$-uniform hypergraphs and SHO-FA:}
As in~\cite{GooM:11} and other works, there exists bijection between $d$-uniform hypergraphs and $d$-left-regular bipartite graphs, which can be constructed as follows:
\begin{enumerate}[\hspace{0.3cm}(a)]
\item Each hyperedge in the hypergraph is mapped to a left node in the bipartite
graph,
\item Each node in the hypergraph is mapped to a right node in the
bipartite graph,
\item The edges leaving a left-node in the bipartite graph correspond to the nodes contained in the corresponding hyperedge of the hypergraph.
\end{enumerate}
%In fact, a $d$-left-regular bipartite graph also can be mapped to a $d$-uniform hypergraph.

Suppose the $d$-uniform hypergraph does not contain a $2$-core.
This means that, in each iteration of {}``peeling process'', we
can find a vertex with degree $1$, delete it and the corresponding
hyperedges. and continue the iterations until all the hyperedges are
deleted. Correspondingly, in the bipartite graph, we can find a leaf node
in each iteration, delete it and the corresponding left node and continue
the iterations until all left nodes are deleted. We note that 
the SHO-FA algorithm follows essentially the same process.
This implies that SHO-FA succeeds if and only if the $d$-uniform hypergraph contains a $2$-core. 
%our SHO-FA
%algorithm based on the corresponding bipartite graph succeeds. Otherwise,
%if the $d$-uniform hypergraph does exists a $2$-core which means the
%{}``peeling process'' will stuck at some iteration. Accordingly,
%our SHO-FA algorithm fails.

\noindent \underline{Existence of 2-cores in d-uniform hypergraphs and SHO-FA:}
We now reprise a nice result due to~\cite{Molloy:2005} that helps us tighten the results of Theorem~\ref{thm:main}.

{
\begin{theorem} \label{2core}(\cite{Molloy:2005}) Let
$d+l>4$ and $G$ be a $d$-uniform random hypergraph with $k$ hyperedges
and $m$ nodes. Then there exists a number $c_{d,l}^{*}$ (independent of $k$ and $m$ that is the threshold for the appearance
of an $l$-core in $G$. That is, for constant $c$ and $m\rightarrow\infty$: If $k/m=c<c_{d,l}^{*}$,
then $G$ has an empty $l$-core with probability $1-o(1)$;If $k/m=c>c_{d,l}^{*}$, then $G$ has an $l$-core of linear size with probability $1-o(1)$.
\end{theorem}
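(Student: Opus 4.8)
\noindent\textit{Proof sketch.} This is the classical $l$-core threshold theorem, and the plan is to analyse the \emph{peeling process}---repeatedly delete a node lying in at most $l-1$ hyperedges together with those hyperedges---whose terminal state is exactly the $l$-core. First I would replace the uniform model (each of the $k$ hyperedges an independent uniformly random $d$-subset of the $m$ nodes) by the asymptotically contiguous configuration / Poisson-cloning model in which node degrees are i.i.d.\ $\mathrm{Poisson}(dc)$; standard contiguity arguments transfer ``empty $l$-core'' and ``$l$-core of linear size'' between the two models. In the latter model the neighbourhood of a uniformly random node converges in the local weak sense to a Galton--Watson hypertree in which each node spawns $\mathrm{Poisson}(\mu)$ incident hyperedges (write $\mu:=dc$) and each hyperedge spawns $d-1$ further child nodes; the point of this reduction is that $l$-core membership becomes a limiting local functional.

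Next I would derive the governing recursion on this hypertree. Let $x$ denote the probability that a node reached through a hyperedge \emph{survives} peeling restricted to its pendant subtree. A hyperedge ``survives into'' that node iff all of its $d-1$ other endpoints survive (probability $x^{d-1}$), so by Poisson thinning the number of the node's other incident hyperedges that survive into it is $\mathrm{Poisson}(\mu x^{d-1})$, and the node survives iff this count is at least $l-1$. Hence $x$ must be a fixed point of $\phi_\mu(x)=\Pr[\mathrm{Poisson}(\mu x^{d-1})\ge l-1]$, and the $l$-core density is $\Pr[\mathrm{Poisson}(\mu (x^{*})^{d-1})\ge l]$ at the relevant fixed point $x^{*}$. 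With $h_l(\lambda):=\Pr[\mathrm{Poisson}(\lambda)\ge l-1]$ (so $h_l(0)=0$ for $l\ge 2$ and $h_l$ is strictly increasing to $1$), the equation $x=\phi_\mu(x)$ for $x\in(0,1)$ rearranges to $\mu=h_l^{-1}(x)/x^{d-1}$, so the threshold is
\[
d\, c_{d,l}^{*} \;=\; \min_{0<x<1}\ \frac{h_l^{-1}(x)}{x^{d-1}}.
\]
A short asymptotic analysis as $x\to 0$ and $x\to 1$---using $d+l>4$, which excludes the degenerate and continuous-transition cases (notably the graph $2$-core $(d,l)=(2,2)$, where this infimum is an unattained limit)---shows the minimum is attained at an interior point and is strictly positive; consequently for $\mu<d\,c_{d,l}^{*}$ the unique fixed point of $\phi_\mu$ in $[0,1]$ is $0$, while for $\mu>d\,c_{d,l}^{*}$ there is a largest fixed point $x^{*}(\mu)$ bounded away from $0$.

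For the subcritical regime $c<c_{d,l}^{*}$: since the largest fixed point is $x^{*}=0$, the expected $l$-core size is $(1+o(1))\,m\,\Pr[\mathrm{Poisson}(\mu (x^{*})^{d-1})\ge l]=o(m)$; combined with a first-moment count ruling out nonempty minimum-degree-$l$ sub-hypergraphs of each sublinear size (controlled by the vertex-to-hyperedge ratio forced by $d+l>4$), this gives an empty $l$-core with probability $1-o(1)$. Equivalently, Wormald's differential-equation method applied to the vector of residual degree-counts shows the fluid trajectory of the peeling process reaches the all-consumed state. For the supercritical regime $c>c_{d,l}^{*}$: I would show the $l$-core has size concentrated at $(1+o(1))\,m\,\Pr[\mathrm{Poisson}(\mu (x^{*}(\mu))^{d-1})\ge l]=\Theta(m)$ by a two-step localisation --- (i) run peeling only to a bounded depth $t$; the surviving set is a local functional, so its cardinality concentrates (bounded-difference inequality on the hyperedges) around its mean, which converges to the stated quantity as $t\to\infty$; (ii) show that completing the peeling from the depth-$t$ survivor set removes only $o(m)$ additional nodes whp. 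The matching upper bound is immediate: any sub-hypergraph violating the minimum-degree-$l$ condition contains a peelable node, so the $l$-core is contained in every depth-$t$ survivor set.

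The main obstacle is step (ii): passing from the genuinely local depth-$t$ peeling to the global peeling, \emph{uniformly} as $t\to\infty$, so that the lower bound on the core size matches the upper bound. This is exactly where the ``first-order'' nature of the transition encoded in $d+l>4$ is used --- the hypergraph left after depth-$t$ peeling is ``locally supercritical'' with no critical slowing-down, so the residual cascade of removals is a subcritical branching process of total size $o(m)$ whp. Making this rigorous requires either a second-moment estimate on the number of minimum-degree-$l$ sub-hypergraphs of each linear size, or a martingale analysis of the peeling exploration with an Azuma-type tail bound; this is the technical core of~\cite{Molloy:2005} (see also the configuration-model treatments of the $k$-core in the spirit of Janson--\L uczak and Riordan).
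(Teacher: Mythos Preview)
Your sketch is a reasonable outline of the standard approach to the $l$-core threshold theorem (peeling process, Poisson/configuration-model approximation, Galton--Watson fixed-point recursion, and concentration via either the differential-equation method or a martingale argument). However, the paper does not prove this statement at all: Theorem~\ref{2core} is quoted verbatim as a result of Molloy~\cite{Molloy:2005}, with no proof given or attempted. The paper simply invokes it, together with the explicit threshold constant $c_{3,2}^{*}=1/1.22$ from~\cite{GooM:11}, to tighten the measurement count in Theorem~\ref{thm:tighter}. So there is nothing in the paper to compare your argument against; your sketch is essentially a summary of what one would find in~\cite{Molloy:2005} itself.
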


Specifically, the results in~\cite[Theorem 1]{GooM:11} (which explicitly calculates some of the $c_{d,l}^{*}$) give that
for $l=2$ and $d=3$, $c_{3,2}^{*}=1/1.22$ with probability $1-\cO(1/\spar)$. This leads to the following theorem, that has tighter parameters than Theorem~\ref{thm:main}.

\begin{theorem} \label{thm:tighter} Let $\spar\leq\samp$. There exists a reconstruction algorithm SHO-FA for {$\bA\in\bbr^{\meas\times\samp}$ }with the following properties:
\begin{enumerate}
\item For every {$\bx\in\bbR^{\samp}$}, with probability $1-\cO(1/\spar)$ over the choice of $\bA$, SHO-FA produces a reconstruction $\hat{\bx}$ such that $||\bx-\hat{\bx}||_{1}/||\bx||_{1}\leq2^{-P}$
\item  The number of measurements  $\meas=2\cnst\spar$, $\forall$ $\cnst>1.22$.
\item The {number} of steps required by SHO-FA is $4\cnst\spar+14\spar$.
\item The {number} of bitwise arithmetic operations required by SHO-FA is $\cO(\spar(\log{\samp}+P))$.
\end{enumerate}
\end{theorem}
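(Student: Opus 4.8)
Theorem~\ref{thm:tighter} needs no new algorithm: we run exactly the SHO-FA construction and decoder of Sections~\ref{subsec:graph_prop}--\ref{subset:shofa_decode}, but take the left-degree in Property~\ref{prop:1} to be $\dgr=3$ (the construction works verbatim for any $\dgr\ge 3$), and we replace the expansion-based termination guarantee coming from Lemmas~\ref{lem:expansion}--\ref{lem:manyleafs} by the sharp $2$-core threshold of Theorem~\ref{2core}. Thus the argument is essentially an assembly job: identify the decoder's state with a hypergraph peeling process, quote the threshold, and reuse the existing correctness and complexity bookkeeping.

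First I would fix the support $\cSx$ of the (arbitrary) $\spar$-sparse $\bx$ and restrict attention to the left nodes in $\cSx$ together with the $\meas'=\cnst\spar$ right nodes. Under the bijection of Section~\ref{2-core} this is a uniformly random $3$-uniform hypergraph with $k=\spar$ hyperedges (one per non-zero coordinate) and $m=\meas'=\cnst\spar$ nodes, so $k/m=1/\cnst$. The key observation is that SHO-FA's identification loop \emph{is} the peeling process on this hypergraph: picking an $\cSx$-leaf node $\yi$, reading off its unique neighbour $\xj\in\cSx$, recovering $x_\xj$, and cancelling its contribution is exactly deleting a degree-$1$ node and the hyperedge it meets. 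Hence, assuming exact arithmetic and no faulty verification, SHO-FA recovers all of $\bx$ if and only if the peeling process empties the hypergraph, i.e. iff its $2$-core is empty. Now apply Theorem~\ref{2core} with $d=3$, $l=2$ (so $d+l=5>4$) and the value $c^{\ast}_{3,2}=1/1.22$ computed in \cite[Theorem~1]{GooM:11}: when $k/m=1/\cnst<1/1.22$, equivalently $\cnst>1.22$, the $2$-core is empty with probability $1-o(1)$, in fact $1-\cO(1/\spar)$ by the quantitative version in \cite[Theorem~1]{GooM:11} --- which is the bound we need so the union bounds below do not dominate.

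The remainder is bookkeeping against results already in the excerpt. Conditioned on an empty $2$-core, the correctness argument of Section~\ref{subsec:correctness} applies unchanged: a genuine $\cSx$-leaf node always passes both the phase-identification and the verification check, so it is never discarded, and the only way an iteration errs is a false verification, which occurs with probability $o(1/\mathrm{poly}(\spar))$ over the uniformly random verification phases; a union bound over the $\le\spar$ updates keeps the overall failure probability $\cO(1/\spar)$, which with the finite-precision analysis of Section~\ref{subsec:bits} yields item~1 with relative error $2^{-P}$. Item~2 is immediate since $\bA$ has $\meas=2\meas'$ rows, so $\meas=2\cnst\spar$ for any $\cnst>1.22$. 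Item~3 is precisely the count carried out in Section~\ref{sec:noiselessiterations}, which is already written for left-degree $3$: building $\mathcal{L}(1)$ costs $4$ operations at each of the $\cnst\spar$ right nodes, for $4\cnst\spar$; and each of the at most $\spar$ iterations costs $2$ for identify/verify, at most $4$ to update $\mathcal{L}$, and at most $8$ to update $\tilde{\by}^{(I)},\tilde{\by}^{(V)}$, for $14\spar$ more --- total $4\cnst\spar+14\spar$. Item~4 follows because every one of these operations acts on $\cO(\log\samp+P)$-bit numbers.

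The genuinely hard content lies entirely outside SHO-FA, in Theorem~\ref{2core}: establishing that a $2$-core threshold exists for random $3$-uniform hypergraphs and locating it at $1/1.22$. That is the delicate part --- usually via a differential-equation/branching-process analysis of the peeling dynamics or a second-moment argument --- and is exactly why we quote it from \cite{Molloy:2005,GooM:11} rather than reprove it. On our side the only point needing care is the faithfulness of the reduction: one must check that the verification rows, which have no hypergraph counterpart, only ever \emph{reject} spurious leaf candidates and can never cause the peeling process to stall before its $2$-core is reached. Granting that, Theorem~\ref{thm:tighter} follows.
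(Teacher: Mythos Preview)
Your proposal is correct and follows essentially the same route as the paper: reduce SHO-FA's decoding to the peeling process on the induced $3$-uniform hypergraph, invoke the $2$-core threshold $c^\ast_{3,2}=1/1.22$ from Theorem~\ref{2core} and \cite{GooM:11} in place of the expansion Lemmas~\ref{lem:expansion}--\ref{lem:manyleafs}, and then reuse the correctness and step-count bookkeeping of Sections~\ref{subsec:correctness}--\ref{sec:noiselessiterations} verbatim. Your explicit note that the verification rows can only reject spurious leaves (and hence cannot stall peeling short of the $2$-core) is a point the paper leaves implicit, so if anything your write-up is slightly more careful.
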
 

\begin{remark} We note that by carefully choosing a {\it degree distribution} (for instance the ``enhanced harmonic degree distribution''~\cite{LubyMS:2006}) rather than constant degree $\dgr$  for left nodes in the bipartite graph, the constant $c$ can be made to approach to $1$, while still ensuring that $2$-cores do not occur with sufficiently high probability. This can further reduce the number of measurements $\meas$. However, this can come at a cost in terms of computational complexity, since the complexity of SHO-FA depends on the average degree of nodes on the left, and this is not constant for the enhanced harmonic degree distribution. \\
\end{remark}
\begin{remark}To get the parameters corresponding to $\meas$ of Theorem~\ref{thm:tighter} we have further reduced (compared to Theorem~\ref{thm:main}) the number of measurements by a factor of $2$ by combining the identification measurements and verification measurements. This follows directly from the observation that we can construct the phase of each non-zero matrix entry via first ``structured'' bits ($\log(\samp/\spar)$ corresponding to the bits we would have chosen corresponding in an identification measurement), followed by ``random''  bits ($\log(\spar)$ corresponding to the bits we would have chosen corresponding in an identification measurement). Hence a single measurement can serve the role of both identification and measurement. \\
\end{remark}
\begin{remark} The results in Theorem~\ref{2core} also indicate a ``phase transition'' on the emergence of $l$-cores. These results explain our simulation
results, presented in Appendix~\ref{app:sim}. \\
%One is sharp transition of the probability of successful reconstruction. Another is that the number of measurements needed is much less than our theoretical results. \\
\end{remark}
}

\subsubsection{Database queries}
A useful property of our construction of the matrix $\bA$ is that
any desired signal component $x_{\xj}$ can be reconstructed in constant time with
a constant probability from measurement vector $\by=\bA\bx$. 
The following Lemma makes this precise. The proof
follows from a simple probabilistic argument and is included in Appendix~\ref{app:query}.
\begin{lemma}\label{lem:query}Let $\bx$ be $\spar$-sparse. Let
$\xj\in\{1,2,\ldots,\samp\}$ and let $\bA\in\mathbb{C}^{\cnst\spar\times\samp}$
be randomly drawn according to SHO-FA. Then, there exists an algorithm
$\mathcal{A}$ such that given inputs $(\xj,\by)$, $\mathcal{A}$
produces an output $\hat{x}_{\xj}$ with probability at least $(1-(\dgr/\cnst)^{\dgr})$
such that $\hat{x}_{\xj}=x_{\xj}$ with probability $(1-o(1/poly(\spar)))$.
\end{lemma}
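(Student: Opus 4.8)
The plan is to exploit the database-query structure: since the query index $\xj$ is given, it suffices to examine only the $\dgr$ right-nodes adjacent to $\xj$ in $\graph$ and decide whether any one of them ``sees'' $\xj$ in isolation. Call a right-node $\yi\in N(\xj)$ \emph{$\xj$-clean} if no support coordinate other than (possibly) $\xj$ is adjacent to it, i.e.\ $N(\yi)\cap(\cSx\setminus\{\xj\})=\emptyset$. For such a $\yi$ we have $y_\yi^{(I)}=a_{\yi,\xj}^{(I)}x_\xj$ and $y_\yi^{(V)}=a_{\yi,\xj}^{(V)}x_\xj$ exactly (trivially so also when $x_\xj=0$), so the candidate $c_\yi:=y_\yi^{(I)}/a_{\yi,\xj}^{(I)}$ equals $x_\xj$ and obeys the verification identity $c_\yi\,a_{\yi,\xj}^{(V)}=y_\yi^{(V)}$. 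The algorithm $\cA$ is then immediate: read the list $N(\xj)$ (of size $\dgr$) off the stored structure of $\bA$; for each $\yi\in N(\xj)$ form $c_\yi$ and test the verification identity; output $c_\yi$ for the first $\yi$ that passes, and declare failure if none does. This takes $\cO(\dgr)=\cO(1)$ time.

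First I would bound the probability that $\cA$ produces an output. By the remark above this is at least the probability that some $\yi\in N(\xj)$ is $\xj$-clean, so it suffices to show $\Pr[\text{every }\yi\in N(\xj)\text{ is non-clean}]\le(\dgr/\cnst)^{\dgr}$. I would prove this by a union bound over a \emph{witness map}. Condition on $N(\xj)=\{\yi_1,\dots,\yi_\dgr\}$. If every $\yi_\ell$ is non-clean, then for each $\ell$ we may pick a support coordinate $p_\ell\in\cSx\setminus\{\xj\}$ adjacent to $\yi_\ell$, defining a (not necessarily injective) map $f:\{1,\dots,\dgr\}\to\cSx\setminus\{\xj\}$. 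There are at most $\spar^{\dgr}$ such maps; for a fixed $f$ with fibres $S_p:=f^{-1}(p)$, using that distinct left nodes choose their neighbourhoods independently and that $\Pr[T\subseteq N(p)]\le(\dgr/\meas')^{|T|}$ whenever $|T|\le\dgr$ (each of the $|T|$ factors $(\dgr-i)/(\meas'-i)$ being at most $\dgr/\meas'$),
\[
\Pr\big[\forall\ell:\ f(\ell)\sim\yi_\ell \mid N(\xj)\big]=\prod_p\Pr\big[\{\yi_\ell:\ell\in S_p\}\subseteq N(p)\big]\le\prod_p\left(\frac{\dgr}{\meas'}\right)^{|S_p|}=\left(\frac{\dgr}{\meas'}\right)^{\dgr}.
\]
Summing over $f$ and substituting $\meas'=\cnst\spar$ gives $\Pr[\text{every }\yi\in N(\xj)\text{ non-clean}]\le\spar^{\dgr}(\dgr/\meas')^{\dgr}=(\dgr/\cnst)^{\dgr}$, which is the claimed bound (non-trivial precisely when $\cnst>\dgr$, the regime relevant to queries).

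Next I would show that, conditioned on producing an output, $\hat x_\xj=x_\xj$ with probability $1-o(1/\mathrm{poly}(\spar))$; this is the verification argument of Section~\ref{subsec:correctness} again. If $\cA$ outputs $c_\yi$ for an $\xj$-clean $\yi$ then $c_\yi=x_\xj$ as above. If it outputs $c_\yi$ for a non-clean $\yi$, the verification identity forces $\sum_{p\in N(\yi)\cap\cSx}x_p\big(a_{\yi,p}^{(V)}-a_{\yi,\xj}^{(V)}a_{\yi,p}^{(I)}/a_{\yi,\xj}^{(I)}\big)=0$, a sum containing at least one nonzero term with $p\neq\xj$ (since $\yi$ is non-clean and support coordinates are nonzero). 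Fixing all verification phases except $\theta_{\yi,p_0}^{(V)}$ for one such $p_0$, this reduces to the event that the single uniformly chosen phase $\theta_{\yi,p_0}^{(V)}$ --- drawn on a sufficiently fine grid, independently of $\bx$ and of all identification entries --- equals one prescribed value, which has probability $o(1/\mathrm{poly}(\spar))$ exactly as in Section~\ref{subsec:correctness}; a union bound over the $\dgr=\cO(1)$ nodes of $N(\xj)$ finishes this step.

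The only point needing real care is the witness-map union bound: the ``non-clean'' events for the $\dgr$ neighbours of $\xj$ are positively correlated (a single support coordinate can block several of them at once), so one cannot simply multiply per-neighbour failure probabilities, and it is the $\spar^{\dgr}$-fold union over witness maps that recovers the clean exponent $(\dgr/\cnst)^{\dgr}$. A secondary, routine caveat --- handled as elsewhere in the paper --- is that in the exact uniform left-$\dgr$-regular bipartite model the neighbourhoods of distinct left nodes are only approximately independent, introducing negligible corrections; equivalently one may work in the model where each left node picks its $\dgr$ neighbours i.i.d.
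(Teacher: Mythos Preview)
Your proposal is correct and follows essentially the same approach as the paper: examine the $\dgr$ neighbours of $\xj$, output the value read off any $\xj$-clean node (equivalently, an $\cSx$-leaf or zero node), and control false acceptances via the verification-phase argument of Section~\ref{subsec:correctness}. The only substantive difference is in how the bound $(\dgr/\cnst)^{\dgr}$ is obtained: the paper splits into the cases $x_\xj=0$ and $x_\xj\neq0$ and (implicitly) conditions on the neighbourhoods of the support vertices in $\cSx\setminus\{\xj\}$ to bound $\Pr[N(\xj)\subseteq N(\cSx\setminus\{\xj\})]$ directly, whereas you treat both cases uniformly and replace this conditioning step with your witness-map union bound---both routes yield the same exponent, with yours making the handling of correlations more explicit.
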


\subsubsection{SHO-FA for sparse vectors in different bases}

\label{diff_bases}
In the setting of SHO-FA we consider $\spar$-sparse
input vectors $\bx$. In fact, we also can deal with the case that
$\bx$ is sparse in a certain basis that is known \textit{a priori}
to the decoder~%
\footnote{For example, {}``smooth'' signals are sparse in the Fourier basis
and {}``piecewise smooth'' signals are sparse in wavelet bases.%
}, say $\Psi$, which means that $\bx=\Psi\mathbf{w}$ where $\mathbf{w}$
is a $\spar$-sparse vector. Specifically, in this case we write the
measurement vector as $\by=B\bx$, where $B=A\Psi^{-1}$. Then, $\by=A\Psi^{-1}\Psi\mathbf{w}=\bA\mathbf{w}$,
where $\bA$ is chosen on the structure of the $\graph$ and $\mathbf{w}$
is a $\spar$-sparse vector. We can then apply SHO-FA to reconstruct
$\mathbf{w}$ and consequently $\bx=\Psi\mathbf{w}$. What has been
discussed here covers the case where $\bx$ is sparse itself, for
which we can simply take $\Psi=I$ and $\bx=\mathbf{w}$.

\subsubsection{Information-theoretically order-optimal encoding/update complexity}
%\sh{Add this section}
The sparse structure of $\bA$ also ensures (``for free'') order-optimal encoding and update complexity of the measurement process.

We first note that for any measurement matrix $\bA$ that has a ``high probability'' (over $\bA$) of correctly reconstructing arbitrary $\bx$,
%present a sketch of an information-theoretical lower bound
%of 
there is a lower bound of $\Omega(\samp)$ on the computational complexity of computing $\bA\bx$.
%operations holds for any algorithm that ensures
%a successful reconstruction of $\spar$-sparse vector in $\bbr^{\samp}$.
This is because if the matrix does not have at least $\Omega(\samp)$ non-zero entries, then with probability $\Omega(1)$ (over $\bA$) at least one non-zero entry of $\bx$ will ``not be measured'' by $\bA$, and hence cannot be reconstructed.
In the SHO-FA algorithm, since the degree of each left-node in $\graph$ is a constant ($\dgr$), 
the encoding complexity of our measurement process corresponds to $\dgr\samp$ multiplications and additions.

Analogously, the complexity of updating $\by$ if a single entry of $\bx$ changes is at most $\dgr$ for SHO-FA, which matches the natural lower bound of $1$ up to a constant ($\dgr$) factor.
%we have information-theoretically order optimal updating complexity again because that the number of non-zero entries in each column of measurement matrix equals $3$.

%To see this is true, consider the number of non-zero entries of each
%column of the measurement matrix $\bA$ and mainly focus on the multiplication
%operations. We use $\bA_{\xj}$ to denote the $\xj$th column of measurement
%matrix $\bA$. Then, $\mathbf{y}=\sum_{j=1}^{n}A_{j}x_{j}$.
%If there exists a column, say $\bA_{j'}$, is zero vector, then the
%measurement does not contain any information of entry $\bx_{j'}$.
%Since in this work we consider the worst-case $\bx$, decoder can
%not reconstruct $\bx_{j'}$ based on $\by$. Therefore, each column
%has at least one non-zero entry and the number of multiplications
%are at least $\samp$. Hence, $\Omega(\samp)$ is a lower bound on
%encoding operations.

%The following arguments show that the total number of encoding operations
%we generate our measurement matrix $\bA$ by a bipartite graph with
%left-regularity equals $3$, the number of non-zero entries in each column
%of $\bA$ equals $3$. Therefore, the number of multiplication operations
%with respect to each entry of $\bx$ equals $3$ and the total number
%of multiplication equals $3\samp$. And the number of addition operations
%equals $3\samp-3$. Hence, the total number of operations we need
%is $6\samp-3$.

%The reason why we can achieve the lower bound is mainly because of
%the {}``sparse'' matrix we design. For a {}``dense'' matrix, if
%just consider the multiplication, the number of operations is $m\samp$.

\subsubsection{Information-theoretically optimal number of bits}
\label{subsec:bits} We recall that the reconstruction goal for SHO-FA
is to reconstruct $\bx$ up to relative error $2^{-\preci}$. That
is,
\[
||\bx-\hbx||_{1}/||\bx||_{1}\leq2^{-\preci}.
\]

We first present a sketch of an information-theoretic lower bound
of $\Omega\left(\spar\left(\preci+\log\samp\right)\right)$ bits holds
for any algorithm that outputs a $\spar$-sparse vector that achieves
this goal with high probability.

To see this is true, consider the case where the locations of $\spar$
non-zero entries in $\bx$ are chosen uniformly at random among all
the $\samp$, entries and the value of each non-zero entry is chosen
uniformly at random from the set $\{1,\ldots,2^{\preci}\}$. Then
recovering even the support requires at least $\log\left(2^{\spar P}\binom{\samp}{\spar})\right)$
bits, which is $\Omega(\spar\log(\samp/\spar))$.%
\footnote{Stirling's approximation(c.f.~\cite[Chapter~1]{Mackay:03}) is used
in bounding from below the combinatorial term ${\samp \choose \spar}$.%
} Also, at least a constant fraction of the $\spar$ non-zero entries
of $\bx$ must be be correctly estimated to guarantee the desired
relative error. Hence $\Omega\left(\spar\left(\preci+\log\samp\right)\right)$
is a lower bound on the measurement bit-complexity. %\begin{equation}\label{orderoptimal}
%    \log(2^{\spar P}\binom{\samp}{\spar})\geq \spar P+\spar\log(\samp/\spar)= .
%\end{equation}

%If $\by$, $\bx$ represent each non-zero entry of $\by$
%the bits required for each entry in A are log(n) + log(k),

The following arguments show that the total number of bits used in
our algorithm is information-theoretically order-optimal for any $\spar=\cO(\samp^{1-\Delta})$
(for any $\Delta>0$). First, to represent each non-zero entry of
$\bx$, we need to use arithmetic of $\Omega(\preci+\log(\spar))$
bit precision. Here the $\preci$ term is so as to attain the required
relative error of reconstruction, and the $\log(\spar)$ term is to
take into account the error induced by finite-precision arithmetic
(say, for instance, by floating point numbers) in $\cO(\spar)$ iterations
(each involving a constant number of finite-precision additions and
unit-magnitude multiplications). Second, for each identification
step, we need to use $\Omega(\log(\samp)+\log(\spar))$ bit-precision
arithmetic. Here the $\log(\samp)$ term is so that the identification
measurements can uniquely specify the locations of non-zero entries
of $\bx$. The $\log(\spar)$ term is again to take into account the
error induced in $\cO(\spar)$ iterations. Third, for each verification
step, the number of bits we use is (say) $3\log(\spar)$. Here, by the
Schwartz-Zippel Lemma~\cite{Sch:80,Zip:79}, $2\log(\spar)$ bit-precision
arithmetic guarantees that each verification step is valid with probability
at least $1-1/\spar^{2}$ -- a union bound over all $\cO(\spar)$
verification steps guarantees that all verification steps are correct
with probability at least $1-\cO(1/\spar)$ (this probability of success can be directly amplified by using higher precision arithmetic). Therefore, the total
number of bits needed by SHO-FA $\cO(\spar(\log(\samp)+\preci))$.
As claimed, this matches, up to a constant factor, the lower bound
sketched above.

\subsubsection{Universality}

\label{subsec:univ}
\begin{figure}[t]
\begin{centering}
\includegraphics[scale=0.45]{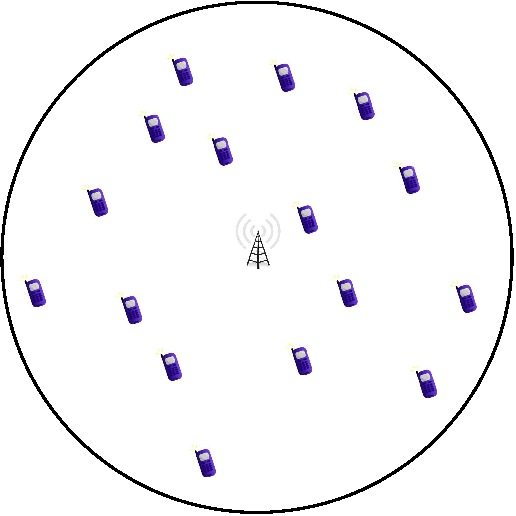}
\par\end{centering}

\caption{\label{fig:base_station} \underline{\it An example of a physical system that ``naturally'' generates ensembles of sparse $\bA$ that SHO-FA can use}:
Suppose there are $\spar$ cellphones (out of a set of $\samp$ possible
different cellphones in the whole world) in a certain neighbourhood
that has a base-station. The goal is for the $\xj$-th cellphone to
communicate its information ($x_{\xj}$) to the base-station at least
once per \textit{frame} of $\cnst\spar$ consecutive time-slots. The
challenge is to do so in a distributed manner, since multiple cellphones
transmitting at the same time $\yi$ would result in a linear combination
$y_{\yi}=\sum_{\xj}a_{\yi\xj}x_{\xj}$ of their transmissions reaching
the base-station, where $a_{\yi\xj}$ corresponds to the channel gain
from the cellphone $\xj$ to the base-station during time-slot $\yi$. With high probability, such $a_{\yi\xj}$ satisfying the properties we require for our algorithm to work -- ``sparsity'' (relatively few transmitters transmit at any given time) and ``distinctness'' (with high probability the channel gains from different transmitters are different).
Each cellphone transmits $x_{\xj}$ to the base-station a constant
($\dgr$) number of times in each frame -- the set of $\dgr$ time-slots
in each frame that cellphone $\xj$ transmits in is chosen by $\xj$
uniformly at random from the set of all ${{\cnst\spar} \choose {\dgr}}$
sets of slots. }
\end{figure}

While the ensemble of matrices $\{\bA\}$ we present above has carefully
chosen identification entries, and all the non-zero verification entries
have unit magnitude, as noted in Remark~\ref{rem:universality}, the implicit ideas underlying
SHO-FA work for significantly more general ensembles of matrices.
In particular, Property~\ref{prop:1} only requires that the graph
$\graph$ underling $\bA$ be {}``sparse'', with a constant number
of non-zero entries per column. Property~\ref{prop:2} only requires
that each non-zero entry in each row be distinct -- which is guaranteed
with high probability, for instance, if each entry is chosen $i.i.d$
from any distribution with sufficiently large support. An example
of such a scenario is shown in Figure~\ref{fig:base_station}. This
naturally motivates the application of SHO-FA to a variety of scenarios,
for $\it{e.g.}$, neighbor discovery in wireless communication~\cite{Guo:10}.

\section{Approximate reconstruction in the presence of noise}\label{sec:noisy}
A prominent aspect of the design presented in the previous section is that it relies on exact determination of all the phases as well as magnitudes of the measurement vector $\bA\bx$. In practice, however, we often desire that the measurement and {reconstruction} be robust to corruption both before and and during measurements. In this section, we show that our design may be modified slightly such that with a suitable decoding procedure, the reconstruction is robust to such "noise".

We consider the following setup. Let $\bx\in\bbr^\samp$ be a $\spar$-sparse signal with support $\cSx=\{\xj:x_{\xj}\neq 0\}$. Let $\bz\in\bbr^\samp$ have support $\{1,2,\ldots,\samp\}\setminus\cSx$ with each $z_\xj$ distributed according to a Gaussian distribution with mean $0$ and variance $\sigma_z^2$. Denote the measurement matrix by $\bA\in\bbc^{\meas\times\samp}$ and the measurement vector by $\by\in\bbc^\meas$. Let $\be\in\bbc^{\meas}$ be the measurement noise with $e_\yi$ distributed as a Complex Gaussian with mean $0$ and variance $\sigma_e^2$ along each axis. $\by$ is related to the signal as $$\by=\bA(\bx+\bz)+\be.$$

We first propose a design procedure for $\bA$ satisfying the following properties. 
\begin{theorem}\label{thm:approximate} Let $\spar=\cO(\samp^{1-\Delta})$ for some $\Delta>0$. There exists a reconstruction algorithm {SHO-FA} for $\bA\in\bbc^{\meas\times\samp}$ such that
\begin{enumerate}[($i$)]
\item $\meas=\cnst\spar$
\item {SHO-FA} consists of at most $4\spar$ iterations, each involving a constant number of arithmetic operations with a precision of $\cO(\log{\samp})$ bits.
\item With probability $1-o(1/\spar)$ over the design of $\bA$ and randomness in $\be$ and $\bz$,
$$||\hat{\bx}-\bx||_1\leq C\left(||\bz||_1+\sqrt{\log{\spar}}||\be||_1\right)$$ for some $C=C(\sigma_z,\sigma_e)>0$.
\end{enumerate}
\end{theorem}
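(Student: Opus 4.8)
The plan is to adapt the noiseless SHO-FA architecture and its analysis from Section~\ref{sec:noiseless}, making three changes to accommodate the source tail $\bz$ and measurement noise $\be$. First, I would modify the measurement design: instead of a single identification row and a single verification row per right-node, I would use a constant number (say $\cO(1)$, enough to drive the failure probability per node below $o(1/\text{poly}(\spar))$) of independent ``verification-type'' rows with random unit-magnitude complex weights, together with the structured identification row. The key idea is that a genuine $\cSx$-leaf node $\yi$ now produces $\tilde y_\yi^{(I)} = a_{\yi,\xj}^{(I)} x_\xj + (\text{noise contribution})$, where the noise contribution is the sum of $a_{\yi,p}^{(I)} z_p$ over the (at most $\cO(\samp/\spar)$) non-tail neighbours plus $e_\yi$; since $\bz$ and $\be$ are Gaussian, this perturbation has controlled variance, so the phase-based identification of $\xj$ still succeeds as long as $|x_\xj|$ is not too small relative to the local noise level. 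I would introduce a thresholding step: only right-nodes whose residual magnitude exceeds a threshold $\tau$ (chosen $\propto \sigma_z \sqrt{\samp/\spar} + \sigma_e$, up to a $\sqrt{\log\spar}$ factor from a union bound over $\cO(\spar)$ nodes) are declared leaves, which quarantines the tail energy.

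Second, I would analyze the iterative ``peeling''. The graph-theoretic skeleton is identical to the noiseless case: by Lemmas~\ref{lem:expansion} and~\ref{lem:manyleafs}, with high probability over $\graph$ a constant fraction of neighbours of every $\cSpx \subseteq \cSx$ are leaves, so the process recovers all $\spar$ coordinates in $\cO(\spar)$ iterations (the statement allows $4\spar$), each with $\cO(1)$ arithmetic operations at $\cO(\log\samp)$ bit precision — giving items $(i)$ and $(ii)$ directly. The subtlety is \emph{error propagation}: when I subtract off the estimate $\hat x_\xj = \tilde y_\yi^{(I)}/a_{\yi,\xj}^{(I)}$, the estimate carries an additive error $\delta_\xj$ coming from the tail/measurement noise at node $\yi$. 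I need to argue this error does not amplify across iterations. Because each $\hat x_\xj$ is estimated from a \emph{single} leaf measurement (not from a residual that already accumulated previous errors — the contribution of already-decoded coordinates is subtracted \emph{exactly}, since their values are known once estimated), the errors $\delta_\xj$ are essentially independent across $\xj$, each of magnitude $\cO(\sigma_z\sqrt{\samp/\spar} + \sigma_e)$ with high probability. Summing over the $\spar$ recovered coordinates and invoking concentration for sums of (conditionally) Gaussian-like quantities yields
\[
\|\hat\bx - \bx\|_1 \;\leq\; \sum_{\xj\in\cSx} |\delta_\xj| \;\leq\; C\!\left(\|\bz\|_1 + \sqrt{\log\spar}\,\|\be\|_1\right)
\]
with probability $1-o(1/\spar)$, where the $\|\bz\|_1$ term absorbs the per-node tail sums (each node sees $\cO(\samp/\spar)$ tail entries, and there are $\cO(\spar)$ nodes, so the total tail energy seen is $\cO(\|\bz\|_1)$ after the appropriate Cauchy–Schwarz / Gaussian-sum bookkeeping) and the $\sqrt{\log\spar}$ on the $\be$ term comes from the threshold/union-bound over $\cO(\spar)$ nodes.

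Third, I would handle the verification step: with noise present, an exact equality check $a^{(V)}\tilde y^{(I)}/a^{(I)} = \tilde y^{(V)}$ is replaced by an approximate check $|a^{(V)}\tilde y^{(I)}/a^{(I)} - \tilde y^{(V)}| \leq \tau'$ for a slightly inflated tolerance $\tau'$. A false identification (a non-leaf node masquerading as a leaf, or a leaf attributed to the wrong $\xj$) would have to pass \emph{all} $\cO(1)$ independent random verification rows; since the random phases are drawn independently of $\bx$, $\bz$, $\be$, an anti-concentration argument (Schwartz–Zippel / Gaussian anti-concentration, as in Section~\ref{subsec:bits}) bounds this by $o(1/\text{poly}(\spar))$ per node, and a union bound over the $\cO(\spar)$ decoding steps keeps the total failure probability $o(1/\spar)$. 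The main obstacle I anticipate is the error-propagation argument: one must be careful that the decoding \emph{order} (which is data-dependent and correlated with the noise realization) does not secretly correlate the errors $\delta_\xj$ in a way that breaks the concentration bound. I would address this by conditioning on the graph $\graph$ and on the \emph{set} of recovered coordinates, showing the residual-measurement errors at the moment each leaf is processed are bounded \emph{deterministically} in terms of the fixed vectors $\bz$ and $\be$ (not in terms of random decoding choices), so that the final $\ell_1$ bound is a sum of $\cO(\spar)$ terms each controlled pointwise — sidestepping the need for a martingale argument entirely. The $\ell_1 < \ell_1$ guarantee stated here is weaker than the $\ell_2<\ell_2$ guarantee promised in the abstract for SHO-FA-NO; I would note that the sharper constant $(1+\epsilon')$ and the $\ell_2$ version require the more delicate noise-propagation analysis referenced in the discussion of~\cite{Pri:11}, and defer that refinement.
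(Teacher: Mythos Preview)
Your proposal has two genuine gaps that would prevent the argument from going through.

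\textbf{Gap 1: Identification under phase noise.} You retain ``the structured identification row'' from the noiseless design, in which the phase of $a_{\yi,\xj}^{(I)}$ encodes $\xj$ among all $\samp$ possibilities with spacing $\pi/(2\samp)$. But the noise at a leaf node $\yi$ has magnitude $\Theta(\sigma_z\sqrt{\samp/\spar}+\sigma_e)$ (the tail contribution is a sum of $\sim\samp/\spar$ i.i.d.\ Gaussians), so by Lemma~\ref{lem:phasenoise} the phase perturbation is $\Theta\bigl(\spar(\sigma_z\sqrt{\samp/\spar}+\sigma_e)/\delta\bigr)$ when you threshold at $|x_\xj|\geq\delta/\spar$. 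For this to be $o(1/\samp)$ --- which is what a single $\samp$-valued phase quantizer requires --- you would need $\delta$ to scale polynomially in $\samp$, and then the truncation loss $\delta$ in~\eqref{eq:deltatriangle} swamps the claimed bound. The paper's fix is a different encoding: write $\xj$ in base $|\bbg|\approx\samp^{1/\steps}$ for a \emph{constant} $\steps$, and use $\steps$ separate identification rows, the $\step$-th of which encodes only the digit $\digit_\step(\xj)$. Each phase now has only $|\bbg|$ possible values with spacing $\Theta(\samp^{-1/\steps})$, so the tolerable phase noise need only be $o(\samp^{-1/\steps})$ --- a condition compatible with $\delta=\cO(\samp\sigma_z)$ and hence with~\eqref{eq:deltactriteria}. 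Your proposal does not contain this digit-by-digit identification idea, and without it the identification step simply fails under any nontrivial noise.

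\textbf{Gap 2: Error propagation.} You write that ``the contribution of already-decoded coordinates is subtracted \emph{exactly}, since their values are known once estimated,'' and conclude that the errors $\delta_\xj$ are essentially independent. But what is subtracted is $\hat x_{\xj'}$, not $x_{\xj'}$: the residual at a later leaf $\yi$ therefore carries terms $a_{\yi,\xj'}(x_{\xj'}-\hat x_{\xj'})$ for every previously-decoded neighbour $\xj'$ of $\yi$, so errors propagate along paths in the restriction $\graph_\bx$ of $\graph$ to $\cSx$ and can compound with path length. The paper controls this not by a deterministic pointwise bound as you suggest, but by invoking the component structure of sparse random hypergraphs (Lemma~\ref{lem:components}, drawn from~\cite{KarL:02,Pri:11}): with probability $1-o(1/\spar)$ every connected component of $\graph_\bx$ has size $D_\bx(p)=\cO(\log\spar)$ and $E[(D_\bx(p))^2]=\cO(1)$. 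The per-coordinate error is then bounded by $(D_\bx(p))^2$ times the single-step noise (see~\eqref{eq:xnoise}), and summing over components yields~\eqref{eq:errorbound}. Your proposed resolution --- bounding residuals ``deterministically in terms of the fixed vectors $\bz$ and $\be$'' independent of the decoding order --- does not supply this structural input, and without a bound on component sizes the accumulated error could in principle be $\spar$ times a single-step error, which is far too large.
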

We present a ``simple'' proof of the above theorem in Sections~\ref{sec:new ideas} to~\ref{app:recons}. In Theorem~\ref{thm:approximate2}, we outline an analysis (based on the work of~\cite{Pri:11}) that leads to a tighter characterization of the constant factors in the parameters of Theorem~\ref{thm:approximate}.

Recall that in the exactly $\spar$-sparse case, the decoding in $\iter$-th iteration relies on first finding {an $\cS\ofiter$-leaf node}, then decoding the corresponding signal coordinate and updating the undecoded measurements. In this procedure, it is critical that each iteration operates with low reconstruction errors as an error in an earlier iteration can propagate and cause potentially catastrophic errors. In general, one of the following events may result in any iteration ending with a decoded signal value that is far from the true signal value:

\begin{enumerate}[(a)]
\item The decoder picks an index outside the set $\{\yi:(\bA\bx)_{\yi}\neq 0\}$, but in the set $\{\yi:(\bA(\bx+\bz)+\be)_{\yi}\neq 0\}\}$.
\item The decoder picks an index within the set $\{\yi:(\bA\bx)_{\yi}\neq 0\}$ that is also a leaf for $\mathcal{S}$ with parent node $\xj$, but the presence of noise results in  the decoder identifying (and verifying) a node $\xj'\neq\xj$ as the parent and, subsequently, incorrectly decoding the signal at $\xj'$.
\item The decoder picks an index within the set $\{\yi:(\bA\bx)_{\yi}\neq 0\}$ that is not a leaf for $\mathcal{S}$, but the presence of noise results in  the decoder identifying (and verifying) a node $\xj$ as the parent and, subsequently, incorrectly decoding the signal at $\xj$.
\item The decoder picks an index within the set $\{\yi:(\bA\bx)_{\yi}\neq 0\}$ that is a leaf for $\mathcal{S}$ with parent node $\xj$, which it also identifies (and verifies) correctly, but the presence of noise introduces a small error in decoding the signal value. This error may also propagate to the next iteration and act as "noise" for the next iteration.
\end{enumerate}

To overcome these hurdles, our design takes the noise statistics into account to ensure that each iteration is resilient to noise with a high probability. This achieved through several new ideas that are presented in the following for ease of exposition. Next, we perform a careful analysis of the corresponding decoding algorithm and show that under certain regularity conditions, the overall failure probability can be made arbitrarily small to output a  reconstruction that is robust to noise. Key to this analysis is bounding the effect of propagation of estimation error as the decoder steps through the iterations. \footnote{For simplicity, the analysis presented here relies only on an upper bound on the length of the path through which the estimation error introduced in any iteration can propagate. This bound follows from known results on size of largest components in sparse hypergraphs~\cite{KarL:02}. We note, however, that a tighter analysis that relies on a finer characterization of the interaction between the size of these components and the contribution to total estimation error may lead to better bounds on the overall estimation error. Indeed, as shown in~\cite{Pri:11}, such an analysis enables us to achieve a tighter reconstruction guarantee of the form $||\bx-\hat{\bx}||_1=\cO(||\bz||_1+||\be||_1)$ }
\subsection{Key ideas}\label{sec:newideas}

\subsubsection{Truncated reconstruction}\label{sec:truncation} We observe that in the presence of noise, it is unlikely that signal values whose magnitudes are comparable to that of the noise values can be successfully recovered. Thus, it is futile for the decoder to try to reconstruct these values as long as the overall penalty in $l_1$-norm is not high. The following argument shows that this is indeed the case. Let
\begin{equation}\cS_{\delta}(\bx)=\{\xj:|x_{\xj}|<\delta/\spar\}.
\label{eq:delta_level}
\end{equation}
and let $\bx_{\cS_\delta}$ be the vector defined as $$(x_{\cS_\delta})_\xj=\left\{\begin{array}{ll}0&\ \xj\notin\cS_\delta(\bx)\\ x_{\xj}&\ \xj\in\cS_\delta(\bx).\end{array}\right.$$
Similarly, define $\bx_{\cS_\delta^c}$ which has non-zero entries only within the set $\cSx\setminus\cS_{\delta}(\bx)$.
The following sequence of inequalities shows that the total $l_1$ norm of $\bx_{S_\delta}$ is small:
\begin{eqnarray}
||\bx_{\cS_\delta}||_1&=&\sum_{\xj\in\cS_\delta(\bx)}|x_\xj|\nonumber\\
&\leq & |\cS_{\delta}(\bx)|\frac{\delta}{\spar}\nonumber\\
&\leq &|\cSx|\frac{\delta}{\spar}\nonumber\\
&=&\delta.\label{eq:deltabound}
\end{eqnarray}

Further, as an application of triangle inequality and the bound in~\eqref{eq:deltabound}, it follows that
\begin{eqnarray}
||\hat{\bx}-\bx||_1&=&||\hat{\bx}-\bx_{\cS_\delta^c}-\bx_{\cS_\delta}||_1\nonumber\\
&\leq & ||\hat{\bx}-\bx_{\cS^c_\delta}||_1+||\bx_{\cS_\delta}||_1\nonumber\\
&\leq &||\hat{\bx}-\bx_{\cS^c_\delta}||_1+\delta\label{eq:deltatriangle}
\end{eqnarray}

Keeping the above in mind, we rephrase our reconstruction objective to satisfy the following criterion with a high probability:
\begin{eqnarray}
||\hat{\bx}-\bx_{\cS^c_\delta}||_1&\leq& C_1(||\bz||_1+\spar^2||\be||_1),\label{eq:criterion1}
%||\bx_{S_\delta}||_1&\leq &  C_2(||\bz||_1+||\be||_1)\label{eq:criterion2}
\end{eqnarray}
while simultaneously ensuring that our choice of parameter $\delta$ satisfies \begin{equation}\label{eq:deltactriteria}\delta<C_2||\bz||_1
\end{equation}
for some $C_2$, with a high probability.

\subsubsection{Phase quantization}

In the noisy setting, even when $\yi$ is a leaf node for $\cSx$,
the phase of $y_{\yi}$ may differ from the phase assigned by the
measurement. This is geometrically shown in Figure~\ref{fig:maxnoise}
for a measurement matrix $\bA'$. To overcome this, we modify our
decoding algorithm to work with \textquotedbl{}quantized\textquotedbl{}
phases, rather than the actual received phases. The idea behind this
is that if $\yi$ is a leaf node for $\cSx$, then quantizing the
phase to one of the values allowed by the measurement identifies the
correct phase with a high probability. The following lemma facilitates
this simplification.

\begin{lemma}[{\em Almost bounded phase noise}]\label{lem:phasenoise}
Let $\bx,\bz\in\bbr^{\samp}$ with $|x_{\xj}|>\delta/\spar$ for each
$\xj$. Let $\bA'\in\bbc^{\meas'\times\samp}$ be a complex valued
measurement matrix with the underlying graph $\graph$. Let $\yi$
be a leaf node for $\cSx$. Let $\Delta\theta_{\yi}=|\angle y_{\yi}-\angle(\bA'\bx)_{\yi}|$.
Then, for every $\alpha>0$,
\[
E_{\bz,\be}\large(\Delta\theta_{\yi}\large)\leq\sqrt{\frac{2\pi\spar^{2}(\dgr\samp\sigma_{z}^{2}/\cnst\spar+\sigma_{e}^{2})}{\delta^{2}}}
\]
 and
\[
\Pr_{\bz,\be}\bigg(\Delta\theta_{\yi}>\alpha E_{\bz,\be}\large(\Delta\theta_{\yi}\large)\bigg)<\frac{1}{2}e^{-(\alpha^{2}/2\pi)}.
\]
 \end{lemma}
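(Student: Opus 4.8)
The plan is to isolate the ``noise part'' of the measurement at the leaf node $\yi$ and to control the phase it can create. Write $y_{\yi}=(\bA'\bx)_{\yi}+w_{\yi}$, where $w_{\yi}:=(\bA'\bz)_{\yi}+e_{\yi}$. Since $\yi$ is a leaf node for $\cSx$, the signal part equals $a'_{\yi,\xj}x_{\xj}$ for the unique neighbour $\xj\in\cSx$, hence is a fixed complex number of modulus $|x_{\xj}|>\delta/\spar$ (the nonzero entries of $\bA'$ have unit modulus). Moreover $\bz$ vanishes on $\cSx$ and $\yi$ has only $\xj$ as a neighbour in $\cSx$, so $w_{\yi}=e_{\yi}+\sum_{p\in N(\yi)\setminus\{\xj\}}a'_{\yi,p}z_{p}$ is a sum of at most $d_{\yi}-1$ independent mean-zero Gaussians (each a real $z_{p}$ scaled by a unit-modulus phase) plus the independent complex Gaussian $e_{\yi}$, where $d_{\yi}$ is the degree of the right node $\yi$. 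Thus $w_{\yi}$ is itself a mean-zero complex Gaussian with $E_{\bz,\be}|w_{\yi}|^{2}=(d_{\yi}-1)\sigma_{z}^{2}+\sigma_{e}^{2}$; using that $d_{\yi}=\dgr\samp/(\cnst\spar)$ (this holds exactly for the right-regular variant of $\graph$ noted earlier, and up to constants with high probability over $\graph$ in general), this is at most $\sigma_{\yi}^{2}:=\dgr\samp\sigma_{z}^{2}/(\cnst\spar)+\sigma_{e}^{2}$.

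The second ingredient is a purely geometric bound relating $\Delta\theta_{\yi}$ to $|w_{\yi}|$. Setting $\zeta_{\yi}=w_{\yi}/(\bA'\bx)_{\yi}$ we have $\Delta\theta_{\yi}=|\angle(1+\zeta_{\yi})|$, and elementary planar geometry gives $|\angle(1+\zeta_{\yi})|\leq\arcsin|\zeta_{\yi}|\leq\tfrac{\pi}{2}|\zeta_{\yi}|$ whenever $|\zeta_{\yi}|\leq1$, while $\Delta\theta_{\yi}\leq\pi$ always; since $|(\bA'\bx)_{\yi}|>\delta/\spar$ these combine to the global pointwise bound $\Delta\theta_{\yi}\leq\pi\spar|w_{\yi}|/\delta$. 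The expectation estimate then follows from Jensen's inequality, $E_{\bz,\be}(\Delta\theta_{\yi})\leq\tfrac{\pi\spar}{\delta}E|w_{\yi}|\leq\tfrac{\pi\spar}{\delta}\big(E|w_{\yi}|^{2}\big)^{1/2}\leq\tfrac{\pi\spar}{\delta}\sigma_{\yi}$, which is the claimed bound up to the numerical constant; the stated $\sqrt{2\pi}$ in place of $\pi$ is obtained by replacing the crude step $\arcsin|\zeta|\leq\tfrac{\pi}{2}|\zeta|$ with the first-order relation $\Delta\theta_{\yi}\approx|\mathrm{Im}\,\zeta_{\yi}|$, valid on the overwhelmingly likely event $|w_{\yi}|\ll|x_{\xj}|$, together with the half-normal identity $E|G|=\sqrt{2/\pi}\,\sigma$ for a real Gaussian of standard deviation $\sigma$.

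For the tail bound, $|w_{\yi}|$ is the modulus of a mean-zero complex Gaussian, so $\Pr(|w_{\yi}|>t)\leq e^{-t^{2}/(2\sigma_{\yi}^{2})}$ (the largest eigenvalue of $\mathrm{Cov}(w_{\yi})$ is at most $E|w_{\yi}|^{2}=\sigma_{\yi}^{2}$, and $|w_{\yi}|^{2}$ is stochastically dominated by that eigenvalue times a $\chi_{2}^{2}$). Combined with $\Delta\theta_{\yi}\leq\pi\spar|w_{\yi}|/\delta$ and with the (tight-up-to-constant) expectation bound above — so that the threshold $\alpha E_{\bz,\be}(\Delta\theta_{\yi})$ translates into a lower bound on $|w_{\yi}|$ of order $\alpha\sigma_{\yi}$ — this yields $\Pr_{\bz,\be}\big(\Delta\theta_{\yi}>\alpha E_{\bz,\be}(\Delta\theta_{\yi})\big)<\tfrac{1}{2}e^{-\alpha^{2}/(2\pi)}$.

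I expect the main obstacle to be that $y\mapsto\angle y$ is discontinuous at the origin: when $w_{\yi}$ is comparable to $(\bA'\bx)_{\yi}$ the induced phase shift can be arbitrary, and one must show that this event is negligible. This is precisely where the hypothesis $|x_{\xj}|>\delta/\spar$ enters — it keeps $(\bA'\bx)_{\yi}$ bounded away from $0$, so the bad event is only an exponentially small Gaussian tail — and it is this careful accounting of the bad event (together with the first-order expansion and the half-normal moment identity) that justifies the word ``Almost'' in the lemma. A secondary technical point is the anisotropy of $w_{\yi}$: each source-tail summand $a'_{\yi,p}z_{p}$ is a degenerate rank-one $2$-dimensional Gaussian, so the passage from the second moment to a sub-Gaussian tail for $|w_{\yi}|$ must go through the largest eigenvalue of the covariance rather than pretending $w_{\yi}$ is isotropic; bounding that eigenvalue crudely by $E|w_{\yi}|^{2}$ keeps the constants as stated.
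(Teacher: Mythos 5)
Your proposal follows essentially the same route as the paper's proof: decompose $y_{\yi}$ into the leaf's signal term (of modulus at least $\delta/\spar$) plus the Gaussian noise term $w_{\yi}=(\bA'\bz)_{\yi}+e_{\yi}$ of variance at most $\dgr\samp\sigma_{z}^{2}/\cnst\spar+\sigma_{e}^{2}$, apply the pointwise geometric bound $\Delta\theta_{\yi}\leq\pi\spar|w_{\yi}|/\delta$, and conclude via the expected modulus and a Gaussian tail bound. Your explicit case analysis for the phase bound and your remark on the anisotropy of $w_{\yi}$ are actually slightly more careful than the paper's one-line "straightforward geometric argument," and your observation that plain Jensen only yields the constant $\pi$ rather than $\sqrt{2\pi}$ is fair — the paper reaches $\sqrt{2\pi}$ by treating $|w_{\yi}|$ as a one-dimensional half-normal, which is no more rigorous than the first-order expansion you sketch.
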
 
 \begin{proof}See Appendix~\ref{app:phasenoise}.
 \end{proof}
 For a desired error probability $\epsilon'$, the above lemma stipulates
that it suffices to let $\alpha=\sqrt{2\pi\log(1/2\epsilon')}$. We examine
the effect of phase noise in more detail in {Appendix}~\ref{sec:noisyerror}.
\subsubsection{Repeated measurements}Our algorithm works by performing a series of $\steps\geq 1$ identification and verification measurements in each iteration instead of a single measurement of each type as done in the exactly $\spar$-sparse case. The idea behind this is that, in the presence of noise, even though a single set of identification and verification measurements cannot exactly identify the coordinate $\xj$ from the observed $y_\yi$, it helps us narrows down the set of coordinates $\xj$ that can possibly contribute to give the observed phase. Performing measurements repeatedly, each time with a different measurement matrix, helps us identify a single $\xj$ with a high probability.

We implement the above idea by first mapping each $\xj\in\{1,2,\ldots,\samp\}$ to its $\steps$-digit representation in base $\bbg=\{0,1,\ldots\lceil \samp^{1/\steps}-1\rceil\}$. For each $\xj\in\{1,2,\ldots,\samp\}$, let $\digit(\xj)=(\digit_1(\xj),\digit_2(\xj),\ldots,\digit_\steps(\xj))$ be the $\steps$-digit representation of $\xj$. Next, perform one pair of identification and verification measurements (and corresponding phase reconstructions), each of which is intended to distinguish exactly one of the digits. In our construction, we only need a constant number of such phase measurements per iteration. See Fig~\ref{fig:repeated} for an illustrating example.

\begin{figure}
\begin{center}\includegraphics[height=120pt]{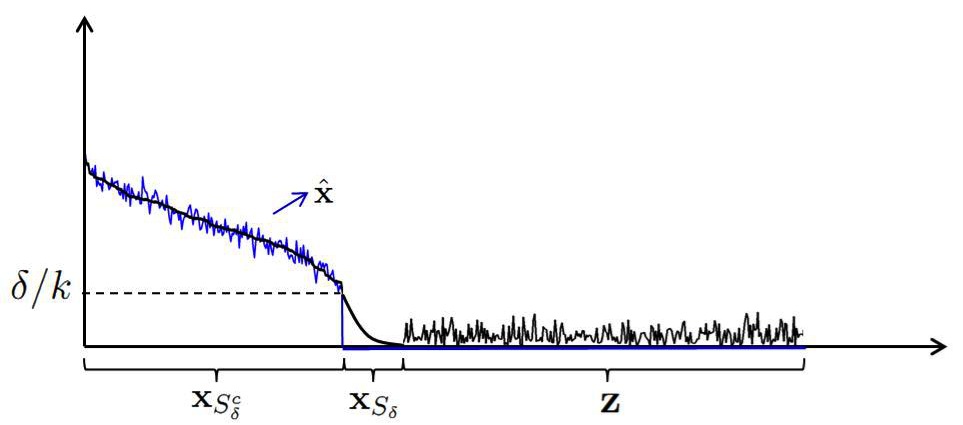}
\end{center}
\caption{The black curve corresponds to the magnitudes of $\bx+\bz$ (for ease of visual presentation, the components of $\bx$ have been sorted in decreasing order of magnitude and placed in the first $\spar$ components of the signal, but the components of $\bz$ are unsorted. The blue curve corresponds to our reconstruction $\hat{\bx}$ of $\bx$. Note that we only attempt to reconstruct components of $\bx$ that are ``sufficiently large" (that is, we make no guarantees about correct reconstruction of components of $\bx$ in ${\cS}_\delta(\bx)$, {\it, i.e}, those components of $\bx$ that are smaller than some ``threshold" $\delta/\spar$. Here $\delta$ is a parameter of code-design to be specified later. As shown in Section~\ref{sec:truncation}, as long as $\delta$ is not ``too large", this relaxation does not violate our relaxed reconstruction criteria~\eqref{eq:criterion1}.}
\end{figure}

\begin{figure}
  \centering
  \subfloat[Maximum phase displacement occurs when the contribution due to noise, i.e., $(\bA\bz)_\yi+e_\yi$ is orthogonal to the measurement $\by_\yi$]{\label{fig:maxnoise}\includegraphics[width=0.3\textwidth]{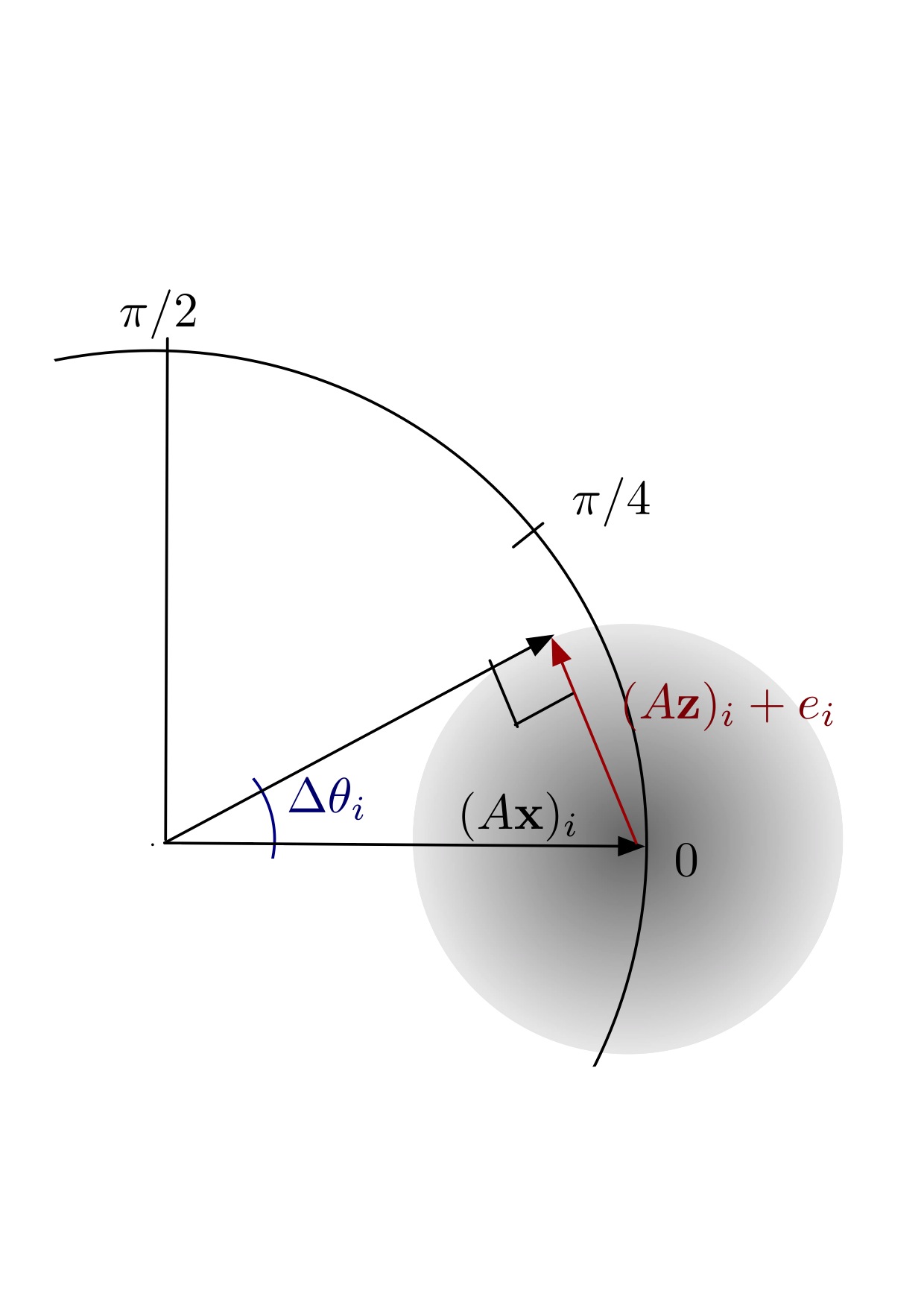}} \qquad\subfloat[Maximum magnitude displacement takes place when the contribution due to noise is aligned with $(\bA\bx)_\yi$ ]{\label{fig:magnitudebound}\includegraphics[width=0.3\textwidth]{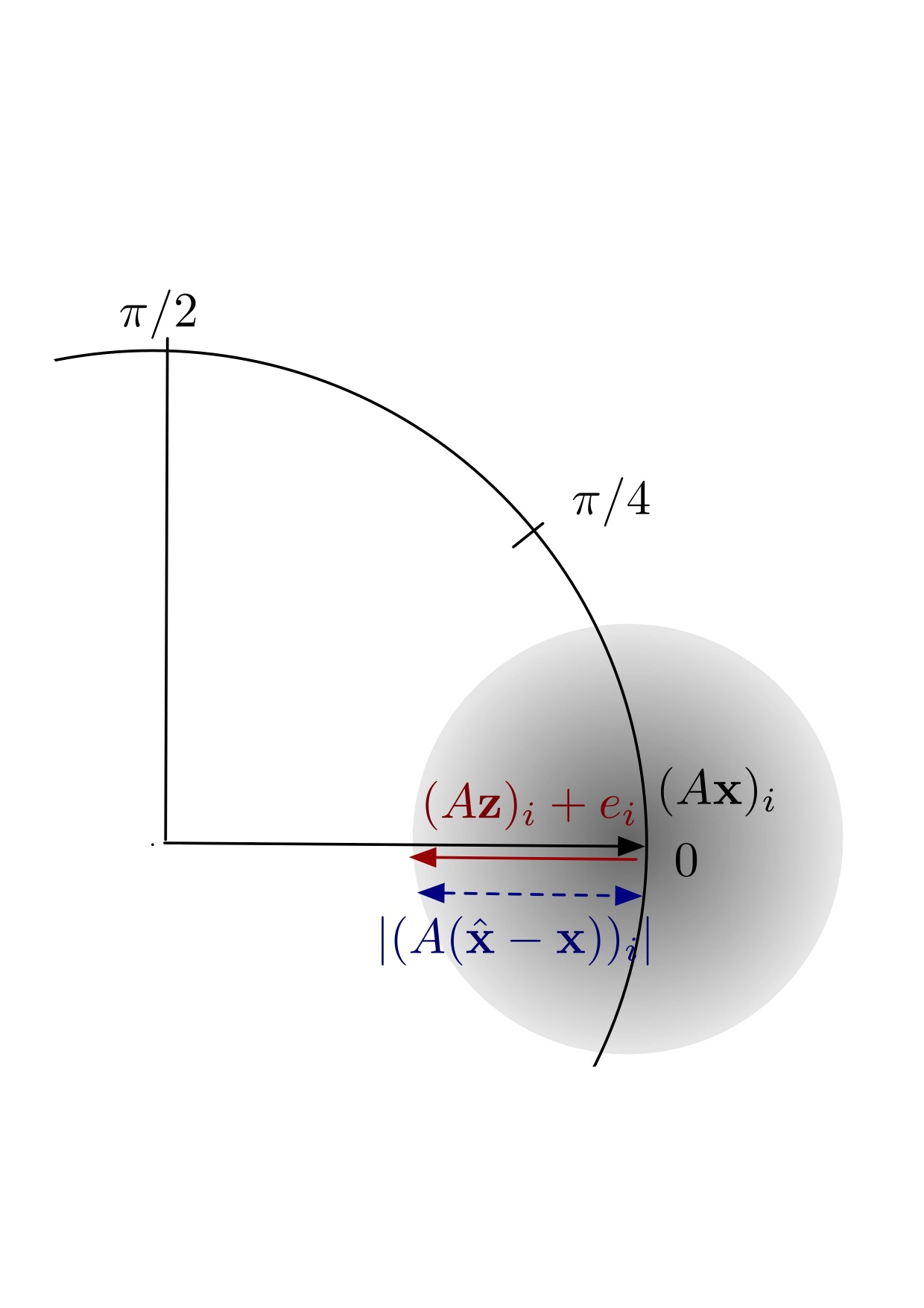}}
  \caption{The effect of noise on a measurement output}
\end{figure}

\begin{figure}
  \centering
  \subfloat[ The decoder \textquotedbl{}randomly\textquotedbl{} picks
$y_{1}$. Since the phase of $y_{1}^{(I,1)}$ is between $-\pi/4$
and $\pi/4$, the decoder can distinguish that the first bit of non-zero
location is $0$ since the decoder can tolerate at most $\pi/4$ phase
displacement for $y_{1}$. So, the non-zero entry is one of $x_{1}$,
$x_{2}$, $x_{3}$, $x_{4}$.]{\label{fig:firstdigit}\includegraphics[width=0.3\textwidth]{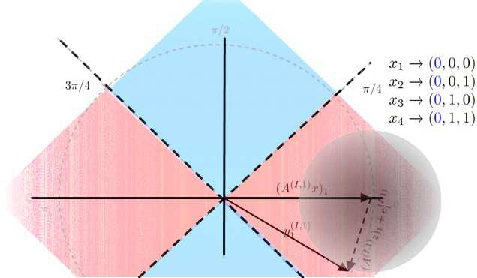}} \qquad\subfloat[The decoder \textquotedbl{}randomly\textquotedbl{} picks
$y_{1}$ again. Since the phase of $y_{1}^{(I,2)}$ is between $3\pi/4$
and $5\pi/4$, the decoder can distinguish that the second bit of
non-zero location is $1$ since the decoder can tolerate at most $\pi/4$
phase displacement for $y_{1}$. So, the non-zero entry is one of
$x_{3}$, $x_{4}$, $x_{7}$, $x_{8}$. Combing the output in the
first phase measurement, we conclude that the non-zero entry is one
of $x_{3}$ and $x_{4}$.]{\label{fig:seconddigit}\includegraphics[width=0.3\textwidth]{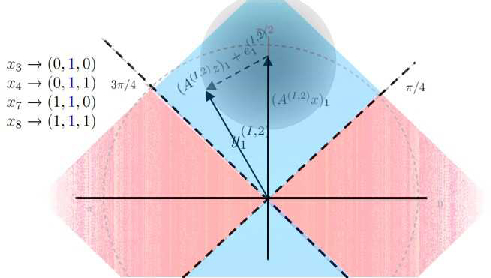}}\qquad
  \subfloat[The decoder \textquotedbl{}randomly\textquotedbl{} picks
$y_{1}$ again. Since the phase of $y_{1}^{(I,3)}$ is between $-\pi/4$
and $\pi/4$, the decoder can distinguish that the third bit of non-zero
location is $0$ since the decoder can tolerate at most $\pi/4$ phase
displacement for $y_{1}$. So, the non-zero entry is one of $x_{1}$,
$x_{3}$, $x_{5}$, $x_{7}$. Combing the outputs in the first and
second phase measurement, we conclude that the non-zero entry is $x_{3}$.]{\label{fig:thirdddigit}\includegraphics[width=0.3\textwidth]{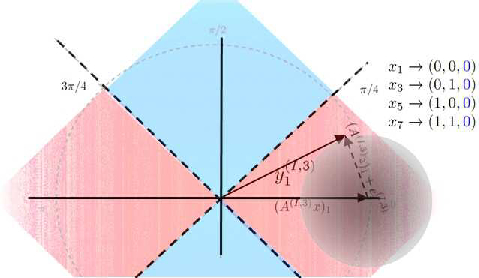}}
  \caption{If we were to distinguish each $\xj$ from $1$ to $8$ by a different phase, the decoder can tolerate at most $\pi/14$ phase displacement for any output $y_\yi$. Instead, we first represent each $\xj=1,2,\ldots,8$ by a three-length binary vector. Next, we perform three sets of phase assignments -- one for each digit. It is easily seen that by allowing multiple measurements, the noise tolerance for the decoder increases.}\label{fig:repeated}
\end{figure}

\subsection{Measurement Design}
As in the exactly $\spar$-sparse case, we start with a randomly drawn left regular bipartite graph $\graph$ with $\samp$ nodes on the left and $\meas'$  nodes on the right.

{\underline{Measurement matrix}:} The measurement matrix $\bA\in\bbc^{2\meas'\steps\times\samp}$ is chosen based on the graph $\graph$. The rows of $\bA$ are partitioned into $\meas'$ groups, with each group consisting of  $2\steps$ consecutive rows. The $\xj$-th entries of the rows $2(\yi-1)\steps+1,(\yi-1)\steps+2,\ldots,2\yi\steps$ are denoted by $a^{(I,1)}_{\yi,\xj},a^{(I,2)}_{\yi,\xj},\ldots,a^{(I,\steps)}_{\yi,\xj},a^{(V,1)}_{\yi,\xj},a^{(V,2)}_{\yi,\xj},\ldots,a^{(V,\steps)}_{\yi,\xj}$ respectively. In the above notation, $I$ and $V$ are used to refer to identification and verification measurements.

For ease of notation, for each $\step=1,2,\ldots,\steps$, we use $\bA^{(I,\step)}$ (resp. $\bA^{(V,\step)}$) to denote the sub-matrix of $\bA$ whose $(\yi,\xj)$-th entry is $a^{(I,\step)}_{\yi,\xj}$ (resp. $a^{(V,\step)}_{\yi,\xj}$).

We define the $\step$-th identification matrix $\bA^{(I,\step)}$ as follows. For each $(\yi,\xj)$, if the graph $\graph$ does not have an edge connecting  $\yi$ on the right to $\xj$ on the left, then $a^{(I,\step)}_{\yi,\xj}=0$. Otherwise, we set $a^{(I,\step)}_{\yi,\xj}$ to be the unit-norm complex number $$a^{(I,\step)}_{\yi,\xj}=e^{\iota\digit_\step(\xj)\pi/2(|\bbg|-1|)}.$$
Note here that the construction for the exactly $\spar$-sparse case can be recovered by setting $\steps=1$, which results in $\bbg=\{1,2,\ldots,\samp\}$ and $\digit_\step(\xj)=\xj$.

Next, we define the $\step$-th verification matrix $\bA^{(V,\step)}$ in a way similar to how we defined the verification entries in the exactly $\spar$-sparse case. For each $(\yi,\xj)$, if the graph $\graph$ does not have an edge connecting  $\yi$ on the right to $\xj$ on the left, then $a^{(V,\step)}_{\yi,\xj}=0$. Otherwise, we set $$a^{(V,\step)}_{\yi,\xj}=e^{\iota\theta^{(V,\step)}_{\yi\xj}},$$ where $\theta^{(V,\step)}_{\yi,\xj}$ is drawn uniformly at random from $\{0,\pi/2(|\bbg|-1),\pi/(|\bbg|-1),3\pi/2(|\bbg|-1)\ldots,\pi/2\}$.

Given an signal vector $\bx$, signal noise $\bz$, and measurement noise $\be$, the measurement operation produces a measurement vector $\by=\bA(\bx+\be)$. Since $\bA$ can be partitioned into $\steps$ identification and $\steps$ verification rows, we think of the measurement vector $\by$ as a collection of outcomes from $\steps$ successive measurement operations such that $$\by^{(I,\step)}=A^{(I,\step)}(\bx+\bz)+\be^{(I,\step)}$$ and $$\by^{(V,\step)}=A^{(V,\step)}(\bx+\bz)+\be^{(V,\step)}$$  are the outcomes from the $\step$-th measurement and $\by=((\by^{(I,\step)},\by^{(V,\step)}):1\leq\step\leq\steps)$.

\subsection{Reconstruction}\label{app:recons}
The decoding algorithm for this case extends the decoding algorithm presented earlier for the exactly $\spar$-sparse case by including the ideas presented in Section~\ref{sec:newideas}. The total number of iterations for our algorithm is upper bounded by $4\spar$.
\begin{enumerate}
\item We initialize by setting the {\it signal estimate vector} $\hat{\bx}{(1)}$ to the all-zeros vector $0^{\samp}$, and for each $\step=1,2,\ldots,\steps$, we set the {\it residual measurement identification/verification vectors} $\tilde{\by}^{(I,\step)}(1)$ and $\tilde{\by}^{(V,\step)}(1)$ to the decoder's observations $\by^{(I,\step)}$ and $\by^{(V,\step)}$.

Let $\cD{(1)}$, the initial neighborly set, be the set of indices  $\yi$ for which, at which the magnitude corresponding to all verification and identification vectors is greater than $\delta/\spar$, {\em i.e.},

$$\cD{(1)}=\bigcap_{\step=1}^\steps\left\{\yi:|y^{(I,\step)}_{\yi}|>\frac{\delta}{\spar},\ |y^{(V,\step)}_{\yi}|>\frac{\delta}{\spar}\right\}$$vector $\by^{(V)}$, {\it i.e.}, the set $ \{\yi\leq \meas:\tilde{y}_{\yi}^{(V)}{(1)}\neq 0\}$. This step takes $\cO(\spar)$ steps, since merely reading $\by$ to check for the zero locations of $\by^{(V)}$ takes that long.

\item The $\iter^{th}$ decoding iteration accepts as its input the $\iter^{th}$ signal estimate vector $\hat{\bx}^\ofiter$, the $\iter^{th}$ neighbourly set $\cD{\ofiter}$, and the $\iter^{th}$ residual measurement identification/verification vectors $\left((\tilde{\by}^{(I,\step)}\ofiter,\tilde{\by}^{(V,\step)}\ofiter):\step=1,2,\ldots,\steps\right)$. In $\cO(1)$ steps it outputs the $(\iter+1)^{th}$ signal estimate vector $\hat{\bx}^\ofiterplusone$, the $(\iter+1)^{th}$ neighbourly set $\cD{\ofiterplusone}$, and the $(\iter+1)^{th}$ residual measurement identification/verification vectors$\left((\tilde{\by}^{(I,\step)}\ofiterplusone,\tilde{\by}^{(V,\step)}\ofiterplusone):\step=1,2,\ldots,\steps\right)$ after the performing the following steps sequentially (each of which takes at most a constant number of atomic steps).

\item {\em Pick a random $\yi\ofiter$:} The decoder picks $\yi\ofiter$ uniformly at random from $\cD\ofiter$
\item {\em Compute quantized phases:} For each $\step=1,2,\ldots,\steps$, compute the {\em current identification angles}, $\hat{\theta}^{(I,\step)}_\iter$, and {\em current identification angles}, $\hat{\theta}^{(I,\step)}_\iter$ defined as follows:
\begin{eqnarray*}
\hat{\theta}^{(I,\step)}_\iter &=& \left[\frac{2(|\bbg|-1|)\left(\angle y^{(I,\gamma)}_{\yi\ofiter}(\mbox{mod }\pi)\right)}{\pi}\right]\frac{\pi}{2(|\bbg|-1|)},\\
\hat{\theta}^{(V,\step)}_\iter &=& \left[\frac{2(|\bbg|-1|)\left(\angle y^{(V,\gamma)}_{\yi\ofiter}(\mbox{mod }\pi)\right)}{\pi}\right]\frac{\pi}{2(|\bbg|-1|)}.\end{eqnarray*}
In the above, $[\cdot]$ denotes the closest integer function. Since there are $\Theta(\samp)$ different phase vectors, to perform this computation, $\cO(\log\samp)$ precision and $\cO(1)$ steps suffice.

For each $\step=1,2,\ldots,\steps$, let $\hat{\digit}_{\step}^\ofiter=2(|\bbg|-1|)\hat{\theta}^{(I,\step)}/\pi$ be the {\em current estimate of $\step$-th digit} and let $\xj\ofiter$ be the number whose representation in $\bbg$ is $(\hat{\digit}_{1}^\ofiter,\hat{\digit}_{2}^\ofiter,\ldots,\hat{\digit}_{\steps}^\ofiter)$.
\item{\em Check if the current identification and verification angles correspond to a valid and unique \xj:} This step determines if $\yi\ofiter$ is a leaf node for $\cS_\delta(\bx-\hat{\bx}\ofiter)$. This operation is similar to the corresponding exact-$\spar$ case. The main difference here is that we perform the verification operation on each of the $\steps$ measurements separately and declare $\yi\ofiter$ as a leaf node only if it passes all the verification tests. The verification step for the $\step$-th measurement is given by the test:
$$\hat{\theta}^{(V,\step)}_\iter\overset{?}{=}\theta_{\yi\ofiter,\xj\ofiter}^{(V,\step)}.$$ If the above test succeeds for every $\step=1,2,\ldots,\steps$, we set $\Delta x\ofiter$ to $|\tilde{y}^{(I,\gamma)}_{\yi\ofiter}(\iter)|$ if $\angle y^{(I,\gamma)}_{\yi\ofiter}\in(-\pi/4,3\pi/4]$, and $-|\tilde{y}^{(I,\gamma)}_{\yi\ofiter}(\iter)|$ if  $\angle y^{(I,\gamma)}_{\yi\ofiter}\in(3\pi/4,7\pi/4]$
Otherwise, we set $\Delta x\ofiter=0$. This step requires at most $\steps$ verification steps and therefore, can be completed in $\cO(1)$ steps.
\item{\em Update $\hat{\bx}\ofiterplusone$, $\tilde{\by}\ofiterplusone$, and $\cD\ofiterplusone$:} If the verification tests in the previous steps failed, there are no updates to be done, i.e., set $\hat{\bx}\ofiterplusone=\hat{\bx}\ofiter$, $\tilde{\by}\ofiterplusone=\tilde{\by}\ofiter$, and $\cD\ofiterplusone=\cD\ofiter$.

Otherwise, we first update the current signal estimate to $\hat{\bx}\ofiterplusone$ by setting the $\xj\ofiter$-th coordinate to $\Delta x\ofiter$. Next, let $\yi_1,\yi_2,\yi_3$ be the possible neighbours of $\xj\ofiter$. We compute the {\em residual identification/verification vectors $\tilde{\by}\ofiterplusone$} at $\yi_1,\yi_2,\yi_3$ by subtracting the weight due to $\Delta x\ofiter$ at each of them. Finally, we update the neighbourly set by removing $\yi_1,\yi_2$, and $\yi_3$ from $\cD\ofiter$ to obtain $\cD\ofiterplusone$.
\end{enumerate}
The decoding algorithm terminates after the $T$-th iteration, where  $T=\min\{4\spar,\{\iter:\cD{(\iter+1)}=\phi\}\}$.
\subsection{Improving performance guarantees of SHO-FA via Set-Query Algorithm of~\cite{Pri:11}}
In \cite{Pri:11}, Price considers a related problem called the {\em Set-Query problem}. In this setup, we are given an unknown signal vector $\bx$ and the objective is to design a measurement matrix $\bA$ such that given $\by=\bA\bx+\be$ (here, $\be$ is an arbitrary ``noise'' vector), and a desired {\em query set} $\cS\subseteq\{1,2,\ldots,\samp\}$, the decoder outputs a reconstruction $\hat{\bx}$ with having support $\cS$ such that $\hat{\bx}-\bx_\cS$ is ``small''. The following Theorem from~\cite{Pri:11} states the performance guarantees for a randomised construction of $\bA$. 
\begin{theorem}[Theorem 3.1 of~\cite{Pri:11}]\label{thm:setquery}
  For every $\epsilon>0$, there is a randomized sparse binary sketch matrix $A$ and recovery
  algorithm $\mathscr{A}$, such that for any $\bx \in \bbR^\samp$, $\cS
  \subseteq\{1,2,\ldots,\samp\}$ with $|\cS| = \spar$, $\hat{\bx} = \mathscr{A}(\bA\bx + \be, \cS)
  \in \bbR^\samp$ has support $\cS(\hat{\bx}) \subseteq \cS$ and
  \[
  ||\hat{\bx} - \bx_\cS||_l \leq (1+\epsilon)(||\bx - \bx_\cS||_l + ||\be||_l)
  \]
for each $l\in\{1,2\}$  with probability at least $1 - 1/\spar$.  $\bA$ has
  $\cO(\spar)$ rows and $\mathscr{A}$ runs in $\cO(\spar)$ time.
\end{theorem}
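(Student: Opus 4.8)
The plan is to reprise (a sketch of) Price's argument, which dovetails with the peeling machinery already developed for SHO-FA. First I would construct $\bA$ as a $\{0,1\}$-matrix attached to a random bipartite graph: fix a constant left-degree $\dgr\ge 3$ (exactly as in Theorem~\ref{thm:tighter}), take $\meas=\cnst\spar$ rows (``cells'') partitioned into $\dgr$ layers of $\Theta(\spar)$ cells each, and let each coordinate $\xj\in\{1,\ldots,\samp\}$ hash independently and uniformly into one cell per layer, with $\bA_{\yi,\xj}=1$ iff $\xj$ hashes to cell $\yi$. The sketch is $\by=\bA\bx+\be$, whose $\yi$-th entry is $\sum_{\xj\,:\,\xj\mapsto\yi}x_\xj+e_\yi$. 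The decisive simplification is that the decoder is handed the query set $\cS$, so it need only examine the bipartite subgraph $\graph$ that $\cS$ induces on the cells; this is a $\dgr$-left-regular random graph with $\spar$ left nodes and $\cnst\spar$ right nodes, to which Lemmas~\ref{lem:expansion}--\ref{lem:manyleafs} and Theorem~\ref{2core} apply verbatim.

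The recovery algorithm $\mathscr{A}$ is SHO-FA's peeling process stripped of the identification/verification gadgetry: maintain residual cell values $\tilde{y}_\yi$; repeatedly locate a cell whose only not-yet-recovered $\cS$-neighbour in $\graph$ is some $\xj$, set $\hat{x}_\xj\leftarrow\tilde{y}_\yi$, subtract $\hat{x}_\xj$ from the at most $\dgr-1$ other cells touching $\xj$, and iterate. By Theorem~\ref{2core} (equivalently, the ``many leaf nodes'' counting of Lemma~\ref{lem:manyleafs}) for $\dgr\ge 3$ and $\cnst$ a large enough constant (above the relevant $2$-core threshold), with probability $1-\cO(1/\spar)$ the induced graph has empty $2$-core, so peeling recovers every coordinate of $\cS$ in $\le\spar$ steps, each touching $\cO(1)$ cells; the running time is $\cO(\spar)$, and $\cS(\hat{\bx})\subseteq\cS$ is immediate since $\hat{\bx}$ is only ever written at coordinates of $\cS$.

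What remains is the error bound, and this is the \textbf{main obstacle}. The error in the value assigned to $\xj$ from a leaf cell $\yi$ is the ``junk'' deposited there: the tail mass $\sum_{\xj'\notin\cS,\ \xj'\mapsto\yi}x_{\xj'}$, plus $e_\yi$, plus the accumulated errors of $\xj$'s already-peeled neighbours. I would control the first two terms by a second-moment argument: each tail and each noise coordinate is hashed uniformly among $\Theta(\spar)$ cells per layer, so the expected squared junk in a fixed cell is $\cO\!\big((\|\bx-\bx_\cS\|_2^2+\|\be\|_2^2)/\spar\big)$, and since each $\xj\in\cS$ has $\dgr$ candidate leaf cells one can retain the cleanest estimate seen for it; a union bound over the $\spar$ coordinates then yields a total squared error of $\cO(\|\bx-\bx_\cS\|_2^2+\|\be\|_2^2)$. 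Upgrading this $\cO(1)$ to the claimed $(1+\epsilon)$ is done by inflating the cell count to $\cnst(\epsilon)\spar$ and the number of layers to $\dgr(\epsilon)=\Theta(\log(1/\epsilon))$ and rerunning the concentration estimate. The genuinely delicate point — where Price's paper concentrates its effort, and the source of the footnote remark in Section~\ref{sec:noisy} — is \textbf{error propagation}: a subtracted error feeds the next peeling step and could compound along a dependency chain. The resolution is that, conditioned on an empty $2$-core, the peeling order induces a forest on $\cS$, so one can charge each coordinate's error to the \emph{disjoint} junk of its own leaf cell and make the errors add in $\ell_2$ rather than multiply, the relevant tree depths being bounded via the sparse-hypergraph component-size estimates of~\cite{KarL:02,Pri:11}. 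Finally the $l=1$ claim follows by rerunning the same estimates with first moments (or from Cauchy--Schwarz, using $|\cS(\hat{\bx})|\le\spar$).
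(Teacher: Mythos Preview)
The paper does not prove this statement at all: Theorem~\ref{thm:setquery} is simply quoted from~\cite{Pri:11} and then invoked as a black box inside the proof of Theorem~\ref{thm:approximate2}. So there is no ``paper's own proof'' to compare against; what you have written is a reconstruction of Price's argument rather than of anything in this paper.

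As a sketch of Price's proof your outline is in the right spirit --- sparse binary hashing matrix, peeling on the subgraph induced by the known query set $\cS$, and error control via the forest/component structure of that subgraph --- and you correctly flag error propagation as the crux. Two points are worth tightening. First, your ``retain the cleanest estimate'' step is not how the $(1+\epsilon)$ constant is obtained in~\cite{Pri:11}; the decoder cannot compare estimates without knowing the truth, and Price instead gets the constant by making each cell so sparse (via the $\cnst(\epsilon)$ blow-up) that with high probability every $\xj\in\cS$ has a leaf cell whose junk is an $\epsilon/\spar$ fraction of the total, and then carefully summing over the peeling forest. Second, the $\ell_1$ claim does not follow from Cauchy--Schwarz on the $\ell_2$ bound (that loses a $\sqrt{\spar}$ factor); it requires rerunning the per-cell first-moment calculation directly, as you parenthetically suggest. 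Neither of these is a fatal gap, but if you intend this as a self-contained proof rather than a pointer to~\cite{Pri:11}, both steps need to be written out.
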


We argue that the above design may be used in conjunction with our SHO-FA algorithm from Theorem~~\ref{thm:approximate} to give stronger reconstruction guarantee than Theorem~\ref{thm:approximate}. In fact, this allows us to even prove a stronger reconstruction guarantee of $\ell_2<\ell_2$ form. The following theorem makes this precise.
\begin{theorem}\label{thm:approximate2} Let $\spar=\cO(\samp^{1-\Delta})$ for some $\Delta>0$ and let $\epsilon>0$. There exists a reconstruction algorithm {SHO-FA-NO} for $\bA\in\bbc^{\meas\times\samp}$ such that
\begin{enumerate}[($i$)]
\item $\meas=\cnst\spar$
\item {SHO-FA-NO} consists of at most $5\spar$ iterations, each involving a constant number of arithmetic operations with a precision of $\cO(\log{\samp})$ bits.
\item With probability $1-o(1/\spar)$ over the design of $\bA$ and randomness in $\be$ and $\bz$, and for each $l\in\{1,2\}$,
$$||\hat{\bx}-\bx||_l\leq (1+\epsilon)\left(||\bz||_l+||\be||_l\right).$$\
\end{enumerate}
\end{theorem}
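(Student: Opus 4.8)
The plan is to separate \emph{support identification} from \emph{value estimation}: I would use the algorithm of Theorem~\ref{thm:approximate} only to \emph{locate} the large coordinates of $\bx$, and then recover their values with Price's Set-Query algorithm of Theorem~\ref{thm:setquery}, running the two subroutines on independent measurement blocks of sizes $\cnst_1\spar$ and $\cnst_2\spar$ (so that $\meas=(\cnst_1+\cnst_2)\spar=\cnst\spar$), each carrying its own i.i.d.\ measurement noise of variance $\sigma_e^2$; the composite algorithm is SHO-FA-NO. Concretely: (Phase~1) fix a truncation level $\delta$ in a window made precise below, run the algorithm of Theorem~\ref{thm:approximate}, and keep only the support $\hat\cS$ of its output, discarding the reconstructed values. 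From the analysis behind Theorem~\ref{thm:approximate} --- the quantized-phase and verification steps, controlled through Lemma~\ref{lem:phasenoise} and a union bound over the $\cO(\spar)$ iterations --- with probability $1-o(1/\spar)$ the decoder terminates having located \emph{every} coordinate of magnitude exceeding $\delta/\spar$ (and no spurious location survives verification), so $|\hat\cS|\leq\spar$ and $\cSx\setminus\hat\cS\subseteq\cS_{\delta}(\bx)$; pad $\hat\cS$ to size exactly $\spar$. (Phase~2) apply Theorem~\ref{thm:setquery} with query set $\hat\cS$ to the ``signal'' $\bx+\bz$ observed through the second block, obtaining $\hat\bx$ supported on $\hat\cS$.

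For the error bound, write $\hat\cS^{c}$ for the complement of $\hat\cS$ and feed $(\bx+\bz)_{\hat\cS^{c}}$ as the ``tail'' into Theorem~\ref{thm:setquery}; combining its guarantee with the triangle inequality applied to $(\bx+\bz)_{\hat\cS}-\bx=\bz_{\hat\cS}-\bx_{\hat\cS^{c}}$ gives, for each $l\in\{1,2\}$ and any $\epsilon_0>0$,
\[
||\hat\bx-\bx||_l\ \leq\ (1+\epsilon_0)\big(||\bz||_l+||\be||_l\big)\ +\ (2+\epsilon_0)\,||\bx_{\hat\cS^{c}}||_l\ +\ ||\bz_{\hat\cS}||_l ,
\]
using that the tail $\bx_{\hat\cS^{c}}=\bx_{\cSx\setminus\hat\cS}$ is supported inside $\cS_{\delta}(\bx)$. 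It then remains to absorb the last two terms. Since $\bx_{\hat\cS^{c}}$ has at most $\spar$ entries, each below $\delta/\spar$ in magnitude, \eqref{eq:deltabound} and its $\ell_2$ analogue give $||\bx_{\hat\cS^{c}}||_1\leq\delta$ and $||\bx_{\hat\cS^{c}}||_2\leq\delta/\sqrt{\spar}$; choosing $\delta$ to be a sufficiently small constant multiple of $\min\{||\bz||_1,\sqrt{\spar}\,||\bz||_2\}$ (these are comparable up to constants once $\bz$ concentrates, for $\spar=\cO(\samp^{1-\Delta})$) forces $(2+\epsilon_0)\,||\bx_{\hat\cS^{c}}||_l\leq\frac{\epsilon}{3}(||\bz||_l+||\be||_l)$. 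For the last term, $\hat\cS$ is a set of at most $\spar$ coordinates while $\bz$ has i.i.d.\ $\mathcal{N}(0,\sigma_z^2)$ entries, so a union bound over the $\binom{\samp}{\spar}$ size-$\spar$ subsets shows $||\bz_{\hat\cS}||_l\leq\frac{\epsilon}{3}||\bz||_l$ with probability $1-o(1/\spar)$ precisely when $\spar=\cO(\samp^{1-\Delta})$. Taking $\epsilon_0=\epsilon/3$ and combining yields $||\hat\bx-\bx||_l\leq(1+\epsilon)(||\bz||_l+||\be||_l)$ for both $l\in\{1,2\}$, which is the claimed guarantee.

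The resource bounds are bookkeeping: Phase~1 contributes $\cnst_1\spar$ measurements and at most $4\spar$ iterations (Theorem~\ref{thm:approximate}), Phase~2 contributes $\cnst_2\spar$ measurements and $\cO(\spar)$ time (Theorem~\ref{thm:setquery}), which I account as at most $\spar$ further iterations, for a total of at most $5\spar$ iterations each performing $\cO(1)$ arithmetic operations at $\cO(\log\samp)$-bit precision; Price's stated failure probability $1/\spar$ is boosted to $o(1/\spar)$ at the cost of only a constant factor in $\cnst_2$ (e.g.\ by increasing the internal redundancy of the sketch, or by a constant number of independent repetitions followed by a coordinate-wise median), and a final union bound over all the bad events gives overall failure probability $o(1/\spar)$. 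I expect the \emph{main obstacle} to be the simultaneous choice of $\delta$: it must be small enough that the coordinates the identification subroutine necessarily omits cost only an $\epsilon$-fraction of $||\bz||_l+||\be||_l$ (rather than the constant fraction tolerated in Theorem~\ref{thm:approximate}), yet large enough for the phase-quantization step of Lemma~\ref{lem:phasenoise} to succeed across all $\cO(\spar)$ iterations; making this window nonempty is what forces $\cnst$ to be large and is the source of the dependence $C=C(\sigma_z,\sigma_e)$. A secondary technical point is extracting from the proof of Theorem~\ref{thm:approximate} the clean statement $\cSx\setminus\hat\cS\subseteq\cS_{\delta}(\bx)$ with the stated probability --- i.e.\ that every above-threshold coordinate is genuinely decoded, not merely that the aggregate $\ell_1$ error is controlled.
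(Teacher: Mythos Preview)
Your proposal is correct and follows essentially the same approach as the paper: stack the SHO-FA measurement matrix of Theorem~\ref{thm:approximate} on top of Price's Set-Query matrix of Theorem~\ref{thm:setquery}, use the first block only to identify the support, and then feed that support as the query set into the second block to recover the values. In fact your write-up is considerably more careful than the paper's own (very terse) proof --- in particular your explicit handling of the residual terms $||\bx_{\hat\cS^{c}}||_l$ and $||\bz_{\hat\cS}||_l$, the padding of $\hat\cS$ to size exactly $\spar$, and the boosting of Price's $1-1/\spar$ success probability to $1-o(1/\spar)$ are all details the paper glosses over.
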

\begin{proof} We first note that the measurement matrix proposed in~\cite{Pri:11} is independent of the query set $\cS$ and depends only on the size $\spar$ of the set $\cS$. We design our measurement matrix $\bA\in\bbR^{\meas\times\samp}$ by combining the measurement matrices from Theorem~\ref{thm:approximate} and Theorem~\ref{thm:setquery}  as follows. Let $\bA_1\in\bbR^{\meas_1\times\samp}$ be drawn according to Theorem~\ref{thm:approximate} and $\bA_2\in\bbR^{\meas_2\times\samp}$ be drawn according to Theorem~\ref{thm:setquery} for some $\meas_1$ and $\meas_2$ scaling as $\cO(\spar)$ so as to achieve an error probability $\cO(1/\spar)$. Let $\bA\in\bbR^{\meas\times\samp}$ with $\meas=\meas_1+\meas_2$ with the upper $\meas_1$ rows consisting of all rows of $\bA_1$ and the lower $\meas_2$ rows consisting of all the row of $\bA_2$.

To perform the decoding, the decoder first produces a coarse reconstruction $\hat{\tilde{\bx}}$ by passing the first $\meas_1$ rows of the measurement output $\by$ through the SHO-FA decoding algorithm. Let $\delta$ be the truncation threshold for the decoder. Next, the decoder computes $\cS_{\delta}(\bx)$ to be the support of $\hat{\tilde{\bx}}$. Finally, the decoder we apply the set query algorithm from~\ref{thm:setquery} with inputs $\left([y_{\meas_1+2},\ldots,y_{\meas_2}]^T,\cS_\delta(\bx)\right)$ to obtain a reconstruction $\hat{\bx}$ that satisfies the desired reconstruction criteria.

\end{proof}

\section{SHO-FA with Integer-valued measurement matrices (SHO-FA-INT)}
\label{sec:shofaint}
In the measurement designs presented in Sections~\ref{sec:noiseless} and~\ref{sec:noisy}, a key requirement is that the entries of the measurement matrix may be chosen to be arbitrary real or complex numbers of magnitude (upto $\cO(\log\samp)$ bits of precision). However, in several scenarios of interest, the entries of the measurement matrix are constrained by the measurement process.\begin{itemize}

\item In network tomography~\cite{XuMT:11}, one attempts to infer individual link delays by sending test packets along pre-assigned paths. In this case, the overall end-to-end path delay is the sum of the individual path delays along that path, corresponding to a measurement matrix with only $0$s and $1$s (or in general, with positive integers, if loops are allowed). 

\item If transmitters in a wireless setting are constrained to transmit symbols from a fixed constellation, then the entries of the measurement matrix can only be chosen from a finite ensemble. 

\end{itemize}
In both these examples, the entries of the measurement matrix are tightly constrained. 

In this section, we discuss how key ideas from Sections~\ref{sec:noiseless} and~\ref{sec:noisy}  may be applied in compressive sensing problems where the entries of measurement matrix is constrained to take values form a discrete set. For simplicity, we assume that the entries of the matrix $\bA$ can take values in the set $\set{1,2,\ldots,M}$ for some integer $M\in\bbn^+$. For simplicity, we consider only the exact $\spar$-sparse problem, noting that extensions to the approximate $\spar$-sparse case follow from techniques similar to those used in Section~\ref{sec:noisy}.

\subsection{Measurement Design}
As in the exactly $\spar$-sparse and approximately sparse cases,
the ${\it {measurement}}$ ${\it {matrix}}$ $A$ is chosen based
on the structure of $\graph$ with $\samp$ nodes on the left and
$\meas'$ nodes on the right. Each left node has degree equal $\dgr$.
We denote the edge set by $\cE$.

We give a combinatorial construction of measurement matrices that ensure the equivalent of Property~\ref{prop:2} in this setting.  Let $\zeta(\cdot)$ denote the Riemann-zeta function and let $\eL\leq\log{(\dgr\samp/2)}/\log M$
be an integer. We design a $2\eL\meas'\times\samp$ measurement matrix
$A$ as follows. First, we partition the rows of $\bA$ into $\meas$'
groups of rows, each consisting of $2\eL$ consecutive rows as follows.
%\sh{cannot compile the matrix} 
%
%$$A= \left[\begin{array}{c:c:c:c}
%a^{(1)}_{11}&a^{(1)}_{12}&\dots & a^{(1)}_{1\samp}\\
%a^{(2)}_{11}&a^{(2)}_{12}&\dots & a^{(2)}_{1}\\
%\vdots &\vdots &\ddots&\vdots\\
%a^{(\eL)}_{11}&a^{(\eL)}_{12}&\dots & a^{(\eL)}_{2\samp}\\
%\hdashline
%a^{(1)}_{21}&a^{(1)}_{22}&\dots & a^{(1)}_{2\samp}\\
%a^{(2)}_{21}&a^{(2)}_{22}&\dots & a^{(2)}_{2\samp}\\
%\vdots &\vdots &\ddots&\vdots\\
%a^{(\eL)}_{21}&a^{(\eL)}_{22}&\dots & a^{(\eL)}_{2\samp}\\
%\hdashline
%\vdots &\vdots &\ddots&\vdots
%\end{array}\right]$$
$$A= \left[\begin{array}{c:c:c:c}
a^{(1)}_{11}&a^{(1)}_{12}&\dots & a^{(1)}_{1\samp}\\
%a^{(2)}_{11}&a^{(2)}_{12}&\dots & a^{(2)}_{1}\\
\vdots &\vdots &\ddots&\vdots\\
a^{(\eL)}_{11}&a^{(\eL)}_{12}&\dots & a^{(\eL)}_{2\samp}\\
\hdashline
a^{(1)}_{21}&a^{(1)}_{22}&\dots & a^{(1)}_{2\samp}\\
%a^{(2)}_{21}&a^{(2)}_{22}&\dots & a^{(2)}_{2\samp}\\
\vdots &\vdots &\ddots&\vdots\\
a^{(\eL)}_{21}&a^{(\eL)}_{22}&\dots & a^{(\eL)}_{2\samp}\\
\hdashline
\vdots &\vdots &\ddots&\vdots
\end{array}\right]$$
Let the $\el$-th and $\el+\eL$-th rows in the $\xj$-th column of
the $\yi$-th group of $\bA$ are respectively denoted $a_{\yi\xj}^{(I,\el)}$
and $a_{\yi\xj}^{(V,\el)}$ . Let $a_{\yi\xj}=[a_{\yi\xj}^{(I)}a_{\yi\xj}^{(V)}]=[a_{\yi\xj}^{(I,1)}a_{\yi\xj}^{(I,2)}\ldots a_{\yi\xj}^{(V,\eL-1)}a_{\yi\xj}^{(V,\eL)}]^{T}$.
First, for each $(\yi,\xj)\notin\cE$, we set $a_{\yi\xj}=0$. Next,
set the non-zero entries of $\bA$ by picking the vectors $(a_{\yi\xj}^{(V)}:(\yi,\xj)\in\cE)$
by uniformly sampling without replacement from the set
\[
\mathcal{C}\triangleq\left\{ [c_{1},c_{2},\ldots,c_{\eL}]^{T}\in[M]^{\eL}:\mbox{gcd}(c_{1},c_{2},\ldots,c_{\eL})=1\right\} .
\]
 Lemma~\ref{lem:coprimes} shows that, $M^{\eL}/2\leq|\mathcal{C}|\leq M^{\eL}$
(here, $\zeta(\cdot)$ is the Reimann zeta function). Therefore, it
suffices that $|\cE|\leq M^{\eL}/\zeta(\eL)$ for such a sampling
to be possible. Further, noting that $|\cE|=\dgr\samp$, $\zeta(R)\leq2$ (via standard bounds),
and $R\leq\log{(dn/2)}/\log{M}$ (by assumption), this condition is satisfied.

\begin{lemma}\label{lem:coprimes} For $M$ large enough, $M^{\eL}/2\leq|\mathcal{C}|\leq M^{\eL}$.
\end{lemma}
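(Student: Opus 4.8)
The plan is to establish the two inequalities separately. The upper bound $|\mathcal{C}|\le M^{\eL}$ is immediate, since $\mathcal{C}\subseteq[M]^{\eL}$ by definition. All the content is in the lower bound $|\mathcal{C}|\ge M^{\eL}/2$, which I would deduce from the classical fact that a uniformly random $\eL$-tuple in $[M]^{\eL}$ has coordinate gcd equal to $1$ with probability tending to $1/\zeta(\eL)$ as $M\to\infty$, combined with the observation that $1/\zeta(\eL)\ge 1/\zeta(2)=6/\pi^{2}>1/2$ for every $\eL\ge 2$ (recall $\zeta$ is strictly decreasing on $(1,\infty)$, and the lemma is used only in the regime $\eL\ge 2$; for $\eL=1$ one has $\mathcal{C}=\{(1)\}$ trivially). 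So the heart of the proof is making the convergence to $1/\zeta(\eL)$ quantitative enough to conclude ``for $M$ large enough''.

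To do this I would use Möbius inversion. Letting $N_{d}$ denote the number of tuples in $[M]^{\eL}$ all of whose coordinates are divisible by $d$, we have $N_{d}=\lfloor M/d\rfloor^{\eL}$, and inclusion-exclusion over divisors gives
\[
|\mathcal{C}|=\sum_{d=1}^{M}\mu(d)\,\Big\lfloor \frac{M}{d}\Big\rfloor^{\eL}.
\]
Comparing this termwise with the convergent series $M^{\eL}\sum_{d\ge1}\mu(d)/d^{\eL}=M^{\eL}/\zeta(\eL)$, the discrepancy splits into two pieces. The first is the rounding error $\sum_{d\le M}\mu(d)\big(\lfloor M/d\rfloor^{\eL}-(M/d)^{\eL}\big)$, which I would bound in absolute value by $\eL\sum_{d\le M}(M/d)^{\eL-1}=\eL M^{\eL-1}\sum_{d\le M}d^{-(\eL-1)}$, using the elementary estimate $|a^{\eL}-b^{\eL}|\le \eL\max(a,b)^{\eL-1}|a-b|$ with $a=\lfloor M/d\rfloor$, $b=M/d$, $|a-b|<1$. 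The second piece is the tail $M^{\eL}\sum_{d>M}\mu(d)/d^{\eL}$, which I would bound by $M^{\eL}\int_{M}^{\infty}x^{-\eL}\,dx=M/(\eL-1)$. For $\eL\ge 3$ the first piece is $O(M^{\eL-1})$ because $\sum d^{-(\eL-1)}$ converges; for $\eL=2$ it is $O(M\log M)$; in all cases both pieces are $o(M^{\eL})$. Hence $|\mathcal{C}|=M^{\eL}/\zeta(\eL)+o(M^{\eL})\ge (6/\pi^{2})M^{\eL}+o(M^{\eL})$, which exceeds $M^{\eL}/2$ once $M$ is large enough.

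There is no real obstacle here; the computation is a standard analytic-number-theory estimate. The only point needing a touch of care is the $\eL=2$ case, where the rounding-error bound degrades from a convergent zeta sum to a harmonic sum $\sum_{d\le M}1/d=O(\log M)$ — but it remains $o(M^{\eL})$, so the conclusion is unaffected. As an alternative that sidesteps the explicit error analysis, one can simply note that $|\mathcal{C}|\ge M^{\eL-2}$ times the number of coprime pairs $(c_{1},c_{2})\in[M]^{2}$, and the latter equals $2\sum_{k=1}^{M}\phi(k)-1=(6/\pi^{2}+o(1))M^{2}$ by the classical asymptotics for the totient summatory function; since $6/\pi^{2}>1/2$ this again yields $|\mathcal{C}|\ge M^{\eL}/2$ for large $M$.
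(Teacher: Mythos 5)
Your proof is correct, and it aims at exactly the same quantity the paper uses: the density $1/\zeta(\eL)$ of coprime $\eL$-tuples in $[M]^{\eL}$, combined with $\zeta(\eL)\leq\zeta(2)=\pi^2/6<2$ for $\eL\geq 2$. The difference is one of rigor rather than of route. The paper argues prime-by-prime that the fraction of tuples not all divisible by $p$ is at least $1-p^{-\eL}$, and then multiplies these bounds over \emph{all} primes to assert $|\mathcal{C}|\geq M^{\eL}\prod_{p}(1-p^{-\eL})=M^{\eL}/\zeta(\eL)$ ``for $M$ large enough''; as written, the passage from the single-prime bounds to the infinite Euler product over a finite box is not justified (the divisibility events are neither independent nor finitely many). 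Your M\"obius-inversion identity $|\mathcal{C}|=\sum_{d\leq M}\mu(d)\lfloor M/d\rfloor^{\eL}$, together with the explicit $o(M^{\eL})$ control of the rounding error and of the tail $\sum_{d>M}\mu(d)d^{-\eL}$, is precisely the bookkeeping that makes this step rigorous, so your write-up is if anything more complete than the paper's. Two side remarks: (i) your caveat about $\eL=1$ is well taken --- there $\mathcal{C}=\{(1)\}$ and the stated bound fails, so both arguments implicitly require $\eL\geq 2$; (ii) an even shorter fully rigorous route is the union bound $M^{\eL}-|\mathcal{C}|\leq\sum_{p}\lfloor M/p\rfloor^{\eL}\leq M^{\eL}\sum_{p}p^{-2}\approx 0.4523\,M^{\eL}<M^{\eL}/2$, which holds for every $M$ and every $\eL\geq 2$ and dispenses with asymptotics altogether.
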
 \begin{proof} The upper bound on $|\mathcal{C}|$ is
trivial, since each element of ${\cal C}$ is an $\eL$ length vector
whose each coordinate takes values from the set $[M]$. To prove the
lower bound, note that a sufficient condition for $\mbox{gcd}(c_{1},c_{2},\ldots,c_{\eL})$
to be true is that for each prime number $p\in\bbn$, there exists
at least one index $\el\in[\eL]$ such that $p$ does not divide $c_{\el}$.
Since the number of vectors $(c_{1},c_{2},\ldots,c_{\eL})\in[M]^{\eL}$
such that for each $\el$, $c_{\el}$ is divisible by $p$ is at most
$(M/p)^{\eL}$, the number of vectors in $[M]^{\eL}$ such that at
least one component is not divisible by $p$ is at least $M^{\eL}(1-p^{-\eL})$.
Denoting the set of prime numbers by $\mathbb{P}$ and extending the
above argument to exclude all vectors that are divisible by some prime
number greater than or equal to two, we obtain, for $M$ large enough,
\[
|{\cal C}|\geq M^{\eL}\prod_{p:p\in\mathbb{P}}(1-1/p^{-\eL})=M^{\eL}/\zeta(\eL).
\]
In the above, the second equality follows from Euler's product formula
for the Reimann zeta function. \end{proof}

Pick the vectors $(a_{\yi\xj}^{(I)}:(\yi,\xj)\in\cE)$ in
the following way. First, calculate the {}``normalized'' version
of every vector $c$ in ${\cal C}$ by set $nor(c)=c/c_{1}$. Then,
rearrange the {}``normalized'' vectors in {}``lexicographical''
order. Finally, assign the original value of $j$-th {}``normalized'' vector to $a_{\yi\xj}^{(I)}$
.

The output of the measurement is a $2\eL\meas'$-length vector
$\by=\bA\bx$. Again, we partition $\by$ into $\meas'$ groups
of $2\eL$ consecutive rows each, and denote the $\yi$-th sub-vector
as $y_{\yi}=[y_{\yi}^{(I)T},y_{\yi}^{(V)T}]^{T}$. %\subsection{Reconstruction algorithm}
\subsection{Reconstruction}
The decoding algorithm is conceptually similar to the decoding algorithm presented in Section~\ref{sec:noiselessreconstruction}. The decoder first generates a list of leaf nodes. Next, it proceeds iteratively by decoding the input value corresponding to one leaf node, and updating the list of leaf nodes by subtracting the contribution of the last decoded signal coordinate from the measurement vector.

Our algorithm proceeds iteratively,
and has at most $\spar$ overall number of iterations, with $\iter$
being the variable indexing the iteration number.
\begin{enumerate}
\item \underline{\it Initialization}: We initialize by setting the \textit{signal
estimate vector} $\hat{\bx}{(1)}$ to the all-zeros vector $0^{\samp}$,
and the \textit{residual measurement identification/verification vectors}
$\tilde{\by}^{(I)}(1)$ and $\tilde{\by}^{(V)}(1)$ to the decoder's
observations $\by^{(I)}$ and $\by^{(V)}$.
\item \underline{\it Leaf-Node List}: Let $\mathcal{L}(1)$, the initial
$\cSx$-leaf node set, be the set of indices $\yi$ which are $\cSx$-leaf
nodes. We make the list in the following steps:

\begin{enumerate}
\item {\em Compute {}``normalized'' vectors $nor^{(I)}(\yi)$ and $nor^{(V)}(\yi)$:}\label{step:computephase-int}
Let the\textit{ {}``normalized'' identification and verification
vectors} be defined respectively for index $i$ (starting from 1),
as follows:
\begin{eqnarray*}
nor^{(I)}(\yi) & \triangleq & \tilde{y}_{\yi}^{(I)}(1)/\tilde{y}_{\yi}^{(I,1)}(1),\\
nor^{(V)}(\yi) & \triangleq & \tilde{y}_{\yi}^{(V)}(1)/\tilde{y}_{\yi}^{(V,1)}(1).
\end{eqnarray*}

\item {\em Check if the current {}``normalized'' identification and
verification vectors correspond to a valid and unique $x_{\xj}$:}
For this, we check at most two things.

\begin{enumerate}
\item First, we check if $nor^{(I)}(\yi)$ corresponds $\xj$-th {}``normalized''
$c$, and the corresponding $\xj^{th}$ column of the $\yi^{th}$
group is non-zero. If so, we have {}``tentatively identified{''}
that the $\yi^{th}$ component of $\tilde{\by}$ is a leaf-node of
the currently unidentified non-zero components of $\bx$, and in particular
is connected to the $\xj^{th}$ node on the left, and the algorithm
proceeds to the next step below. If not, we simply increment $\yi$
by $1$ and return to Step (\ref{step:computephase-int}).
\item Next, we verify our estimate from the previous step. If $nor(a_{\yi,\xj}^{(V)})=nor(\yi)$,
the verification test passes, and include $\yi$ in $\mathcal{L}(1)$.
If not, we simply increment $\yi$ by $1$ and return to Step (\ref{step:computephase-int}).
\end{enumerate}
\end{enumerate}
\item \underline{\it Operations in $\iter^{th}$ iteration:}{The $\iter^{th}$
decoding iteration accepts as its input the $\iter^{th}$ signal estimate
vector {$\hat{\bx}\ofiter$}, the $\iter^{th}$ leaf node set $\mathcal{L}(\iter)$,
and the $\iter^{th}$ residual measurement identification/verification
vectors $(\tilde{\by}^{(I)}\ofiter,\tilde{\by}^{(V)}\ofiter)$. In
$\cO(1)$ steps it outputs the $(\iter+1)^{th}$ signal estimate vector
{$\hat{\bx}\ofiterplusone$}, the $(\iter+1^{th}$ leaf node set
$\mathcal{L}(\iter+1)$, and the $(\iter+1)^{th}$ residual measurement
identification/verification vectors $(\tilde{\by}^{(I)}\ofiterplusone,\tilde{\by}^{(V)}\ofiterplusone)$
after the performing the following steps sequentially (each of which
takes at most a constant number of atomic steps):}

\begin{enumerate}
\item {\em Pick a random $\yi\ofiter\in\mathcal{L}(\iter)$:} The decoder
picks an element $\yi\ofiter$ uniformly at random from the $\iter^{th}$
leaf-node list $\mathcal{L}(\iter)$.
\item {\em Compute {}``normalized'' vectors $nor^{(I)}({\iter})$ and
$nor^{(V)}({\iter})$:}\label{step:computephase-2} Let the \textit{current
{}``normalized'' identification and verification vectors} be defined
respectively as the phases of the residual identification and verification
entries being considered in that step, as follows:
\begin{eqnarray*}
nor^{(I)}(\iter) & \triangleq & \tilde{y}_{i\ofiter}^{(I)}\ofiter/\tilde{y}_{i\ofiter}^{(I,1)}\ofiter,\\
nor^{(V)}(\iter) & \triangleq & \tilde{y}_{i\ofiter}^{(V)}\ofiter/\tilde{y}_{i\ofiter}^{(V,1)}\ofiter.
\end{eqnarray*}

\item {\em Locate non-zero entry $\xj$ and derive the value of $\hat{x}_{\xj\ofiter}\ofiter$:}
For this, we do at most two things.

\begin{enumerate}
\item First, we check if $nor^{(I)}(\iter)$ corresponds $\xj\ofiter$-th
{}``normalized'' $c$. We have identified that the $\yi^{th}$ component
of $\tilde{\by}$ is a leaf-node of the currently unidentified non-zero
components of $\bx$, and in particular is connected to the $\xj\ofiter^{th}$
node on the left, and the algorithm proceeds to the next step below.
\item Next, we assign the value, $\tilde{y}_{i\ofiter}^{(I)}\ofiter/a_{\yi\ofiter,\xj\ofiter}^{(I)}=\tilde{y}_{i\ofiter}^{(V)}\ofiter/a_{\yi\ofiter,\xj\ofiter}^{(V)}$,
to $\hat{x}_{\xj\ofiter}\ofiter$ and proceeds the algorithm to the
next step below.
\end{enumerate}
\item \noindent {\em Update $\hat{\bx}\ofiterplusone$, $\mathcal{L}(t+1)$
,$\tilde{\by}^{(I)}\ofiterplusone$, and $\tilde{\by}^{(V)}\ofiterplusone$:}
In particular, at most $7$ components of each of these vectors need
to be updated. Specifically, $\hat{x}_{\xj\ofiter}\ofiterplusone$
equals $\tilde{y}_{i\ofiter}^{(I)}\ofiter/a_{\yi\ofiter,\xj\ofiter}^{(I)}$.
$\yi\ofiter$ is removed from the leaf node set $\mathcal{L}(t)$
and check whether the neighbours of $\hat{x}_{\xj\ofiter}\ofiter$
become leaf node to get the leaf-node list $\mathcal{L}(t+1)$. And
finally values each of $\tilde{\by}^{(I)}\ofiterplusone$
and $\tilde{\by}^{(V)}\ofiterplusone$ are updated from those of $\tilde{\by}^{(I)}\ofiter$
and $\tilde{\by}^{(V)}\ofiter$ (those corresponding to the neighbours
of $\hat{x}_{\xj\ofiter}\ofiter$) by subtracting out $\hat{x}_{\xj\ofiter}\ofiter$
multiplied by the appropriate coefficients of $\bA$.
\end{enumerate}
\item {\underline{Termination:}} The algorithm stops when the leaf node
set is empty, and outputs the last $\hat{\bx}(\iter)$.
\end{enumerate}

\subsection{Correctness}
We show that the algorithm presented above correctly reconstructs the vector $\hbx$ with a high probability over the random selection of graph $\graph$ and the random choice of $\bA$ provided that the number of measurements is $\Omega(\spar\log{\samp}/\log{M})$
%\begin{definition}\label{def:expansion}
%(Almost expander graph)
%For $0<\epsilon,\alpha<1$, a random bipartite graph $\graph(F)$ with $|E|$ left vertices and $\meas$ right vertices is a $(k,1-\epsilon,1-\alpha)$-almost expander graph if given any subset $S$ of left nodes of size at
%most $k$, the probability (over random variable $F$) that all subsets $S'$ of $S$ are connected to at least $(1-\epsilon)|\cE(S',F)|$ right nodes is at least $1-\alpha$. \end{definition}
%The following theorem states the performance guarantee given by the above measurement design.
\begin{theorem}\label{thm:shofa-int}Let $\graph$ and $\bA$ be determined as above. Then, given $\by=\bA\bx$, SHO-FA-INT outputs a reconstruction $\hbx$ such that $\Pr_{\graph,\bA}(\hbx\neq\bx)=\cO(1/\spar)$. Further, SHO-FA-INT stops in at most $\spar$ iterations and the overall complexity of decoding  is $\cO(\spar\log{\samp})$.
\end{theorem}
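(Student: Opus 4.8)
The plan is to follow the two-part template of the proof of Theorem~\ref{thm:main}, separating the randomness of the bipartite graph $\graph$ (which controls whether the peeling decoder runs to completion) from the randomness of the coprime verification vectors (which controls whether a non-leaf right node is ever falsely ``verified''). Throughout, take $\eL=\Theta(\log\samp/\log M)$ with the hidden constant large enough that the without-replacement sampling from $\mathcal{C}$ defining the non-zero entries of $\bA$ is feasible (using $|\mathcal{C}|\geq M^{\eL}/2$ from Lemma~\ref{lem:coprimes}) and that $\dgr\spar\,M^{1-\eL}=\cO(1/\spar)$; this is exactly the regime $\meas=2\eL\meas'=\Theta(\spar\log\samp/\log M)$ demanded by the theorem's hypothesis on the number of measurements.

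The first step is to observe that identification and verification are \emph{exact} for genuine leaves, so the decoder can never fail to recognise a true leaf. Since two distinct primitive integer vectors cannot be positive real multiples of one another, the normalisations $c\mapsto c/c_{1}$ of the vectors in $\mathcal{C}$ are pairwise distinct; hence the rule $\xj\mapsto a^{(I)}_{\yi,\xj}$ (the $\xj$-th primitive vector in the lexicographic order of normalisations) is injective in $\xj$, and in particular the identification vectors incident to any right node are distinct --- the integer analogue of Property~\ref{prop:2}. Consequently, if $\yi$ is a leaf node for the current set $\cS^{\ofiter}$ of not-yet-decoded non-zero coordinates with parent $\xj$, then $\tilde{y}^{(I)}_{\yi}\ofiter=(x_{\xj}-\hat x_{\xj}\ofiter)\,a^{(I)}_{\yi,\xj}$ is a non-zero real multiple of $a^{(I)}_{\yi,\xj}$ (its first coordinate is non-zero since the matrix entries lie in $[M]$), so $nor^{(I)}(\iter)$ equals the stored normalisation for column $\xj$, the decoder recovers $\xj$, sets $\hat x_{\xj}=\tilde{y}^{(I)}_{\yi}\ofiter/a^{(I)}_{\yi,\xj}=x_{\xj}$, and moreover $nor^{(V)}(\iter)=nor(a^{(V)}_{\yi,\xj})$, so the verification test passes deterministically.

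Next I would invoke Lemmas~\ref{lem:expansion} and~\ref{lem:manyleafs} verbatim: for the arbitrary support $\cSx$ of $\bx$ (size at most $\spar$), with probability $1-\cO(1/\spar)$ over $\graph$ every $\cSpx\subseteq\cSx$ expands by a factor $2\dgr/3$, hence has at least half of its neighbours as leaves, and in particular has a leaf whenever it is non-empty. On this event, and conditioned on no false verification ever occurring, the decoder performs precisely the idealised peeling process --- it never stalls while $\cS^{\ofiter}\neq\emptyset$, and each successful iteration recovers one new non-zero coordinate exactly and cancels its contribution --- so it halts with $\hbx=\bx$ after at most $|\cSx|\leq\spar$ iterations. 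This yields both the ``at most $\spar$ iterations'' claim and exact recovery, modulo the false-verification bound.

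The step I expect to be the main obstacle is the false-verification estimate, which in the integer setting has no clean ``read off the phase'' shortcut: a non-leaf node's residual identification vector $\sum_{\ell}r_{\ell}\,a^{(I)}_{\yi,p_{\ell}}$ (with at least two non-zero residuals $r_{\ell}$) can by coincidence be a scalar multiple of some $a^{(I)}_{\yi,\xj}$, so identification may emit a wrong index $\xj$, and one must genuinely use the randomness of the verification vectors to kill such events. If $\xj\notin N(\yi)$ then $a^{(V)}_{\yi,\xj}=0$ and verification fails at once; otherwise the test forces $\sum_{\ell}r_{\ell}\,a^{(V)}_{\yi,p_{\ell}}$ to be proportional to $a^{(V)}_{\yi,\xj}$, and exposing the verification vectors incident to $\yi$ one at a time --- a Schwartz--Zippel-type argument --- when the last one involved in this relation is revealed the relation becomes a one-dimensional (affine) constraint, which contains $\cO(M)$ points of $[M]^{\eL}$, while without-replacement sampling leaves that vector near-uniform over a set of size $\geq M^{\eL}/2-\dgr$; hence the conditional probability of a false verification at $\yi$ against a fixed $\xj$ is $\cO(M^{1-\eL})$. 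A union bound over the $\leq\dgr$ candidate neighbours and the $\leq\spar$ iterations bounds the total false-verification probability by $\cO(\dgr\spar\,M^{1-\eL})=\cO(1/\spar)$ by our choice of $\eL$; combined with the graph event this gives $\Pr_{\graph,\bA}(\hbx\neq\bx)=\cO(1/\spar)$. Finally, the complexity is routine: each of the $\leq\spar$ iterations touches only the $\cO(1)$ right nodes incident to the decoded column, each carrying $2\eL=\cO(\log\samp/\log M)$ entries of $\cO(\log M)$ bits, and performs $\cO(1)$ componentwise vector operations plus one index recovery from the normalised identification vector (a lookup that, like the phase-to-index rule in the real-valued construction, is part of the code description), for $\cO(\eL)$ work per iteration and hence $\cO(\spar\eL)=\cO(\spar\log\samp/\log M)=\cO(\spar\log\samp)$ overall, which also absorbs the $\cO(\meas)$ cost of reading $\by$.
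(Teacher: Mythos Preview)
Your proposal is correct and follows essentially the same approach as the paper: separate the graph randomness (Lemmas~\ref{lem:expansion} and~\ref{lem:manyleafs} guarantee peeling succeeds) from the randomness of the verification vectors (a Schwartz--Zippel-type bound controls false verification at non-leaves), then count $\cO(\eL\log M)=\cO(\log\samp)$ work per iteration over at most $\spar$ iterations. Your treatment is in fact more careful than the paper's in two places---you make explicit why distinct primitive integer vectors have distinct normalisations (so genuine leaves are identified and verified deterministically), and you spell out the one-dimensional constraint giving the $\cO(M^{1-\eL})$ per-check false-verification probability rather than simply citing Schwartz--Zippel for an $\cO(1/\samp)$ bound---but the structure and key ideas are the same.
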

\begin{proof}
By definition, the probability that $S(\bx)$ and all its subsets expand by at least $(1-\epsilon)$ is at least $1-\alpha$. Therefore, to show that the overall error probability is upper bounded by $2\alpha$, it suffices to show that conditioned on the even that $S(\bx)$ and all its subsets expand, the error probability is at most $\alpha$. In particular, noting that  each iteration that results in a successful leaf identification decreases the number of undecoded non-zero values by $1$, it suffices to show the following:\\
\noindent\phantom{a} 1) In each iteration $\iter$, $\yi(\iter)$ is correctly identified as a leaf or non-leaf with probability $1-o(1/\spar)$.\\
\noindent\phantom{a} 2) The probability of picking a leaf in each iteration is lower bounded by a constant.\\
Since we choose the measurement weights $a_{\yi e}$'s for each $e\in E$ and $i\in[\meas]$ to be distinct elements from $\cal C$ by sampling uniformly without replacement, when a node $\yi(\iter)$ is a leaf node, its parent node $e$ is correctly identified since the measurement vector at node $\yi(\iter)$  equals $x_ea_{\yi(\iter)e}$. On the other hand, when $\yi(\iter)$ is not a leaf node, an error might occur if for the specific vector $\bx$, the measurement output at $y_\yi$ is proportional to the measurement weight for some $e''$ connected to $\yi$, $\it{i.e.}$, $$\sum_{e'\in N(\yi)}x_{e'}a_{\yi(\iter)e'}=x'' a_{\yi(\iter)e''}$$ for some $d''$. Since all the measurement weights  are chosen randomly, by the Schwartz-Zippel lemma~\cite{Sch:80,Zip:79}, the probability of this event is $\cO(1/\samp)$, which is $o(1/\spar)$.

\subsection{Decoding complexity}
Since there are at most $\spar$ non-zero $x_e$'s, the algorithm terminates in $\spar$ steps.
Finally, to compute the decoding complexity, note that all arithmetic operations are over vectors in $[M]^\eL$. Therefore, each such operation can be done in $\cO(\eL\log M)$ time. Note that $ M^\eL=\cO(\samp)$ by our design choice. Further, the total number of update operations is upper bounded by the number of edges in $\cE(S(\bx))$, which is $\cO(\meas)$ by the expansion property.  Thus, the complexity required by the decoder is $\cO(\spar\log \samp)$ or $\cO(\spar(\log \samp+\eL P))$ if $\hat \bx$ equal $\bx$ upto $P$ bits of precision.
\end{proof}

\section{Conclusion}
In this work we present a suite of algorithms (that we call SHO-FA) for compressive sensing that require an information-theoretically order-optimal number of measurements, bits over all measurements, and encoding, update, and decoding time-complexities. As a bonus, with non-zero probability it can also handle ``data-base queries''. The algorithms are robust to noisy signal tails and noisy measurements. The algorithms are ``practical'' (all constant factors involved are ``small''), as validated by both our analysis, and simulations. Our algorithms can reconstruct signals that are sparse in any basis that is known {\it a priori} to both the encoder and decoder, and work for ``many'' ensembles of ``sparse'' measurement matrices.

\bibliographystyle{IEEEtran}
\bibliography{IEEEabrv}

% Generated by IEEEtran.bst, version: 1.13 (2008/09/30)
\begin{thebibliography}{10}
\providecommand{\url}[1]{#1}
\csname url@samestyle\endcsname
\providecommand{\newblock}{\relax}
\providecommand{\bibinfo}[2]{#2}
\providecommand{\BIBentrySTDinterwordspacing}{\spaceskip=0pt\relax}
\providecommand{\BIBentryALTinterwordstretchfactor}{4}
\providecommand{\BIBentryALTinterwordspacing}{\spaceskip=\fontdimen2\font plus
\BIBentryALTinterwordstretchfactor\fontdimen3\font minus
  \fontdimen4\font\relax}
\providecommand{\BIBforeignlanguage}[2]{{%
\expandafter\ifx\csname l@#1\endcsname\relax
\typeout{** WARNING: IEEEtran.bst: No hyphenation pattern has been}%
\typeout{** loaded for the language `#1'. Using the pattern for}%
\typeout{** the default language instead.}%
\else
\language=\csname l@#1\endcsname
\fi
#2}}
\providecommand{\BIBdecl}{\relax}
\BIBdecl

\bibitem{BakJCC:12}
M.~Bakshi, S.~Jaggi, S.~Cai, and M.~Chen, ``{SHO-FA: Robust compressive sensing
  with order-optimal complexity, measurements, and bits},'' in
  \emph{Proceedings of the 50th Allerton Conference}, 2012.

\bibitem{PawR:12}
S.~Pawar and K.~Ramchandran, ``A hybrid dft-ldpc framework for fast, efficient
  and robust compressive sensing,'' in \emph{Proceedings of the 50th Allerton
  Conference}, 2012.

\bibitem{CanRT:06}
E.~Candes, J.~Romberg, Terence, and Tao, ``Robust uncertainty principles: exact
  signal reconstruction from highly incomplete frequency information,''
  \emph{IEEE Transactions on Information Theory}, vol.~52, no.~2, pp. 489 --
  509, Feb. 2006.

\bibitem{Don:06}
D.~L. Donoho, ``Compressed sensing.'' \emph{IEEE Transactions on Information
  Theory}, pp. 1289--1306, 2006.

\bibitem{Can:08}
E.~J. Cand\`es, ``The restricted isometry property and its implications for
  compressed sensing,'' \emph{Comptes Rendus Mathematique}, vol. 346, no. 9-10,
  pp. 589--592, 2008.

\bibitem{BarDDW:08}
R.~Baraniuk, M.~Davenport, R.~DeVore, and M.~Wakin, ``A simple proof of the
  restricted isometry property for random matrices,'' \emph{Constructive
  Approximation}, vol.~28, no.~3, pp. 253--263, December 2008.

\bibitem{CohDD:09}
A.~Cohen, R.~DeVore, and W.~Dahmen, ``Compressed sensing and best k-term
  approximation,'' \emph{Journal of the AMS}, vol.~22, pp. 211--231, 2009.

\bibitem{GilLPS:10}
A.~C. Gilbert, Y.~Li, E.~Porat, and M.~J. Strauss, ``Approximate sparse
  recovery: optimizing time and measurements,'' in \emph{Proceedings of the
  42nd ACM symposium on Theory of computing}, ser. STOC '10, 2010, pp.
  475--484.

\bibitem{PriW:11}
E.~Price and D.~P. Woodruff, ``(1 + eps)-approximate sparse recovery,'' in
  \emph{Proceedings of the 2011 IEEE 52nd Annual Symposium on Foundations of
  Computer Science}, ser. FOCS '11, 2011, pp. 295--304.

\bibitem{TroG:07}
J.~A. Tropp and A.~C. Gilbert, ``Signal recovery from random measurements via
  orthogonal matching pursuit.'' \emph{IEEE Transactions on Information
  Theory}, pp. 4655--4666, 2007.

\bibitem{DonTDS:12}
D.~L. Donoho, Y.~Tsaig, I.~Drori, and J.-L. Starck, ``Sparse solution of
  underdetermined systems of linear equations by stagewise orthogonal matching
  pursuit.'' \emph{IEEE Transactions on Information Theory}, pp. 1094--1121,
  2012.

\bibitem{BerIR:08}
R.~Berinde, P.~Indyk, and M.~Ruzic, ``Practical near-optimal sparse recovery in
  the l1 norm,'' \emph{Proceedings of the Annual Allerton conference}, 2008.

\bibitem{BerI:09}
R.~Berinde and P.~Indyk, ``Sequential sparse matching pursuit,''
  \emph{Proceedings of the Annual Allerton conference}, 2009.

\bibitem{GilI:10}
A.~Gilbert and P.~Indyk, ``Sparse recovery using sparse matrices,''
  \emph{Proceedings of IEEE}, vol.~98, no.~6, pp. 937--947, 2010.

\bibitem{BaIPW:10}
K.~D. Ba, P.~Indyk, E.~Price, and D.~Woodruff, ``Lower bounds for sparse
  recovery,'' \emph{Proceedings of the Symposium on Discrete Algorithms}, 2010.

\bibitem{ReeS:60}
I.~S. Reed and G.~Solomon, ``Polynomial codes over certain finite fields,''
  \emph{Journal of the Society for Industrial and Applied Mathematics}, vol.~8,
  no.~2, pp. 300 -- 304, Jun 1960.

\bibitem{ParH:08}
F.~Parvaresh and B.~Hassibi, ``Explicit measurements with almost optimal
  thresholds for compressed sensing.'' in \emph{Proceedings of the
  International Conference on Acoustics, Speech, and Signal Processing
  (ICASSP)}, 2008, pp. 3853--3856.

\bibitem{KudP:10}
S.~Kudekar and H.~Pfister, ``The effect of spatial coupling on compressive
  sensing,'' in \emph{Proc. Annual Allerton Conf. on Commun., Control, and
  Comp}, 2010.

\bibitem{MitG:12}
M.~Mitzenmacher and G.~Varghese, ``Biff (bloom filter) codes: Fast error
  correction for large data sets,'' \emph{To appear in ISIT}, 2012.

\bibitem{JafWHC:09}
S.~Jafarpour, W.~Xu, B.~Hassibi, and R.~Calderbank, ``Efficient and robust
  compressed sensing using optimized expander graphs,'' \emph{IEEE Transactions
  on Information Theory}, vol.~55, no.~9, pp. 4299--4308, 2009.

\bibitem{AkcT:10}
M.~Ak\c{c}akaya and V.~Tarokh, ``Shannon-theoretic limits on noisy compressive
  sampling,'' \emph{IEEE Transactions on Information Theory}, vol.~56, no.~1,
  pp. 492--504, Jan. 2010.

\bibitem{WuV:10}
Y.~Wu and S.~Verd\'u, ``R\'enyi information dimension: Fundamental limits of
  almost lossless analog compression,'' \emph{IEEE Transaction on Information
  Theory}, vol.~56, no.~8, pp. 3721--3748, 2010.

\bibitem{WuV:11}
Y.~Wu and S.~Verd\'{u}, ``Optimal phase transitions in compressed sensing,''
  \emph{ArXiv.org e-Print archive, arXiv:1111.6822 [cs.IT]}, 2011.

\bibitem{KrzMSSZ:12}
F.~Krzakala, M.~M{\'e}zard, F.~Sausset, Y.~Sun, and L.~Zdeborov{\'a},
  ``Statistical-physics-based reconstruction in compressed sensing,''
  \emph{Physical Review X}, vol.~2, no.~2, p. 021005, 2012.

\bibitem{DonJM:11}
D.~L. Donoho, A.~Javanmard, and A.~Montanari, ``Information-theoretically
  optimal compressed sensing via spatial coupling and approximate message
  passing,'' \emph{ArXiv.org e-Print archive, arXiv:1112.0708 [cs.IT]}, 2011.

\bibitem{FleRG:09}
A.~K. Fletcher, S.~Rangan, and V.~K. Goyal, ``Necessary and sufficient
  conditions for sparsity pattern recovery,'' \emph{IEEE Transactions on
  Information Theory}, vol.~55, no.~12, pp. 5758 -- 5772, Dec. 2009.

\bibitem{Wai:09}
M.~J. Wainwright, ``Information-theoretic limitations on sparsity recovery in
  the high-dimensional and noisy setting,'' \emph{IEEE Transactions on
  Information Theory}, vol.~55, no.~12, pp. 5728 -- 5741, Dec. 2009.

\bibitem{LuMPDK:08}
Y.~Lu, A.~Montanari, B.~Prabhakar, S.~Dharmapurikar, and A.~Kabbani, ``Counter
  braids: a novel counter architecture for per-flow measurement,'' in
  \emph{Proceedings of the 2008 ACM SIGMETRICS international conference on
  Measurement and modeling of computer systems}, ser. SIGMETRICS '08, 2008, pp.
  121--132.

\bibitem{YiMP:08}
Y.~Lu, A.~Montanari, and B.~Prabhakar, ``Counter braids: Asymptotic optimality
  of the message passing decoding algorithm,'' in \emph{Communication, Control,
  and Computing, 2008 46th Annual Allerton Conference on}, Sept. 2008, pp. 209
  --216.

\bibitem{PlaV:11}
Y.~Plan and R.~Vershynin, ``One-bit compressed sensing by linear programming,''
  \emph{ArXiv.org e-Print archive, arXiv:1109.4299 [cs.IT]}, 2011.

\bibitem{JacLBB:11}
L.~Jacques, J.~N. Laska, P.~T. Boufounos, and R.~G. Baraniuk, ``Robust 1-bit
  compressive sensing via binary stable embeddings of sparse vectors,''
  \emph{ArXiv.org e-Print archive, arXiv:1104.3160 [cs.IT]}, 2011.

\bibitem{GilSTV06}
A.~C. Gilbert, M.~J. Strauss, J.~A. Tropp, and R.~Vershynin, ``Algorithmic
  linear dimension reduction in the l1 norm for sparse vectors,'' in
  \emph{Allerton 2006 (44th Annual Allerton Conference on Communication,
  Control, and Computing}, 2006.

\bibitem{CorM:06}
G.~Cormode and S.~Muthukrishnan, ``Combinatorial algorithms for compressed
  sensing,'' in \emph{Information Sciences and Systems, 2006 40th Annual
  Conference on}, March 2006, pp. 198 --201.

\bibitem{SarBB:06}
S.~Sarvotham, D.~Baron, and R.~Baraniuk, ``Sudocodes - fast measurement and
  reconstruction of sparse signals,'' in \emph{Information Theory, 2006 IEEE
  International Symposium on}, July 2006, pp. 2804 --2808.

\bibitem{GooM:11}
M.~T. Goodrich and M.~Mitzenmacher, ``Invertible bloom lookup tables,''
  \emph{ArXiv.org e-Print archive, arXiv:1101.2245 [cs.DS]}, 2011.

\bibitem{Spi:95}
D.~A. Spielman, ``Linear-time encodable and decodable error-correcting codes.''
  in \emph{STOC'95}, 1995, pp. 388--397.

\bibitem{Pri:11}
E.~Price, ``Efficient sketches for the set query problem,'' in
  \emph{Proceedings of the Twenty-Second Annual ACM-SIAM Symposium on Discrete
  Algorithms}, ser. SODA '11, 2011, pp. 41--56.

\bibitem{BerE}
I.~G. Shevtsova, ``An improvement of convergence rate estimates in the lyapunov
  theorem,'' \emph{Doklady Mathematics}, vol.~82, no.~3, pp. 862--864, 2010.

\bibitem{Bernstein}
S.~N. Bernstein, ``On certain modifications of chebyshev's inequality,''
  \emph{Doklady Akademii Nauk SSSR}, vol.~17, no.~6, pp. 275--277, 1937.

\bibitem{Guo:10}
D.~Guo, J.~Luo, L.~Zhang, and K.~Shen, ``Compressed neighbor discovery for
  wireless networks,'' \emph{CoRR}, vol. abs/1012.1007, 2010.

\bibitem{XuMT:11}
W.~Xu, E.~Mallada, and A.~Tang, ``Compressive sensing over graphs,'' in
  \emph{INFOCOM, 2011 Proceedings IEEE}, april 2011, pp. 2087 --2095.

\bibitem{Ale:05}
M.~Alekhnovich, ``Linear diophantine equations over polynomials and soft
  decoding of reed-solomon codes,'' \emph{IEEE Transactions on Information
  Theory}, pp. 2257--2265, 2005.

\bibitem{Sin:64}
R.~C. Singleton, ``Maximum distance q-nary codes,'' \emph{IEEE Transactions on
  Information Theory}, vol.~10, no.~2, pp. 116--118, 1964.

\bibitem{IndR:08}
P.~Indyk and M.~Ruzic, ``Near-optimal sparse recovery in the l1 norm,'' in
  \emph{Proceedings of the 2008 49th Annual IEEE Symposium on Foundations of
  Computer Science}, ser. FOCS '08, 2008, pp. 199--207.

\bibitem{AkcT:08}
M.~Ak\c{c}akaya and V.~Tarokh, ``A frame construction and a universal
  distortion bound for sparse representations.'' \emph{IEEE Transactions on
  Signal Processing}, pp. 2443--2450, 2008.

\bibitem{AerSM:10}
S.~Aeron, V.~Saligrama, and M.~Zhao, ``Information theoretic bounds for
  compressed sensing,'' \emph{Information Theory, IEEE Transactions on},
  vol.~56, no.~10, pp. 5111 --5130, Oct. 2010.

\bibitem{Roth}
R.~Roth, \emph{Introduction to coding theory}.\hskip 1em plus 0.5em minus
  0.4em\relax Cambridge University Press, 2006.

\bibitem{HooLW:06}
S.~Hoory, N.~Linial, and A.~Wigderson, ``Expander graphs and their
  applications,'' \emph{Bulletin (New series) of the American Mathematical
  Society}, vol.~43, pp. 439--561, 2006.

\bibitem{CheKL:12}
H.~Y. Cheung, T.~C. Kwok, and L.~C. Lau, ``Fast matrix rank algorithms and
  applications,'' \emph{Proceedings of the 44th Annual ACM Symposium on Theory
  of Computing (STOC)}, 2012.

\bibitem{Molloy:2005}
M.~Molloy, ``Cores in random hypergraphs and boolean formulas,'' \emph{Random
  Struct. Algorithms}, vol.~27, no.~1, pp. 124--135, Aug. 2005.

\bibitem{LubyMS:2006}
M.~G. Luby, M.~Mitzenmacher, M.~A. Shokrollahi, and D.~A. Spielman, ``Efficient
  erasure correcting codes,'' \emph{IEEE Trans. Inf. Theor.}, vol.~47, no.~2,
  pp. 569--584, Sep. 2006.

\bibitem{Mackay:03}
D.~J.~C. Mackay, \emph{{Information theory, inference and learning
  algorithms}}.\hskip 1em plus 0.5em minus 0.4em\relax Cambridge University
  Press, Jun. 2007.

\bibitem{Sch:80}
J.~T. Schwartz, ``Fast probabilistic algorithms for verification of polynomial
  identities,'' \emph{Journal of the ACM}, vol.~27, no.~4, p. 701–717, 1980.

\bibitem{Zip:79}
R.~Zippel, ``Probabilistic algorithms for sparse polynomials,''
  \emph{Proceedings of the International Symposiumon on Symbolic and Algebraic
  Computation}, pp. 216--226, 1979.

\bibitem{KarL:02}
M.~Karon\'ski and T.~{\L}uczak, ``The phase transition in a random
  hypergraph,'' \emph{J. Comput. Appl. Math.}, vol. 142, no.~1, pp. 125--135,
  May 2002.

\end{thebibliography}

\appendix

\subsection{Proof of Lemma~\ref{lem:expansion}}\label{apx:proof_lem_expansion}

\begin{proof} It suffices to prove the desired property for all $\cSx$ of size exactly $\spar$. Let $\cSpx \subseteq \cSx$. Let $\{(s_1,t_1),(s_2,t_2),\ldots,(s_{\dgr|\cSpx|},t_{\dgr|\cSpx|})\}$ be the set of outgoing edges from $\cSpx$. Without loss of generality, we assume these edges are drawn in the following manner.

In the initialization stage, we ``split'' every node on the right of $\graph$ to $\dgr \samp /\cnst \spar$ ``virtual'' nodes\footnote{We assume $\dgr \samp /\cnst \spar$ is an integer, with the understanding that in practice one can always increase $\cnst$ to make $\dgr \samp /\cnst \spar$ integer while the ``fail-to-expand'' probability is still bounded by the desired target $\epsilon$.}. Each virtual node represents a ``true'' node on the right. We maintain a set of ``remaining'' virtual nodes, which we will select and remove virtual nodes from.

To draw the edges, we visit the nodes in $\cSpx$ (on the left of $\graph$) sequently. For each node, we select uniformly at random a set of $\dgr$ distinct virtual nodes from the remaining virtual node set. We form $\dgr$ edges by connecting this node in $\cSpx$ and the true nodes on the right that those $\dgr$ selected virtual nodes represent. After the $\dgr$ edges are formed, we remove the $\dgr$ selected virtual nodes from the remaining virtual node set, and proceed to the next node in $\cSpx$.

In this way, we generate a bipartite graph that is both $\dgr$ left-regular and $\dgr \samp /\cnst \spar$ right-regular; that is, each node on the left has a degree of $\dgr$ and each node on the right has a degree of $\dgr \samp /\cnst \spar$. By using standard arguments of
sequential implementation of random experiments, one can verify that the graph generated in this way is chosen uniformly at random from all bipartite graphs that are both $\dgr$ left-regular and $\dgr \samp /\cnst \spar$ right-regular.

For each $i=1,2,\ldots,\dgr|\cSpx|$, the probability that the edge $(s_i, t_i)$ reaches an ``old'' true node (on the right) that is already reached by those edges generated ahead of $(s_i, t_i)$ is upper bounded as
\begin{eqnarray*}
\Pr_{\graph}(t_i\in\{t_1,\ldots t_{i-1}\})&\leq & \frac{(i-1)}{\cnst \spar}\\
&\leq &\frac{\dgr|\cSpx|}{\cnst \spar}.
\end{eqnarray*}

Let $N(\cSpx)$ be the set of all neighboring nodes of the nodes in $\cSpx$. The size of $N(\cSpx)$ is no more than $2\dgr|\cSpx|/3$ if and only if out of $\dgr |\cSpx|$ edges, there exists a set of at least $\dgr|\cSpx|/3$ edges fail to reach ``new'' nodes (on the right). Exploiting this observation, we have
\begin{eqnarray*}
%\lefteqn{\Pr_{\graph}(|N(\cSpx)|\leq (\dgr-1)|\cSpx|)}\\
\lefteqn{\Pr_{\graph}\left(|N(\cSpx)|\leq 2\dgr|\cSpx|/3\right)}\\
&=&\Pr_{\graph}\left(\bigcup_{\substack{\sigma\subseteq\{1,\ldots,\dgr|\cSpx|\}\\|\sigma|\geq |\dgr\cSpx/3|}}\bigcap_{i\in\sigma}\{t_i\in\{t_1,\ldots t_{i-1}\}\}\right)\\
&=&\Pr_{\graph}\left(\bigcup_{\substack{\sigma\subseteq\{1,\ldots,\dgr|\cSpx|\}\\|\sigma|= \dgr|\cSpx|/3}}\bigcup_{\sigma'\supseteq\sigma}\bigcap_{i\in\sigma'}\left\{t_i\in\{t_1,\ldots t_{i-1}\right\}\}\right)\\
&\leq & {\dgr|\cSpx|\choose \dgr|\cSpx|/3}\Big(\frac{\dgr |\cSpx|}{\cnst\spar}\Big)^{\dgr|\cSpx|/3}.
\end{eqnarray*}
%Consequently, the probability that there exists one $\cSpx\subseteq \cSx$ so that $|N(\cSpx)|\leq(\dgr-1)|\cSpx|$ can be bounded by
Consequently, the probability that there exists one $\cSpx\subseteq \cSx$ so that $|N(\cSpx)|\leq 2\dgr|\cSpx|/3$ can be bounded by
\begin{eqnarray}
%\lefteqn{\Pr_{\graph}\large(\cup_{\cSpx\subseteq \cSx}\{|N(\cSpx)|\leq (\dgr-1)|\cSpx|\}\large)}\nonumber\\
\lefteqn{\Pr_{\graph}\large(\cup_{\cSpx\subseteq \cSx}\{|N(\cSpx)|\leq 2|\cSpx|\}\large)}\nonumber\\
&\leq& \sum_{\cSpx\subseteq \cSx} {\dgr|\cSpx|\choose \dgr|\cSpx|/3}\Big(\frac{\dgr |\cSpx|}{\cnst\spar}\Big)^{\dgr|\cSpx|/3}\nonumber\\
&=&\sum_{j=1}^{\spar}{\spar \choose j} {\dgr j\choose \dgr j/3}\Big(\frac{\dgr j}{\cnst\spar}\Big)^{\dgr j/3}\nonumber\\
&\leq & \sum_{j=1}^{\spar}\Big(\frac{\spar e}{j}\Big)^j\big( 3e\big)^{\dgr j/3}\Big(\frac{\dgr j}{\cnst\spar}\Big)^{\dgr j/3}\label{eq:stirling}\\
&=& \sum_{j=1}^\spar \Bigg(\frac{\spar e}{j}\left(\frac{3\dgr j e}{\cnst\spar}\right)^{\dgr/3}\Bigg)^j\nonumber \\
&\leq & \sum_{j=1}^{\lceil\sqrt{\spar}\rceil} \Bigg(\frac{\spar e}{j}\left(\frac{3\dgr j e}{\cnst\spar}\right)^{\dgr/3}\Bigg)^j + \sum_{j=\lfloor\sqrt{\spar}\rfloor}^\spar \Bigg(\frac{\spar e}{j}\left(\frac{3\dgr je}{\cnst\spar}\right)^{\dgr/3}\Bigg)^j\nonumber\\
&\leq & \sqrt{\spar} \Bigg(\frac{\spar e}{\sqrt{\spar}}\left(\frac{3\dgr \sqrt{\spar} e}{\cnst\spar}\right)^{\dgr/3}\Bigg) + \sum_{j=\lfloor\sqrt{\spar}\rfloor}^\infty \Bigg(e\left(\frac{3\dgr e}{\cnst}\right)^{\dgr/3}\Bigg)^j\label{eq:upperboundindividual}\\
&\leq & \left(\frac{3\dgr e}{\cnst}\right)^{\dgr/3}\spar^{-\dgr/6}e+\exp(-\theta(\sqrt{\spar}))\label{eq:order}\\
&&=\cO(\spar^{-\dgr/6}).
\end{eqnarray}
In the above, the inequality in~\eqref{eq:stirling} follows from Stirling's approximation; the upper bound in~\eqref{eq:upperboundindividual} is derived by noting that the first term in the sum takes its maximum when $j=\lfloor\sqrt{\spar}\rfloor$ and the second term is maximum when $j=\spar$;~\eqref{eq:order} is obtained by  noting that the second term is a geometric progression.

Finally, we plug in the choice of $\dgr=7$ to complete the proof.\end{proof}

\subsection{Proof of Lemma~\ref{lem:lowerbound}}\label{apx:proof_lowerbound}
Suppose each set of  of size $\spar$ of $\cSx$ nodes on the left of $\graph$ has strictly more than $d/2$ times as many nodes neighbouring those in $\cSx$, as there are in $\cSx$. Then by standard arguments in the construction of expander codes~\cite{Spi:95}, this implies the existence of a linear code of rate at least $1-\meas/\samp$, and with relative minimum distance at least $\spar/\samp$.\footnote{For the sake of completeness we sketch such an argument here. Given such an expander graph $\graph$, one can construct a $\samp \times \spar$ binary matrix $\bA$ with $1$s in precisely those $(i,j)$ locations where the $i$th node on the left is connected with the $j$th node on the right. Treating this matrix $\bA$ as the parity check matrix of a code over a block-length $\samp$ implies that the rate of the code is at least $\spar/\samp$, since the parity-check matrix imposes at most $\spar$ constraints on the $\samp$ bits of the codewords. Also, the minimum distance is at least $\spar$. Suppose not, {\it i.e.} there exists a codeword in this linear code of weight less than $\spar$. Let the support of this codeword be denoted $\cSx$. Then by the expansion property of $\graph$, there are strictly more than $|\cSx|d/2$ neighbours of $\cSx$. But this implies that there is at least one node, say $v$, neighboring $\cSx$ which has exactly one neighbor in $\cSx$. But then the constraint corresponding to $v$ cannot be satisfied, leading to a contradiction.} But by the {Hamming bound~\cite{Roth}}, it is known that codes of minimum distance $\delta$ can have rate at most $1-H(\delta)$, where $H(.)$ denotes the binary entropy function. Since $\spar = \o(\samp)$, $\delta = \spar/\samp \rightarrow 0$. But in this regime $1-H(\delta) \rightarrow 1-\delta\log(1/\delta)$. Comparing $(\spar/\samp)\log(\samp/\spar)$ with $\meas/\samp$ gives the required result.
\endproof

\subsection{Proof of Lemma~\ref{lem:manyleafs}}\label{apx:proof_lem_manyleafs}
For any set of nodes $S$ in the graph $\graph$, we define $N(S)$ as the set of neighboring nodes of the nodes in $S$. For any set $\cSpx \subseteq \cSx$, we define $\bet$ as the portion of the nodes in $N(\cSpx)$ that are $\cSpx$-leaf nodes.

First, each node $v\in N(\cSpx)$ is of one of the following two types:
\begin{enumerate}
\item It has only one neighboring node in $\cSpx$, on the left of $\graph$. By the definition of $\bet$, the number of nodes in $N(\cSpx)$ of this type is $\bet|N(\cSpx)|$.
\item It has at least two neighboring nodes in $\cSpx$, on the left of $\graph$. The number of nodes in $N(\cSpx)$ of this type is $(1-\bet)|N(\cSpx)|$.
\end{enumerate}

We have two observations. First, since the degree of each node in $\cSpx$ is $\dgr$, the total number of edges from $\cSpx$ to $N(\cSpx)$ is at most $\dgr |\cSpx|$ and the number of nodes in $N(\cSpx)$ is at most $\dgr|\cSpx|$.

Second, the total number of edges entering $N(\cSpx)$ from $\cSpx$ is at least $$\bet|N(\cSpx)|+2(1-\bet)|N(\cSpx)|=(2-\bet)|N(\cSpx)|,$$ as the number of neighboring nodes for the nodes of Type 1 is one and of Type 2 is at least two.

Combining the above two observations, we can get the following inequality:
$$
    (2-\bet)\dgr|N(\cSpx)|/3\leq \dgr|\cSpx|.
$$
According to the setting of the Lemma, we also have
$\left|N(\cSpx)\right| \geq 2\dgr/|\cSpx|3$. Therefore, it follows that $$2(2-\bet)\dgr|\cSpx|/3\leq \dgr|\cSpx|,$$ and consequently $\bet\geq 1/2$.
\endproof
\subsection{Proof of Lemma~\ref{lem:query}}\label{app:query}
Consider the algorithm $\mathcal{A}$ that proceeds as follows. First, among the set of all right nodes that neighbour $\xj$, check if there exists a node $\yi$ such that $y_\yi^{(I)}=y_\yi^{(V)}=0$. If there exists such a node, then output $\hat{x}_\xj=0$. Otherwise, check if there exists a $\cSx$-leaf node among the neighbours of $\xj$. This check can be performed by using verification and identification observations as described for the SHO-FA reconstruction algorithm. If there exists a leaf node, say $\yi$, then output $\hat{x}_\xj=|y_\yi|$. Else, the algorithm terminates without producing any output.

Two see that the above algorithm satisfies the claimed properties, consider the following two cases. \\
\underline{Case 1: $x_\xj=0$.} In this case, $\hat{x}_\xj=0$ is output if at least one neighbour of $\xj$ lies outside $N(\cSx)$. Since $N(\cSx)$ has at most $\dgr\spar$ elements, the probability that a neighbour of $\xj$ lies inside $N(\cSx)$ is at most ${\dgr\spar}/{\cnst\spar}=\dgr/\cnst$. Thus, the probability that none of the neighbours of $\xj$ lie outside $N(\cSx)$ is at least $(1-(\dgr/\cnst)^\dgr)$. The algorithm incorrectly reconstructs $x_\xj$ if all neighbours of $\xj$ lie within $N(\cSx)$ and SHO-FA incoorectly identifies one of these nodes as a leaf node. By the analysis of SHO-FA, this event occurs with probability $o(1/\spar)$.\\
\underline{Case 2: $x_\xj\neq 0$.} For $\mathcal{A}$ to produce the correct output, it has to identify one of the neighbours of $\xj$ as a leaf. The probability that there exists a leaf among the neighbours of $\xj$ is at least $(1-(\dgr/\cnst)^\dgr)$ by an argument similar to the previous case. Similarly, the proabability of erroneous identification is $o(1/\spar)$.\endproof

\subsection{Phase noise}\label{app:phasenoise}
\proofof{Lemma~\ref{lem:phasenoise}}}
 First, we find an upper bound on the maximum possible
phase displacement in $y_{\yi}$ due to fixed noise vectors $\bz$
and $\be$. Let $\Delta\theta_{\yi}$ be the difference in phase between
the \textquotedbl{}noiseless\textquotedbl{} output $(\bA'\bx)_{\yi}$
and the actual output $y_{\yi}=(\bA'(\bx+\bz)+\be)_{\yi}$. Figure~\ref{fig:maxnoise}
shows this geometrically. By a {straightforward} geometric argument,
for fixed $\bz$ and $\be$, the phase displacement $\Delta\theta_{\yi}$
is upper bounded by $\pi|(\bA'\bz)_{\yi}+e_{\yi}|/|(\bA'\bx)_{\yi}|$.
Since $\yi$ is a leaf node for $\cSx$, $|(\bA'\bx)_{\yi}|\geq|\delta/\spar|$.
Therefore,
\[
\Delta\theta_{\yi}\leq\pi|(\bA'\bz)_{\yi}+e_{\yi}|\spar/\delta.
\]

 Since each $z_{\xj}$ is a Gaussian with zero mean and variance $\sigma_{z}^{2}$,
$(\bA'\bz)_{\yi}$ is a Complex Gaussian with zero mean and variance
at most $\samp\sigma_{z}^{2}$. Further, each row of $\bA'$ has at
most $\dgr\samp/\cnst\spar$ non-zero entries. Therefore, $(\bA'\bz)_{\yi}+e_{\yi}$
is a zero mean complex Gaussian with variance at most $(\dgr\samp/\cnst\spar)\sigma_{z}^{2}+\sigma_{e}^{2}$.

The expected value of $\Delta\theta_{\yi}$ is bounded as follows:
\begin{eqnarray*}
\lefteqn{E_{\bz,\be}\large(\Delta\theta_{\yi}\large)}\\
 & \leq & E_{\bz,\be}\large(\pi|(\bA'\bz)_{\yi}+e_{\yi}|\spar/\delta\large)\\
 & \leq & \frac{\pi\spar}{\delta}\int_{0}^{\infty}\sqrt{\frac{2}{\pi(\dgr\samp\sigma_{z}^{2}/\cnst\spar)+\sigma_{e}^{2}}}le^{-l^{2}/2(\dgr\samp\sigma_{z}^{2}/\cnst\spar+\sigma_{e}^{2})}dl\\
 & = & \sqrt{\frac{2\pi\spar^{2}(\dgr\samp\sigma_{z}^{2}/\cnst\spar+\sigma_{e}^{2})}{\delta^{2}}}.
\end{eqnarray*}

Next, note that
\begin{eqnarray*}
{\Pr_{\bz,\be}\big(\Delta\theta_{\yi}>\alpha E_{\bz,\be}(\Delta\theta_i)\big)} & \leq & \Pr_{\bz,\be}\bigg(|(\bA'\bz)_{\yi}+e_{\yi}|\spar/\delta>\alpha E_{\bz,\be}(\Delta\theta_i)\bigg)\\
 & = & \Pr_{\bz,\be}\bigg(|(\bA'\bz)_{\yi}+e_{\yi}|>\alpha E_{\bz,\be}(\Delta\theta_i)\delta/\pi\spar\bigg)\\
 &=&Pr_{\bz,\be}\bigg(|(\bA'\bz)_{\yi}+e_{\yi}|>\alpha\sqrt{\frac{2(\dgr\samp\sigma_{z}^{2}/\cnst\spar+\sigma_{e}^{2})}{\pi}}\bigg).
\end{eqnarray*}

Finally, applying standard bounds on the tail probabilities of Gaussian
random variables, the required probability is upper bounded by $e^{-(\alpha^{2}/2\pi)}/2.$
 \endproof 
 
\subsection{Probability of error}

\label{sec:noisyerror} An error occurs only if one of the following
take place:
\begin{enumerate}
\item The underlying graph $\graph$ is not an $\cSx$-expander. This probability
can be made $o(1/\spar)$ by choosing $\meas=\cnst\spar$, where the
constant $\cnst$ is determined by Lemma~\ref{lem:expansion}
\item The phase noise in $\tilde{y}_{\yi\ofiter}\ofiter$ leads to an incorrect
decoding of $\hat{\theta}_{\iter}^{(I,\step)}$ or $\hat{\theta}_{\iter}^{(V,\step)}$
for some $\step$ and $\iter$.

Note that the phase noise in $\tilde{y}_{\yi\ofiter}\ofiter$ consists:
\begin{enumerate}[(a)]
\item The contribution due to noise vectors $\bz$ and $\be$, and
\item The contribution due to the noise propagated while computing each
$\tilde{y}_{\yi\ofiter}(\tau)$ from $\tilde{y}_{\yi\ofiter}(\tau-1)$
for $\tau\leq\iter$.
\end{enumerate}

The contribution due to the first term is bounded by Lemma~\ref{lem:phasenoise}.
Thus, for a target error probability $\epsilon'$, we choose $\alpha=\sqrt{2\pi\log{1/2\epsilon'}}$,
giving a contribution to the phase noise of at most
\[
2\pi\sqrt{\frac{\log{(1/2\epsilon')}\spar^{2}(\dgr\samp\sigma_{z}^{2}/\cnst\spar+\sigma_{e}^{2})}{\delta^{2}}}.
\]
 To bound the contribution due to the second term, we note a few facts about the random graph $\graph$. Let $\graph_{\bx}$ be the restriction of $\graph$ to $\cSx$ and its neighbours. Denote the smallest disjoint components  of $\graph_{\bx}$ by ${\cal C}_{\bx}(1),{\cal C}_{\bx}(2),\ldots,{\cal C}_{\bx}(M)$ and let the number of right nodes in component ${\cal C}_{\bx}(p)$ be $D_{\bx}(p)$. The following properties of the random sparse graph $\graph_{\bx}$ and its components follow from~\cite{KarL:02,Pri:11}.
\begin{lemma}[\cite{KarL:02,Pri:11}]\label{lem:components}The random graph $\graph_\bx$ satisfies the following properties:
\begin{enumerate}[A.]
\item\label{prop:unicycle} For a large enough choice of $\cnst$,  with probability
$1-o(1/\spar)$, $\graph_{\bx}$ consists almost entirely of hypertrees and unicyclic components.
\item $\max_{p}D_\bx(p)=\cO(\log{\spar})$ with probability $1-o(1/\spar)$.
\item $E_{\graph}\left((D_{\bx}(p))^2\right)=O(1)$. \end{enumerate}
\end{lemma}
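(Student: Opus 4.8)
The plan is to recognise $\graph_{\bx}$ as a \emph{subcritical} random $\dgr$-uniform hypergraph and to read off all three properties from standard facts about such objects, exactly as in~\cite{KarL:02,Pri:11}. First I would fix an arbitrary $\cSx$ of size $\spar$ — by symmetry of the ensemble the distribution of $\graph_{\bx}$ does not depend on the choice — and observe that, since in $\graph$ each left node independently picks a uniformly random $\dgr$-subset of the $\meas'=\cnst\spar$ right nodes, the graph $\graph_{\bx}$ induced on $\cSx$ together with its neighbours is distributed exactly as the incidence graph of a random simple $\dgr$-uniform hypergraph $H$ with $\spar$ hyperedges on $\cnst\spar$ vertices (the nodes of $\cSx$ become the hyperedges, the surviving right nodes the vertices). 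The relevant parameter of $H$ is the expected number of hyperedges through a vertex, $c=\dgr\spar/\meas'=\dgr/\cnst$: exploring the component of a vertex by breadth-first search, each vertex lies in roughly $\mathrm{Poisson}(c)$ hyperedges and each such hyperedge brings in $\dgr-1$ fresh vertices, so the exploration is dominated by a Galton--Watson process with offspring mean $c(\dgr-1)=\dgr(\dgr-1)/\cnst$. Choosing $\cnst>\dgr(\dgr-1)$ makes this strictly subcritical — this is precisely what ``a large enough choice of $\cnst$'' means, and it is the regime in which everything below holds.

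Next I would extract the exponential tail bound underlying Properties~B and~C: in the subcritical regime there are constants $\rho=\rho(\dgr,\cnst)\in(0,1)$ and $C_{0}<\infty$ such that, for any fixed right node $v$, the component of $v$ (measured in number of right nodes) obeys $\Pr_{\graph}\bigl(|\mathcal{C}_{\bx}(v)|\ge t\bigr)\le C_{0}\,\rho^{\,t}$ for every $t\ge1$; this is the standard domination of the BFS exploration by the subcritical Galton--Watson tree together with the exponential tail of its total progeny. Property~B is then a union bound over the at most $\meas'=\cnst\spar$ right nodes, $\Pr_{\graph}\bigl(\max_{p}D_{\bx}(p)\ge t\bigr)\le \cnst\spar\,C_{0}\,\rho^{\,t}$, which is $o(1/\spar)$ once $t=C'\log\spar$ with $C'\log(1/\rho)>2$. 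Property~C follows from the same tail applied to the component of a fixed right node: $E_{\graph}\!\left(|\mathcal{C}_{\bx}(v)|^{2}\right)=\sum_{t\ge1}(2t-1)\Pr_{\graph}\bigl(|\mathcal{C}_{\bx}(v)|\ge t\bigr)\le C_{0}\sum_{t\ge1}(2t-1)\rho^{\,t}=\cO(1)$.

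For Property~A I would run a first-moment computation on connected sub-hypergraphs of cyclomatic excess at least $2$ (the ``complex'' components), in the same style as the counting in Appendix~\ref{apx:proof_lem_expansion}. Such a sub-hypergraph with $j$ hyperedges spans at most $j(\dgr-1)-1$ vertices; the probability that $j$ fixed hyperedges of $H$ all lie inside a fixed vertex set of size $v$ is at most $\bigl(v/(\cnst\spar)\bigr)^{\dgr j}$, and multiplying by $\binom{\spar}{j}$, by $\binom{\cnst\spar}{v}$ (with $v\le j(\dgr-1)-1$ the number of spanned vertices), and by the $(\text{constant})^{j}$ number of connected skeletons, then summing over $j$: on substituting $v\approx j(\dgr-1)$ the powers of $\spar$ coming from these three factors cancel, and the strict deficit $v\le j(\dgr-1)-1$ of a complex component costs one further power of $\cnst\spar$, giving expected number of complex components $\cO(1/\spar)$ with a constant that shrinks as $\cnst$ grows; the analogous bound for \emph{pairs} of complex components is $\cO(1/\spar^{2})=o(1/\spar)$. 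Hence with probability $1-o(1/\spar)$ there is at most one complex component, and by Property~B it has $\cO(\log\spar)$ vertices, so every remaining component has cyclomatic number at most $1$, i.e.\ is a hypertree or unicyclic — which is the meaning of Property~A. I expect this last step to be the main obstacle: establishing the $1-o(1/\spar)$ probability for~A (rather than the naive $\cO(1/\spar)$) and being precise about what ``almost entirely'' should denote here (at most one complex component, of logarithmic size) requires a little care, whereas the Galton--Watson coupling and tail bound behind~B and~C are routine and could be imported essentially verbatim from~\cite{KarL:02,Pri:11}.
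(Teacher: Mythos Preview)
The paper does not actually prove this lemma: it is stated with the attribution \cite{KarL:02,Pri:11} and then used as a black box, so there is no in-paper argument to compare against. Your sketch is a correct and faithful unpacking of exactly the kind of argument those references contain --- identifying $\graph_{\bx}$ with a subcritical random $\dgr$-uniform hypergraph on $\cnst\spar$ vertices with $\spar$ hyperedges, dominating the BFS exploration by a subcritical Galton--Watson process to get the exponential tail on component sizes (whence Properties~B and~C), and handling Property~A by a first-moment count on complex (excess $\geq 2$) sub-hypergraphs.

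One small point worth tightening: your concern at the end is well-placed. The naive first-moment bound gives $\Pr(\text{some complex component exists})=\cO(1/\spar)$, not $o(1/\spar)$, so to match the stated $1-o(1/\spar)$ you need either to read ``almost entirely'' as permitting a single complex component of size $\cO(\log\spar)$ (which your second-moment/pairs argument does establish with probability $1-o(1/\spar)$), or to note that the constant in front of the $1/\spar$ can be driven to zero by taking $\cnst$ large enough --- which is consistent with the lemma's hypothesis ``for a large enough choice of $\cnst$''. Either reading is defensible, and the paper itself does not disambiguate; your write-up already flags this honestly.
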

 
Now, we observe that at each
iteration $\iter$, any error in reconstruction of $\hat{x}_{\xj\ofiter}$
potentially adds to reconstruction error in all future iterations
$\iter'$ for which there is a path from $\xj\ofiter$ to $\xj(\iter')$. Thus, if $\xj(\iter)$ lies in the component ${\cal C}_{\bx}(p_\iter)$, then from Property~A above, the magnitude error in reconstruction of $\hat{x}_{\xj\ofiter}$ due to noisy reconstructions in previous iterations is upper bounded by 
\begin{equation}
(D_{\bx}(p_\iter))^2\sqrt{{2\pi\log(1/2\epsilon')(\samp\sigma_{z}^{2}/\spar+\sigma_{e}^{2})}}\label{eq:xnoise} 
\end{equation}
with probability at least $1-D_{\bx}(p_\iter)\epsilon'$. Thus, the phase displacement in each $y_{\yi}^{(I,\step)}$ and $y_{\yi}^{(V,\step)}$
is at most 
\[
2\pi(D_{\bx}(p_\iter))^2\sqrt{\frac{\log{(1/2\epsilon')}\spar^{2}(\samp\sigma_{z}^{2}/\spar+\sigma_{e}^{2})}{\delta^{2}}}.
\]
Next, applying Property~B, as long as
\begin{equation}
(\log\spar)^{2}\sqrt{\frac{2\pi\log{(1/2\epsilon')}\spar^{2}(\samp\sigma_{z}^{2}/\spar+\sigma_{e}^{2})}{\delta^{2}}}=o\left({\samp^{-1/\steps}}\right),
\end{equation}
 the probability of any single phase being incorrectly detected is
upper bounded by $\epsilon'$. Since we there are a total of $8\steps\spar$
possible phase measurements, we choose $\epsilon'=1/\steps\spar^{2}$
to achieve an overall target error probability $1/\spar$.

\item The verification step passes for each measurement in the $\iter$-th
measurement, even though $\yi\ofiter$ is not a leaf node for $\cS_{\delta}^{c}(\bx)$.
\item $\cD{(T)}\neq\bA'$, i.e., the algorithm terminates without recovering
all $x_{\xj}$'s. Note that similar to the exact $\spar$-sparse case,
in each iteration $\iter$, by Lemma~\ref{lem:manyleafs}, the probability
that $\yi\ofiter$ is a leaf node for $\cS_{\delta}(\bx-\hat{\bx}\ofiter)$
at least $1/2$. However, due to noise, there is a non-zero probability
that even when $\yi\ofiter$ is a leaf node, it does not pass the
verification tests. We know from the analysis for the previous case
that this probability is $\cO(1/\spar)$ for each $\yi\ofiter$. Therefore,
the probability that a randomly picked $\yi\ofiter$ passes the verification
test is $1/2-\cO(1/\spar)$. Thus, in expectation, the number of iterations
required by the algorithm is $2\spar/(1-\cO(1/\spar))$. By concentration
arguments, it follows that the probability that the algorithm does
not terminate in $4\spar$ iterations is $o(1/\spar)$ as $\spar$
grows without bound.
\end{enumerate}

\subsection{Estimation error}

Next, we bound the error in estimating $\hat{\bx}$. We first find
an upper bound on $||\hat{\bx}-\bx_{\cS_{\delta}^{c}}||_{1}$ that
holds with a high probability. Applying the bound in~\eqref{eq:xnoise},
for each $\iter=1,2,\ldots,T$,
\[
|x_{\xj\ofiter}-\hat{x}_{\xj\ofiter}|=\cO\left((D_{\bx}(p_\iter))^2\sqrt{{2\pi\log(1/2\epsilon')(\samp\sigma_{z}^{2}/\spar+\sigma_{e}^{2})}}\right)
\]
 with probability $1-\cO(1/\spar)$. Therefore, with probability $1-\cO(1/\spar)$,
\begin{eqnarray}
||\hat{\bx}-\bx_{\cS_{\delta}^{c}}||_{1} & = & \sum_{\substack{1\leq\iter\leq T\\
t:\xj\ofiter\notin\cS_{\delta}
}
}|\hat{x}_{\xj}-x_{\xj}|+\sum_{\substack{1\leq\iter\leq T\\
t:\xj\ofiter\in\cS_{\delta}
}
}|\hat{x}_{\xj}|\nonumber \\
 & \leq & \sum_{\xj\notin\cS_{\delta}}|\hat{x}_{\xj}-x_{\xj}|+\sum_{\xj\in\cS_{\delta}}|\hat{x}_{\xj}-x_{\xj}|+\sum_{\xj\in\cS_{\delta}}|x_{\xj}|\nonumber \\
% & = & \cO\left(\spar(\log\spar)^{2}{\log{(1/\epsilon')}}\sqrt{{(\samp\sigma_{z}^{2}/\spar+\sigma_{e}^{2})}}\right)+\delta\mbox{ w.p.}1-o(1/\spar)\nonumber \\
 & = &\cO\left(\sum_{p=1}^P\sum_{\xj\in\mathcal{C}(p)}(D_{\bx}(p_\iter))^2\sqrt{{2\pi\log(1/2\epsilon')(\samp\sigma_{z}^{2}/\spar+\sigma_{e}^{2})}}\right)+\delta.\label{eq:errorbound1}
\end{eqnarray}
 Next, note that $||\bz||_{1}=\sum_{\xj=1}^{\samp}|z_{\xj}|$ and
$||\be||_{1}=\sum_{\yi=1}^{\meas}|e_{\yi}|$. Since each $z_{\xj}$
is a Gaussian random variable with variance $\sigma_{z}^{2}$, The
expected value of $|z_{\xj}|$ is $\sigma_{z}\sqrt{2/\pi}$. Therefore,
for every $\epsilon'>0$, for $\samp$ large enough,
\begin{equation}
\Pr(||\bz||_{1}<(1/2)\samp\sigma_{z}\sqrt{2/\pi})<\epsilon'.\label{eq:zbound}
\end{equation}
 Similarly, for $\meas$ large enough,
\begin{equation}
\Pr(||\be||_{1}<(1/2)\cnst\spar\sigma_{e}\sqrt{2/\pi})<\epsilon'.\label{eq:ebound}
\end{equation}
Combining inequalities~\eqref{eq:errorbound1}-~\eqref{eq:ebound} and Property~C of Lemma~\ref{lem:components},
we have, with a high probability,
\begin{eqnarray}
E\left(||\hat{\bx}-\bx_{\cS_{\delta}^{c}}||_{1}\right) & = & \cO\left(\spar\sqrt{\log(1/\epsilon')}\left(\frac{||\bz||_{1}}{\sqrt{\samp\spar}}+\frac{||\be||_{1}}{\spar}\right)\right)+\delta\nonumber \\
 & = & \cO\left(\sqrt{\frac{\spar}{\samp}}\sqrt{\log{(1/\epsilon')}}||\bz||_{1}+\sqrt{\log{(1/\epsilon')}}||\be||_{1}\right)+\delta.
\end{eqnarray}

Next, applying the bound in~\eqref{eq:deltabound}, we obtain
\begin{eqnarray}
E\left(||\hat{\bx}-\bx||_{1}\right)&=&\cO\left(\sqrt{\frac{\spar}{\samp}}\sqrt{\log{(1/\epsilon')}}||\bz||_{1}+\sqrt{\log{(1/\epsilon')}}||\be||_{1}\right)+2\delta\nonumber\\
&=& \cO(\sqrt{\frac{\spar\log\spar}{\samp}}||\bz||_1+\sqrt{\log\spar}||\be||_1) \label{eq:errorbound}
\end{eqnarray}
 with a high probability.

\subsection{Proof of Theorem~\ref{thm:approximate}}
Finally, to complete the proof of Theorem~\ref{thm:approximate}, we let $\delta=\min\{\cO(n\sigma_z),o(1)\}$. By~\eqref{eq:zbound} with a high probability, $\delta=\cO(||\bz||)$. Finally, recall the assumption that $\spar=\cO(\samp^{1-\Delta})$. Applying these to the bound obtained in  \eqref{eq:errorbound}, we get $$||\hat{\bx}-\bx||_1\leq C\left(||\bz||_1+\sqrt{\log{\spar}}||\be||_1\right)$$ for an appropriate constant $C=C(\sigma_z,\sigma+e)$.

\subsection{Simulation Results}\label{app:sim}
This section describes simulations that use synthetic data. The $\spar$-sparse signals used here are generated by randomly choosing $\spar$
  locations for non-zero values and setting the non-zero values to 1. The contours in each plot show the probability of successful reconstruction (the lighter the color,
the higher the probability of reconstruction). The probability of error at each data point in the plots was obtained by running multiple simulations ($400$ in Fig~\ref{fig:sim1} and Fig~\ref{fig:sim2}, and $200$ in Fig~\ref{fig:sim3}) and noting the fraction of simulations which resulted in successful reconstruction.

\begin{figure*}[h]
\begin{minipage}[b]{0.47\linewidth}\centering\includegraphics[width=\linewidth]{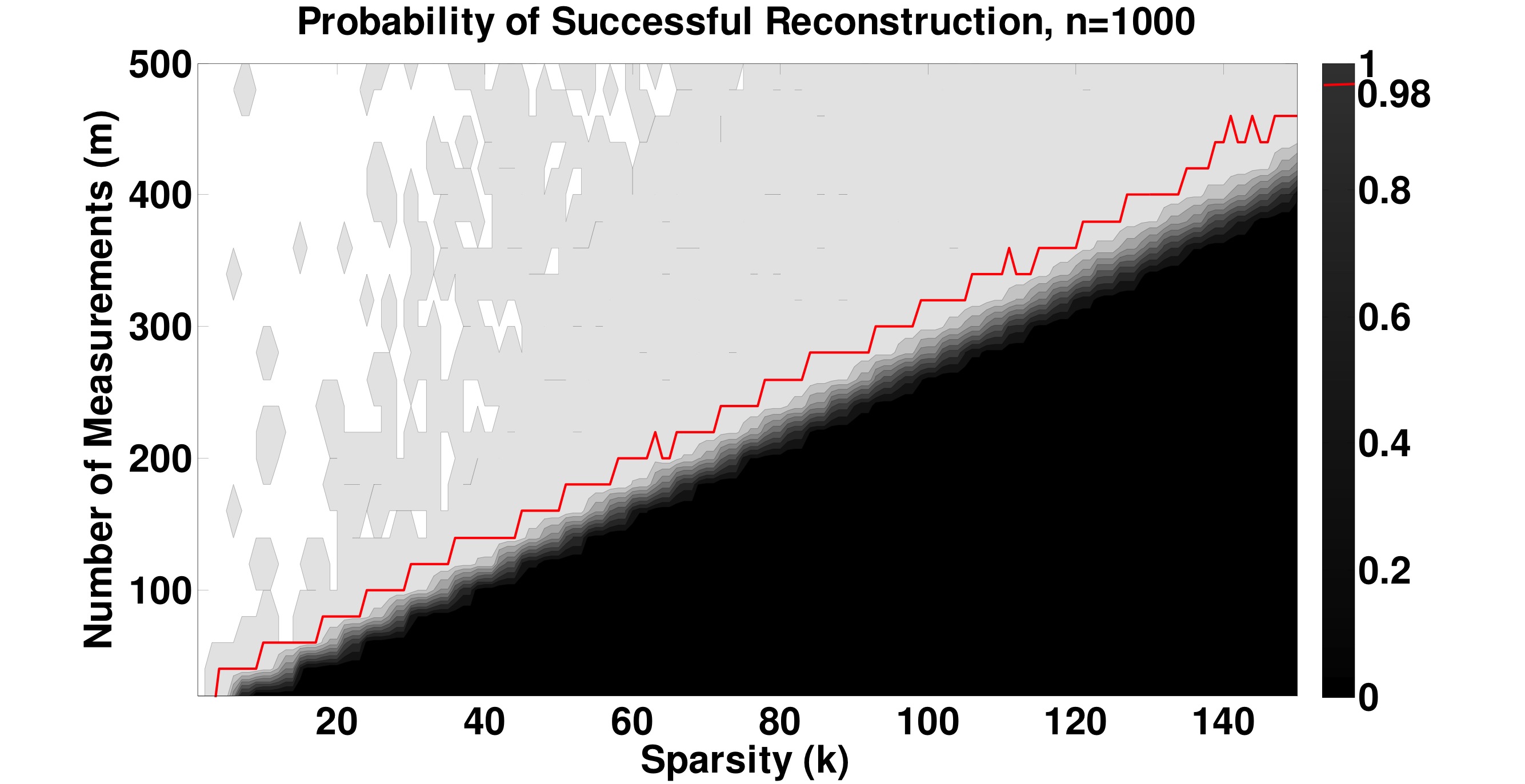}
\caption{
{\bf{Exactly sparse signal and noiseless measurements -- reconstruction performance for fixed signal length $\samp$:}} The $y$-axis denotes the number of measurements $\meas$, and the $x$-axis denotes the sparsity $\spar$, for fixed signal length $\samp=1000$. The simulation results show that the number of measurements
$\meas$ grows roughly proportional to the sparsity $\spar$ for a fixed probability of reconstruction error. Also note
that there is a sharp transition in reconstruction performance once the number
of measurements exceeds a linear multiple of $\spar$. The red line denotes
the curve where the probability of successful reconstruction equals
$0.98$. For $\spar=150$, the probability of success equals $0.98$ when $\meas=450$
and $c=\meas/\spar=3$.
}
\label{fig:sim1}
\end{minipage}
\hspace{0.05\linewidth}\begin{minipage}[b]{0.47\linewidth}\centering\includegraphics[width=\linewidth]{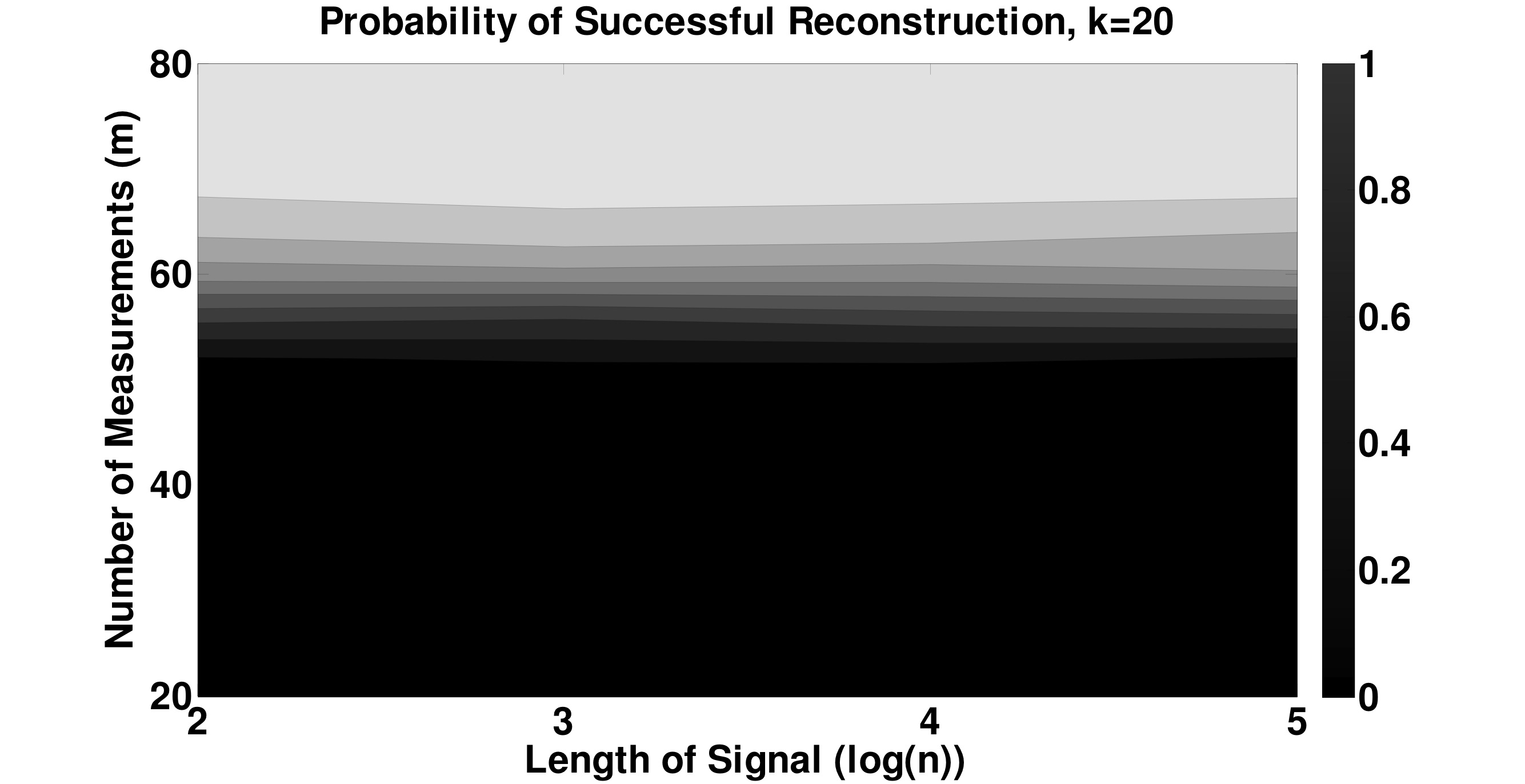}
\caption{
{\bf{Exactly sparse signal and noiseless measurements -- reconstruction performance for fixed sparsity $\spar$:} } The number of measurements $\meas$ are plotted on the $y$-axis, plotted against $\log(\samp)$ on the $x$-axis -- the
sparsity $\spar$ is fixed to be $20$. Note that there is ${\it no}$ scaling
of $\meas$ with $\samp$, as guaranteed by our theoretical bounds.
}
\label{fig:sim2}\end{minipage}
\end{figure*}

\begin{figure*}[h]
\centering\includegraphics[scale=.35]{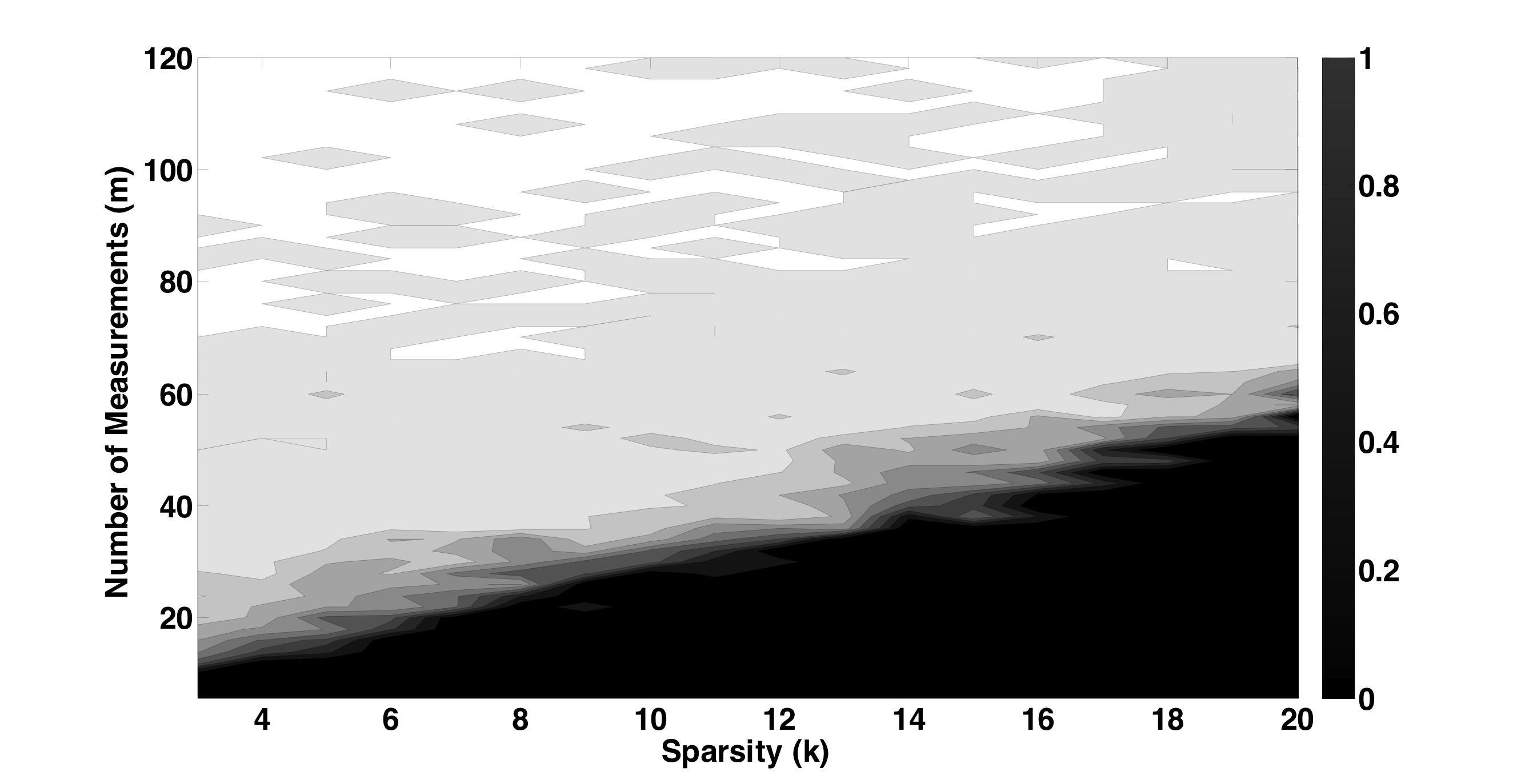}
\caption{{\bf Approximately sparse signal and noisy measurements -- reconstruction performance for fixed signal-length $\samp$:} As in Fig~\ref{fig:sim1}, the $y$-axis denotes the number of measurements $\meas$, and the $x$-axis denotes the sparsity $\spar$, for fixed signal length $\samp=1000$. In this case, we set $\sigma_z = 0.03$, and allowed relative reconstruction error of at most $0.3$. }
\label{fig:sim3}
\end{figure*}

\end{document}